\documentclass[letterpaper,twocolumn,10pt]{article}
\usepackage{usenix2019_v3}
\usepackage{epsfig,endnotes}
\usepackage{cite}
\usepackage{amsmath,amssymb,amsfonts} 
\usepackage{textcomp}
\usepackage{xcolor}
\usepackage{tikz}
\usepackage{amsmath}
\usepackage{amssymb,amsfonts,amsmath}
\usepackage{textcomp}
\usepackage{csquotes}
\usepackage{booktabs}
\usepackage{color}
\usepackage{epstopdf} 
\usepackage[capitalize]{cleveref}
\usepackage{url}
\usepackage{xspace}
\usepackage{times}
\usepackage{epsfig}
\usepackage{blindtext}
\usepackage{amsmath}
\usepackage{amssymb}
\usepackage{ulem}
\usepackage{mathtools}
\usepackage{ifthen}
\usepackage{cite} 
\usepackage{graphicx}
\usepackage[capitalize]{cleveref}
\usepackage{float}
\usepackage{caption}
\usepackage{subcaption}
\usepackage{amsthm}
\usepackage{setspace}
\usepackage{enumitem} 
\usepackage{adjustbox}
\usepackage{newfloat} 
\usepackage{algorithm}
\usepackage{algorithmic}
\usepackage[algo2e]{algorithm2e} 
\usepackage{tikz}
\usepackage{amsmath}

\theoremstyle{definition}

\newtheorem{theorem}{Theorem}
\newtheorem{lemma}{Lemma}

\DeclareFloatingEnvironment[
fileext=loc,
listname={List of Algos},
name=Algorithm,
placement=pbth,
]{algo}

 \newcommand{\mempoolalgorithm}{Mandator\xspace} 

\begin{document}
\date{}

\title{\Large \bf Mandator and Sporades: Robust Wide-Area Consensus with Efficient Request Dissemination}

\author{
{\rm Pasindu Tennage}\\
EPFL
\and
{\rm Antoine Desjardins}\\
IST Austria
\and
{\rm Eleftherios Kokoris Kogias}\\
IST Austria
}

\maketitle
\begin{abstract}\label{sec:asbtr}

Consensus algorithms are deployed in the wide area to achieve high availability for geographically replicated applications. Wide-area consensus is challenging due to two main reasons: (1) low throughput due to the high latency overhead of client request dissemination and (2) network asynchrony that causes consensus protocols to lose liveness. In this paper, we propose \textbf{Mandator} and \textbf{Sporades}, a modular state machine replication algorithm that enables high performance and resiliency in the wide-area setting.

To address the high client request dissemination overhead challenge, we propose \textbf{Mandator}, a novel consensus-agnostic asynchronous dissemination layer. Mandator separates client request dissemination from the critical path of consensus to obtain high performance. Composing Mandator with Multi-Paxos (Mandator-Paxos) delivers significantly high throughput under synchronous networks. However, under asynchronous network conditions, Mandator-Paxos loses liveness which results in high latency. To achieve low latency and robustness under asynchrony, we propose \textbf{Sporades}, a novel omission fault-tolerant consensus algorithm. Sporades consists of two modes of operations -- synchronous and asynchronous -- that always ensure liveness. Sporades leverages the synchronous network to minimize the communication cost to $O(n)$ and matches the lower bound of $O(n^2)$ at worse-case executions. The combination of Mandator and Sporades (Mandator-Sporades) provides a robust and high-performing state machine replication system. 

We implement and evaluate Mandator-Sporades in a wide-area deployment running on Amazon EC2. Our evaluation shows that in the synchronous execution, Mandator-Sporades achieves 300k tx/sec throughput in less than 900ms latency, outperforming Multi-Paxos, EPaxos and Rabia by 650\% in throughput, at a modest expense of latency. Furthermore, we show that Mandator-Sporades outperforms Mandator-Paxos, Multi-Paxos, and EPaxos in the face of targeted distributed denial-of-service attacks.

\end{abstract}
\section{Introduction}\label{sec:intro}


State machine replication (SMR) \cite{cachin2011introduction} is a distributed computing abstraction that is widely used in data center-based applications \cite{burrows2006chubby, hunt2010zookeeper,maccormick2004boxwood}. SMR enables a distributed set of replicas to consistently replicate the state while remaining resilient to a minority of replica failures. At the heart of SMR lies consensus which allows replicas to agree on the total order of many values. While the early approaches to consensus focused on in-data-center deployments \cite{mazieres2007paxos, pan2021rabia, barcelona2008mencius}, recent research \cite{moraru2013there, ailijiang2017wpaxos, tennage2022baxos} has focused on the wide-area consensus algorithms. Consensus algorithms are often used in the wide-area setting to achieve high availability for applications that are geographically replicated and use the public Internet infrastructure.

 
Existing wide-area SMR protocols deliver low throughput due to high client request dissemination overhead: monolithic designs that encapsulate a batch of client requests inside consensus messages result in a bottleneck as the system throughput is then bound by the speed of consensus \cite{danezis2022narwhal}. When the consensus messages carry a batch of client requests, the asymptotic linear message complexity of consensus algorithms does not necessarily reflect the experimental performance when deployed in the wide area. This is because, in practice, the size of a consensus message is mostly influenced by the request batch size (sizes in the order of 100KB), whereas the theoretical message complexity accounts for metadata messages (sizes in the order of 100B).

 
To address the high client request dissemination overhead problem, existing research has focused on monolithic approaches that incorporate wide-area latency optimization strategies into the consensus logic. Multi-leader-based approaches \cite{charapko2021pigpaxos, zhao2018sdpaxos, marandi2010ring, rizvi2017canopus} aim at off-loading the leader request dissemination overhead to non-leader replicas. EPaxos \cite{moraru2013there} addresses the high request dissemination overhead challenge by taking advantage of application-dependent request dependency commutativity. However, we identified that existing approaches provide sub-optimal wide-area throughput, because monolithic protocols place client request dissemination in the critical path of consensus, impacting the performance when deployed in the wide area.


In this paper, we propose \textbf{Mandator}, an asynchronous and consensus-agnostic request dissemination protocol, that reliably replicates client request batches among replicas to address the high client request dissemination overhead problem. Mandator decouples client request dissemination from the total ordering, such that Mandator can run in parallel with the consensus protocol without any codependency. Mandator builds on the intuition that reliable client request dissemination is an embarrassingly parallelizable task that can run at the speed of the network (in contrast to monolithic protocols that disseminate client requests at the speed of consensus). Moreover, as we prove in Section \ref{sec:design}, reliable client request dissemination requires a simpler primitive than consensus, namely best effort broadcast with acknowledgements. Mandator leverages this best effort broadcast with acknowledgements primitive, together with a primitive that closely reflects vector clocks to build a high-performance wide-area request dissemination protocol. Consensus algorithms such as Multi-Paxos can employ Mandator as a building block and use the lightweight vector clock returned from the Mandator as the payload for consensus. With Mandator in place, consensus message sizes are lightweight and are no longer dominated by the client request batch size.


To showcase the performance advantage of Mandator, we composed Mandator with Multi-Paxos (Mandator-Paxos). In a wide-area deployment running on AWS EC2, we observed that Mandator-Paxos can deliver 300k tx/sec throughput under 900ms latency when the network is synchronous while outperforming Multi-Paxos by 650\% in throughput, at a modest expense of latency. However, in the presence of leader replica failures and network asynchrony, we observed that Mandator-Paxos provides low throughput (250k tx/sec) with significantly high latency (5s). Since Mandator guarantees liveness under asynchrony, we attribute this loss of performance under faults and asynchrony to Multi-Paxos's liveness guarantees: Multi-Paxos is a partially synchronous protocol that loses liveness under an asynchronous network and leader replica failures. Hence, as the second challenge of wide-area SMR, we focus on the loss of liveness problem under an asynchronous network and replica failures. 


Previous research addressing the liveness impact of replica failures and network asynchrony on the performance falls into two categories: (1) partially synchronous algorithms that eliminate the single leader dependency, thus minimizing the impact of partitioned replicas on the liveness of consensus, and (2) asynchronous protocols such as Ben-Or \cite{ben1983another} that guarantee liveness under an asynchronous network. Partially synchronous protocols such as EPaxos \cite{moraru2013there} and Mencius \cite{barcelona2008mencius} avoid the single leader bottleneck, and are immune to leader network partitions, but they lose liveness under an asynchronous network. Asynchronous protocols such as Ben-Or \cite{ben1983another} are immune to asynchronous network conditions, but have quadratic message complexity and low performance in the typical synchronous case. As a result, asynchronous protocols are not typically deployed in the wide-area\footnote{Rabia \cite{pan2021rabia} is a protocol built on top of Ben-Or \cite{ben1983another}, but Rabia operates in a synchronous network and loses liveness in an asynchronous network}. We believe that a practical wide-area consensus algorithm should have the best of both worlds (1) high performance under synchrony, and (2) high resilience and liveness under asynchronous networks and replica failures. 


To address the challenge of liveness under network asynchrony and replica failures in SMR, we propose \textbf{Sporades}, a novel omission fault-tolerant consensus algorithm, that combines an optimistic, linear overhead synchronous path protocol with a pessimistic path protocol that guarantees liveness under asynchrony. Sporades, has two modes; (1) a synchronous mode and (2) an asynchronous mode. The synchronous mode of Sporades commits a batch of client requests in a single network round-trip when the network is synchronous. However, when the network is asynchronous, Sporades falls back to an asynchronous mode that commits client requests with quadratic message complexity. This makes the Sporades guarantee liveness in the face of replica failures, asynchronous networks and DDoS attacks. The composition of Mandator with Sporades (Mandator-Sporades) is a novel SMR algorithm that guarantees high performance in the wide area while providing liveness guarantees under asynchronous networks and replica failures.
 
We make three contributions in this paper: 
\vspace{-1mm}
\begin{itemize}
    \vspace{-2mm}
    \item We design and implement Mandator: a novel consensus-agnostic request dissemination protocol. We implement Multi-Paxos on top of Mandator, which we refer to as Mandator-Paxos
    \vspace{-2mm}
    \item We design Sporades, a novel omission fault tolerant consensus algorithm that guarantees liveness under asynchronous network conditions. We implement Sporades on top of Mandator (Mandator-Sporades)
    \vspace{-2mm}
    \item We experimentally show that Mandator-Paxos and Mandator-Sporades achieve 300k tx/sec throughput under 900ms median latency, compared to EPaxos \cite{moraru2013there} (6.5k tx/s under 720ms), Multi-Paxos \cite{mazieres2007paxos} (40k under 295ms) and Rabia \cite{pan2021rabia} (0.5k under 500ms) in the synchronous network conditions in a wide-area deployment running on AWS EC2. We also show that Mandator-Sporades outperforms Mandator-Paxos by 60\% in throughput in the face of DDoS attacks.
    \vspace{-2mm}
\end{itemize}
\section{Preliminaries}\label{sec:back}


This section provides an overview of the system model and SMR.

\subsection{System Model}


We consider a system with $n$ replicas. We assume omission fault-tolerant replicas. An omission fault occurs when a replica does not send (or receive) a message that it is supposed to send (or receive) according to the specification of the algorithm \cite{cachin2011introduction}. Up to $f$ (where $n$ $\geq$ 2$f$ + 1) number of replicas can display omission-fault behaviors, but replicas do not equivocate.


We assume perfect point-to-point links between each pair of replicas; messages from any correct replica p\textsubscript{i} to any correct replica p\textsubscript{j} are eventually delivered. The replicas are connected in a logical-complete graph. We say that a replica broadcasts a message $m$ if it sends $m$ to all $n$ replicas. We assume an adversarial network that can arbitrarily reorder and delay the messages.     


Let $\Delta$ be the upper bound on message transmission delay and GST be the global stabilization time. An execution of a protocol is considered synchronous if each message sent from a correct replica p\textsubscript{i} is delivered by replica p\textsubscript{j} within $\Delta$. An execution of a protocol is considered asynchronous if there exists no time bound $\Delta$ for message delivery. An execution of a protocol is said to be partially synchronous, if there is an unknown GST such that once GST is reached, each message sent by process p\textsubscript{i} is delivered by process p\textsubscript{j} within a known $\Delta$\cite{dwork1988consensus}.


Due to the FLP impossibility result~\cite{fischer1985impossibility}, any deterministic algorithm cannot solve consensus under asynchrony even under a single replica failure. In Sporades, we circumvent the FLP impossibility result using partial synchrony and randomization. In a typical wide-area network, there are periods in which the network behaves synchronously, followed by phases where the network shows asynchronous behaviour. Sporades makes use of this network behaviour and operates in two modes: (1) synchronous mode and (2) asynchronous mode. During the synchronous periods, Sporades assumes partial synchrony for liveness and during the asynchronous mode, Sporades assumes an asynchronous network and uses randomization.

\subsection{Consensus and State Machine Replication}

Consensus is a distributed system abstraction that enables a set of replicas to reach an agreement on a single value. A correct consensus algorithm satisfies four properties \cite{cachin2011introduction}: (1) \textit{validity}: the agreed upon value should be previously proposed by a replica, (2) \textit{termination}: every correct process eventually decides some value, (3) \textit{integrity}: no process decides twice and (4) \textit{agreement}: no two correct processes decide differently.


SMR is a use-case of consensus, where replicas run multiple instances of the consensus algorithm to agree on a series of values \cite{cachin2011introduction}\cite{antoniadis2018state}. In this paper, we solve the SMR problem. A correct SMR algorithm satisfies two properties; (1) \textit{safety}: no two replicas commit different client requests for the same log position and (2) \textit{liveness}: each client request is eventually committed. We assume that each client request will be repeatedly proposed by replicas until it is committed.


We focus on the chaining approach to SMR similar to Raft \cite{ongaro2014search}, in which each new proposal to SMR has a reference (parent link) to the previous proposal, and each commit operation commits the entire history of client requests.
\section{Design}\label{sec:design}


In this section, we discuss the design of Mandator and Sporades as two building blocks and then present how to build a high-performing and resilient wide-area SMR algorithm by combining the Mandator and Sporades in section \ref{sec:impl}.

\subsection{\mempoolalgorithm}


The Mandator is a consensus-agnostic reliable request dissemination layer whose goal is to reliably disseminate client requests to be voted for in consensus. Mandator does not solve consensus; rather, it ensures that a majority of the replicas are aware of each client request batch that is later proposed in the consensus. As a result, the consensus layer does not have to wait for client request dissemination (because the Mandator can be executed ahead of time concurrently). The consensus layer can refer to request batches by their unique identifiers (rather than broadcasting the entire batch of client requests), making communication lighter and faster for consensus.


Because the Mandator algorithm has each replica running its thread as a leader, the workload for all replicas is the same. Algorithm \ref{algo:mempoolalgorithm} depicts the pseudo-code of Mandator. Each replica broadcasts a <new-Mandator-batch> $B_1$, then waits for $n-f$ <Mandator-votes> on $B_1$ before broadcasting the next <new-Mandator-batch> $B_2$. When $B_2$ is received, all processes understand that $B_1$ has received $n-f$ Mandator-votes. As a result, this algorithm is optimized for sending Mandator-batches in a chain. It is asynchronous and runs in parallel with the consensus layer.

\textbf{Terminology}: 
We introduce \textbf{Mandator-batch}. A Mandator-batch contains four fields: (1) round number, (2) reference to the parent Mandator-batch, (3) one or more client requests and (4) the unique identifier that uniquely identifies the Mandator-batch. Since the unique identifier is implicit, we only use the first three fields in the following description.

Mandator provides the following generic interface.
\vspace{-2mm}

\begin{itemize}
    \item \textbf{write($B$)}: write a new Mandator-batch. We say that the write($B$) is successful if the node that writes $B$ receives at least $n-f$ <Mandator-votes> for $B$. \vspace{-2mm}
    \item \textbf{read($B$)}: read the Mandator-batch corresponding to the identifier of $B$. A read is successful if it returns a Mandator-batch $B$=($r$, $B.parent$, $cmds$) which was successfully written previously using write($B$) \vspace{-2mm} 
    \item \textbf{read\_causal($B$)}: read the set of Mandator-batches that causally precede $B$. read\_causal($B$) succeeds if read($B$) succeeds and if all the Mandator-batches that were written before write($B$) with which B has a causal dependency are in the returned Mandator-batch list.
\end{itemize}

The  Mandator algorithm has two properties: \vspace{-2mm}

\begin{itemize}
    \item \textbf{Availability}: if read($B$) is invoked after completing write($B$), then read($B$) eventually returns $B$ \vspace{-2mm}
    \item  \textbf{Causality}: a successful read\_causal($B$) returns all the Mandator-batches with which B has a causal dependency \vspace{-2mm}
\end{itemize}

\begin{figure}
    \centering
    \scalebox{.85}{\input{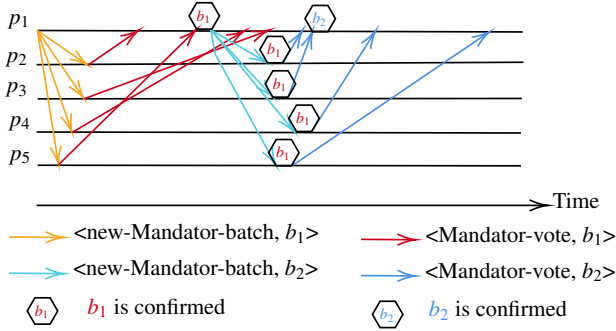}}
    \caption{An execution of Mandator with 5 replicas. All replicas act as leaders, simultaneously. For clarity we only show the execution with only $p_1$ as the leader}
    \vspace{-5mm}
    \label{fig:mem-pool}
\end{figure}

\textbf{Typical execution} The Mandator algorithm runs asynchronously. All replicas run the same algorithmic steps concurrently and do not need to wait for the progress of another replica to move on. This characteristic is what allows the Mandator algorithm to yield a much higher throughput than consensus. The execution we explain in the following is depicted in figure~\ref{fig:mem-pool}.

The sender is process $p_1$. The other processes are $p_2$, $p_3$, $p_4$ and $p_5$; there are only $5$ processes in total, which means the threshold $f$ is $2$. $p_1$ has received a batch of requests from clients (line~\ref{M:line:7}) and broadcast them as a <new-Mandator-batch> $B_1$ with round number $r_1$ (line~\ref{M:line:13}). $p_1$, $p_2$, $p_3$, $p_4$ and $p_5$ eventually receive $B_1$ (line~\ref{M:line:14}) and send back a <Mandator-vote> to $p_1$ (line~\ref{M:line:17}). Here $p_3$ and $p_4$ are slow and send their vote the latest. $p_1$ receives $p_2$'s <Mandator-vote> first, then $p_5$'s. The <Mandator-votes> from $p_1$, $p_2$ and $p_5$ represent $n-f=5-2=3$ votes, so $p_1$ can consider the round $r_1$ completed (line~\ref{M:line:18}).

$p_1$ receives another batch of requests from clients and broadcasts a <new-Mandator-batch> $B_2$ to all replicas (line~\ref{M:line:13}). $B_2$ contains the $lastCompletedRound$[$p_1$] which indicates that $B_1$ was completed. Upon receiving $B_2$, $p_2$, $p_3$, $p_4$ and $p_5$ know that $B_1$ was completed and update their own $lastCompletedRounds$[$p_1$] to $B_1$ (line~\ref{M:line:16}). They will send a <Mandator-vote> for $B_2$ to $p_1$ (line~\ref{M:line:17}) and the algorithm continues.

\textbf{Complexity} The Mandator Algorithm has a linear complexity: for each batch of client requests $cl$, one Mandator-batch is broadcast to all replicas and each of these replicas replies to the sender with a <Mandator-vote>.

\textbf{Proof sketch}: The proofs of availability and causality directly follow from the use of quorum-based broadcast with vote-acknowledgements and the chaining of Mandator-batches. A formal proof appears in Appendix \ref{sec:formal-proof-mandator}.

\textbf{Interaction with Consensus}: Mandator provides the getClientRequests() interface to the consensus layer. getClientRequests() returns the lastCompletedRounds[] (see line~\ref{M:line:2} in algorithm~\ref{algo:mempoolalgorithm}) which contains the last completed Mandator-Batch for each replica. Since the lastCompletedRounds is an integer array of N elements, the consensus blocks become lightweight. In contrast, in monolithic protocols such as Multi-Paxos, the consensus messages have to carry the entire batch of client requests, thus delivering lower performance compared to Multi-Paxos built on top of Mandator.

We integrated Mandator with Multi-Paxos (Mandator-Paxos) and observed that Mandator-Paxos delivers 300k tx/sec throughput under 900ms median latency in a wide-area deployment running on AWS EC2. In contrast, Multi-Paxos delivers 40k tx/sec under 295ms. However, we observed that under DDoS attacks, the performance of Mandator-Paxos decreases to 250 tx/sec under 5s median latency. Since Mandator guarantees liveness under asynchronous network conditions, we attribute this low performance under DDoS attacks to Multi-Paxos's liveness guarantees: Multi-Paxos is a partially synchronous protocol that loses liveness under network asynchrony. We believe that a robust wide-area SMR system should deliver liveness even under network asynchrony. To address this liveness issue in the consensus layer, in the next sub-section we propose a new omission fault-tolerant consensus algorithm, Sporades, that guarantees liveness under DDoS attacks and network asynchrony.
 
\begin{algorithm}[h!] \small
\caption{Mandator Algorithm for process $p_i$, $i\in 0..n-1$} \label{algo:mempoolalgorithm}
\begin{algorithmic}[1]
\STATE \textbf{Local State:}
\begin{ALC@g}
\STATE lastCompletedRounds[], an array of N elements (with N the number of replicas) that keeps track of the last Mandator-batch for which at least $n-f$ Mandator-votes were collected for each replica \label{M:line:2}
\STATE chains[][], a 2D array that saves the $i$\textsuperscript{th} Mandator-batch created by replica $j$
\STATE buffer, a queue storing incoming client requests
\STATE awaitingAcks, a boolean variable which states whether this replica is waiting for Mandator-votes. Initialized to \FALSE
\newline
\end{ALC@g}

\REQUIRE maximum batch time and batch size
\newline

\STATE \textbf{Upon} receiving a batch of client requests $cl$: \label{M:line:7}
\begin{ALC@g}
\STATE push $cl$ to buffer \label{M:line:8}
\newline
\end{ALC@g}

\STATE \textbf{Upon} (size of incoming buffer reaching batch size \OR maximum batch time is passed) \AND awaitingAcks is \FALSE:
\begin{ALC@g}
\STATE set B\textsubscript{parent} be the Mandator-batch corresponding to chains[i][lastCompletedRounds[i]], i.e. B\textsubscript{parent} is the last Mandator-batch proposed by $p_i$ that received $n-f$ Mandator-votes
\STATE create new Mandator-batch \\ \hfill B = (lastCompletedRounds[i]+1, B\textsubscript{parent}, buffer.popAll())
\STATE Set isAwaiting to \TRUE
\STATE broadcast $<$new-Mandator-Batch, B$>$ \label{M:line:13}
\newline
\end{ALC@g}

\STATE \textbf{Upon} receiving $<$new-Mandator-batch, B$>$ from $p_j$ \label{M:line:14}
\begin{ALC@g}
\STATE set chains[j][B.round] to B
\STATE set lastCompletedRounds[j] to B.parent.round \label{M:line:16}
\STATE send $<$Mandator-vote, B.round$>$ to $p_j$ \label{M:line:17}
\newline
\end{ALC@g}

\STATE \textbf{Upon} receiving $n-f$ $<$Mandator-vote, r$>$ for the same r \AND \\ \hfill r = lastCompletedRounds[i]+1 \AND awaitingAcks is \TRUE \label{M:line:18}
\begin{ALC@g}
\STATE set awaitingAcks to \FALSE
\STATE set lastCompletedRounds[i]+=1
\newline
\end{ALC@g}

\STATE \textbf{procedure getClientRequests()} \label{M:line:21}
\begin{ALC@g}
\RETURN lastCompletedRounds \hspace{1cm}
\newline
\end{ALC@g}

\STATE \textbf{Upon} onCommit (r[])
\begin{ALC@g}
\FOR{k in range 0, N} 
\STATE commit the causal history of chains[k][r[k]]  
\ENDFOR
\end{ALC@g}
\end{algorithmic}
\end{algorithm}
\vspace{-2mm}

\subsection{Sporades}\label{subsec:sporades}


We propose Sporades, a novel omission fault-tolerant consensus algorithm that guarantees liveness under network asynchrony and DDoS attacks. In particular, Sporades has two paths; the synchronous path that commits client requests in a single round trip and an asynchronous path that has quadratic message complexity that commits client requests under network asynchrony. Sporades dynamically switches between the synchronous and the asynchronous paths depending on the network condition.

\subsubsection{Terminology}

\textbf{View number} and \textbf{Round number}: Sporades progresses as a sequence of views $v$ and rounds $r$ where each view has one or more rounds. A view represents the term of a leader while a round represents the successive phases of that term. The pair $(v,r)$ is called a \textbf{rank}.

\textbf{Block Format}: There are two kinds of Sporades blocks: (1) \textbf{synchronous blocks} and (2) \textbf{asynchronous blocks}. Both types of blocks consist of five elements: (1) batch of client requests, (2) view number, (3) round number, (4) parent link to a block with a lower rank, and (5) level. The rank of a block is ($v, r$) and blocks are compared lexicographically by their rank: first by the view number, then by the round number. The blocks are connected in a chain using the parent links. We denote that block A extends block B if there exist a set of blocks $b\textsubscript{1}, b\textsubscript{2}, b\textsubscript{3}, .. b\textsubscript{k}$ such that there exists a parent link from $b\textsubscript{i}$ to $b\textsubscript{i-1}$ $\forall$ $i$ in $range(2, k)$ and $b\textsubscript{1}$ = $B$ and $b\textsubscript{k}$ = $A$. The level element of the block refers to the asynchronous level (can take either the value $1$ or $2$). For the synchronous blocks, the level is always $-1$.

\textbf{Common-coin-flip($v$)}: We employ a common-coin-flip primitive as a building block for designing the Sporades. For each value of $v$, common-coin-flip($v$) returns a positive integer in the range (0, n-1) where n is the total number of replicas. The common-coin-flip($v$) satisfies two properties; (1) each invocation of common-coin-flip($v$) for view $v$ at each replica should return the same integer value, if the same $v$ is used as input, and (2) output of the invocation of common-coin-flip($i$) should be independent from common-coin-flip($j$) for j $\neq$ i.


For the implementation of the common-coin-flip($v$), we use the approach used by Rabia\cite{pan2021rabia}: we use a pseudo-random number generator with the same (shared secret) seed at each replica and pre-generate random numbers for each view number. Then the invocation of common-coin-flip($v$) returns the preset random number corresponding to $v$. Since the nodes are non-byzantine, and since the network adversary cannot view the state of the replicas, this implementation of common-coin-flip($v$) delivers the two properties mentioned above.

\subsubsection{Sporades Algorithm}\label{sec:subsubsporadesdescription}

Sporades consists of two sub-protocols: the \textbf{synchronous} mode that works under synchrony and the \textbf{asynchronous} mode that works under asynchrony. Sporades dynamically switches between the two modes depending on the network condition. Algorithms \ref{algo:fallbacksteadyalgorithm} and \ref{algo:fallbackalgorithm} depict the pseudo-codes of synchronous and asynchronous modes of Sporades, respectively.

\begin{figure}
    \centering
    \scalebox{.67}{\input{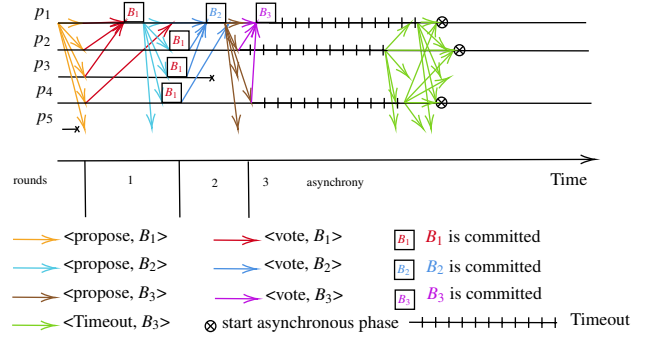}}
    \caption{An execution of synchronous mode of Sporades with 5 replicas}
    \vspace{-5mm}
    \label{fig:consensus-regular}
\end{figure}


The synchronous mode of Sporades is a leader-based consensus algorithm, as depicted in figure \ref{fig:consensus-regular}. The leader for each view is predetermined and shared between peers on bootstrap. The leader replica broadcasts a <propose> message for a given new block $B_p$ containing a rank $(v,r)$ and the reference of the last committed block $B_c$ (line~\ref{S:line:18} of algorithm~\ref{algo:fallbacksteadyalgorithm}). Each process $p$ delivers the <propose> message, if the rank of $B_p$ is greater than the rank of $p$ and if $p$ is in the synchronous mode of operation (see line~\ref{S:line:20} in algorithm~\ref{algo:fallbacksteadyalgorithm}). If these two conditions are met, then each replica commits the block $B_c$ (and the causal history) and sends their <vote> for $B_p$ to the leader replica. Upon receiving $n-f$ <vote> messages for $B_p$, the leader replica commits $B_p$ (and the causal history), forms a new block to be proposed $B_p2$ with an incremented round number compared to $B_p$ and broadcasts a <propose> message with the information that $B_p$ has been committed (see lines~\ref{S:line:9}-\ref{S:line:18} of algorithm~\ref{algo:fallbacksteadyalgorithm}).


Each replica has a Timeout clock which is reset whenever the replica receives a new <propose> message (line~\ref{S:line:26} of Algorithm \ref{algo:fallbacksteadyalgorithm}). If the timeout expires, however, they will broadcast a <timeout> message containing the highest block they are aware of (see line~\ref{S:line:28} of algorithm~\ref{algo:fallbacksteadyalgorithm}).

\begin{figure}
    \centering
    \scalebox{.54}{\input{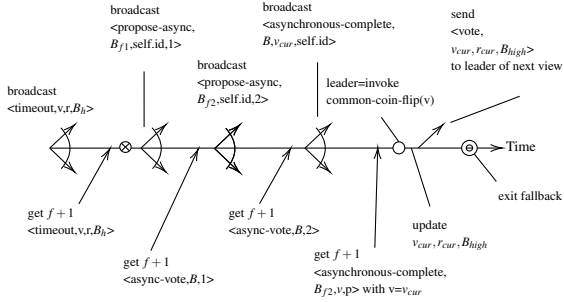}}
    \caption{An execution of the asynchronous mode of Sporades from the perspective of a single process}
    \vspace{-5mm}
    \label{fig:consensus-fallback}
\end{figure}


Upon receiving $n-f$ <timeout> messages, Sporades enters the asynchronous mode of operation as depicted in figure \ref{fig:consensus-fallback}. In this mode, all replicas act as leaders, concurrently. Each replica takes the highest block they are aware of, forms a height $1$ asynchronous block with a monotonically increasing rank compared to the highest block it received and sends a <propose-async> message (line~\ref{A:line:1}-\ref{A:line:7} in algorithm~\ref{algo:fallbackalgorithm}). Each replica sends back a <vote-async> message to the sending process if the rank of the proposed height $1$ block is greater than the highest ranked block witnessed so far (line~\ref{A:line:8}-\ref{A:line:10} of algorithm~\ref{algo:fallbackalgorithm}). Upon receiving $n-f$  <vote-async> messages for the height $1$ asynchronous block, each replica will send a height $2$ asynchronous fallback block. The algorithm allows catching up to higher ranked blocks by building upon another process's height $1$ block. This is meant to ensure liveness for replicas that fall behind. Once $n-f$ <vote-async>s have been gathered for the height $2$ asynchronous block by each replica $p$, the asynchronous mode is considered complete and each replica $p$ broadcasts an <asynchronous-complete> message. When $n-f$ replicas have broadcast <asynchronous-complete> messages, all replicas exit the asynchronous protocol by flipping a common-coin and committing the height $2$ block (and the causal history) from the process designated by the common-coin-flip($v$). Because of the potential $f$ faults, each replica commits a block $B$ if $B$ arrived amongst the first $n-f$ <asynchronous-complete> blocks. After that, replicas exit the asynchronous mode and resume the synchronous path by unicasting a vote to the leader of the next view for the block block\textsubscript{high}.

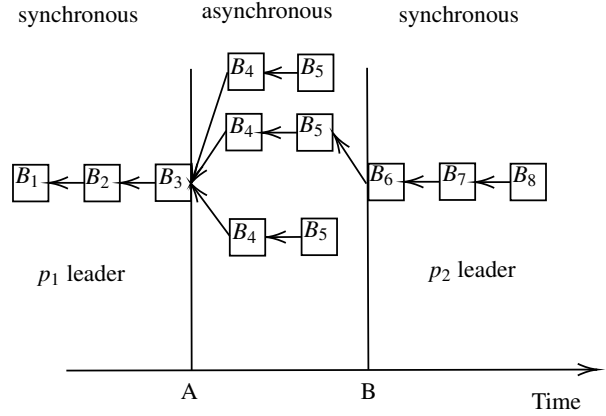
\begin{figure}
    \centering
    \scalebox{.9}{\tikzset{every picture/.style={line width=0.75pt}} 

\begin{tikzpicture}[x=0.75pt,y=0.75pt,yscale=-1,xscale=1]

\draw  [fill={rgb, 255:red, 255; green, 255; blue, 255 }  ,fill opacity=1 ] (264.75,95.97) -- (284.62,95.97) -- (284.62,116.53) -- (264.75,116.53) -- cycle ;
\draw   (225.2,124.34) -- (245.07,124.34) -- (245.07,144.89) -- (225.2,144.89) -- cycle ;
\draw    (225.2,134.68) -- (207.13,134.68) ;
\draw [shift={(205.13,134.68)}, rotate = 360] [color={rgb, 255:red, 0; green, 0; blue, 0 }  ][line width=0.75]    (10.93,-3.29) .. controls (6.95,-1.4) and (3.31,-0.3) .. (0,0) .. controls (3.31,0.3) and (6.95,1.4) .. (10.93,3.29)   ;
\draw  [fill={rgb, 255:red, 255; green, 255; blue, 255 }  ,fill opacity=1 ] (304.49,96.18) -- (324.36,96.18) -- (324.36,116.73) -- (304.49,116.73) -- cycle ;
\draw    (304.49,106.52) -- (286.42,106.52) ;
\draw [shift={(284.42,106.52)}, rotate = 360] [color={rgb, 255:red, 0; green, 0; blue, 0 }  ][line width=0.75]    (10.93,-3.29) .. controls (6.95,-1.4) and (3.31,-0.3) .. (0,0) .. controls (3.31,0.3) and (6.95,1.4) .. (10.93,3.29)   ;
\draw   (185.26,124.54) -- (205.13,124.54) -- (205.13,145.1) -- (185.26,145.1) -- cycle ;
\draw [fill={rgb, 255:red, 208; green, 2; blue, 27 }  ,fill opacity=1 ]   (264.75,105.49) -- (246.19,133.22) ;
\draw [shift={(245.07,134.88)}, rotate = 303.79] [color={rgb, 255:red, 0; green, 0; blue, 0 }  ][line width=0.75]    (10.93,-3.29) .. controls (6.95,-1.4) and (3.31,-0.3) .. (0,0) .. controls (3.31,0.3) and (6.95,1.4) .. (10.93,3.29)   ;
\draw    (175.17,239.97) -- (470.68,239.4) ;
\draw [shift={(472.68,239.4)}, rotate = 179.89] [color={rgb, 255:red, 0; green, 0; blue, 0 }  ][line width=0.75]    (10.93,-3.29) .. controls (6.95,-1.4) and (3.31,-0.3) .. (0,0) .. controls (3.31,0.3) and (6.95,1.4) .. (10.93,3.29)   ;
\draw    (343.29,135.13) -- (326.1,109) ;
\draw [shift={(325,107.33)}, rotate = 56.65] [color={rgb, 255:red, 0; green, 0; blue, 0 }  ][line width=0.75]    (10.93,-3.29) .. controls (6.95,-1.4) and (3.31,-0.3) .. (0,0) .. controls (3.31,0.3) and (6.95,1.4) .. (10.93,3.29)   ;
\draw    (343.7,70.17) -- (344.43,239.53) ;
\draw    (245,70.77) -- (245.4,239.97) ;
\draw   (384.4,123.74) -- (404.27,123.74) -- (404.27,144.29) -- (384.4,144.29) -- cycle ;
\draw    (384.4,134.08) -- (366.33,134.08) ;
\draw [shift={(364.33,134.08)}, rotate = 360] [color={rgb, 255:red, 0; green, 0; blue, 0 }  ][line width=0.75]    (10.93,-3.29) .. controls (6.95,-1.4) and (3.31,-0.3) .. (0,0) .. controls (3.31,0.3) and (6.95,1.4) .. (10.93,3.29)   ;
\draw   (424.08,123.94) -- (443.95,123.94) -- (443.95,144.5) -- (424.08,144.5) -- cycle ;
\draw    (424.08,134.28) -- (406.01,134.28) ;
\draw [shift={(404.01,134.28)}, rotate = 360] [color={rgb, 255:red, 0; green, 0; blue, 0 }  ][line width=0.75]    (10.93,-3.29) .. controls (6.95,-1.4) and (3.31,-0.3) .. (0,0) .. controls (3.31,0.3) and (6.95,1.4) .. (10.93,3.29)   ;
\draw   (344.46,123.94) -- (364.33,123.94) -- (364.33,144.5) -- (344.46,144.5) -- cycle ;
\draw  [fill={rgb, 255:red, 255; green, 255; blue, 255 }  ,fill opacity=1 ] (304.99,62.68) -- (324.86,62.68) -- (324.86,83.23) -- (304.99,83.23) -- cycle ;
\draw    (304.99,73.02) -- (286.92,73.02) ;
\draw [shift={(284.92,73.02)}, rotate = 360] [color={rgb, 255:red, 0; green, 0; blue, 0 }  ][line width=0.75]    (10.93,-3.29) .. controls (6.95,-1.4) and (3.31,-0.3) .. (0,0) .. controls (3.31,0.3) and (6.95,1.4) .. (10.93,3.29)   ;
\draw  [fill={rgb, 255:red, 255; green, 255; blue, 255 }  ,fill opacity=1 ] (265.95,62.07) -- (285.82,62.07) -- (285.82,82.63) -- (265.95,82.63) -- cycle ;
\draw [fill={rgb, 255:red, 208; green, 2; blue, 27 }  ,fill opacity=1 ]   (265.48,72.68) -- (245.7,132.98) ;
\draw [shift={(245.07,134.88)}, rotate = 288.16] [color={rgb, 255:red, 0; green, 0; blue, 0 }  ][line width=0.75]    (10.93,-3.29) .. controls (6.95,-1.4) and (3.31,-0.3) .. (0,0) .. controls (3.31,0.3) and (6.95,1.4) .. (10.93,3.29)   ;
\draw  [fill={rgb, 255:red, 255; green, 255; blue, 255 }  ,fill opacity=1 ] (306.99,154.68) -- (326.86,154.68) -- (326.86,175.23) -- (306.99,175.23) -- cycle ;
\draw    (306.99,165.02) -- (289.2,165.14) ;
\draw [shift={(287.2,165.15)}, rotate = 359.62] [color={rgb, 255:red, 0; green, 0; blue, 0 }  ][line width=0.75]    (10.93,-3.29) .. controls (6.95,-1.4) and (3.31,-0.3) .. (0,0) .. controls (3.31,0.3) and (6.95,1.4) .. (10.93,3.29)   ;
\draw  [fill={rgb, 255:red, 255; green, 255; blue, 255 }  ,fill opacity=1 ] (266.75,154.97) -- (286.62,154.97) -- (286.62,175.53) -- (266.75,175.53) -- cycle ;
\draw [fill={rgb, 255:red, 208; green, 2; blue, 27 }  ,fill opacity=1 ]   (266.75,164.49) -- (246,136.49) ;
\draw [shift={(244.81,134.88)}, rotate = 53.46] [color={rgb, 255:red, 0; green, 0; blue, 0 }  ][line width=0.75]    (10.93,-3.29) .. controls (6.95,-1.4) and (3.31,-0.3) .. (0,0) .. controls (3.31,0.3) and (6.95,1.4) .. (10.93,3.29)   ;
\draw    (185.2,134.68) -- (167.13,134.68) ;
\draw [shift={(165.13,134.68)}, rotate = 360] [color={rgb, 255:red, 0; green, 0; blue, 0 }  ][line width=0.75]    (10.93,-3.29) .. controls (6.95,-1.4) and (3.31,-0.3) .. (0,0) .. controls (3.31,0.3) and (6.95,1.4) .. (10.93,3.29)   ;
\draw   (145.26,124.54) -- (165.13,124.54) -- (165.13,145.1) -- (145.26,145.1) -- cycle ;

\draw (434.47,251.17) node [anchor=north west][inner sep=0.75pt]   [align=left] {Time};
\draw (237.9,245.23) node [anchor=north west][inner sep=0.75pt]   [align=left] {A};
\draw (338.77,245.5) node [anchor=north west][inner sep=0.75pt]   [align=left] {B};
\draw (296.4,39.13) node   [align=left] {\begin{minipage}[lt]{68pt}\setlength\topsep{0pt}
asynchronous
\end{minipage}};
\draw (146.67,33.67) node [anchor=north west][inner sep=0.75pt]   [align=left] {synchronous};
\draw (359.97,33.87) node [anchor=north west][inner sep=0.75pt]   [align=left] {synchronous};
\draw (157.33,179.33) node [anchor=north west][inner sep=0.75pt]   [align=left] {$\displaystyle p_{1}$ leader};
\draw (376.33,177) node [anchor=north west][inner sep=0.75pt]   [align=left] {$\displaystyle p_{2}$ leader};
\draw (265.67,63.33) node [anchor=north west][inner sep=0.75pt]   [align=left] {$\displaystyle B_{4}$};
\draw (265.33,96.67) node [anchor=north west][inner sep=0.75pt]   [align=left] {$\displaystyle B_{4}$};
\draw (267,156) node [anchor=north west][inner sep=0.75pt]   [align=left] {$\displaystyle B_{4}$};
\draw (225.67,125.33) node [anchor=north west][inner sep=0.75pt]   [align=left] {$\displaystyle B_{3}$};
\draw (184.67,125.33) node [anchor=north west][inner sep=0.75pt]   [align=left] {$\displaystyle B_{2}$};
\draw (145.33,124.67) node [anchor=north west][inner sep=0.75pt]   [align=left] {$\displaystyle B_{1}$};
\draw (305,63.33) node [anchor=north west][inner sep=0.75pt]   [align=left] {$\displaystyle B_{5}$};
\draw (304.33,96) node [anchor=north west][inner sep=0.75pt]   [align=left] {$\displaystyle B_{5}$};
\draw (307,154.67) node [anchor=north west][inner sep=0.75pt]   [align=left] {$\displaystyle B_{5}$};
\draw (343.67,123.67) node [anchor=north west][inner sep=0.75pt]   [align=left] {$\displaystyle B_{6}$};
\draw (384.67,123.67) node [anchor=north west][inner sep=0.75pt]   [align=left] {$\displaystyle B_{7}$};
\draw (424,124.33) node [anchor=north west][inner sep=0.75pt]   [align=left] {$\displaystyle B_{8}$};

\end{tikzpicture}}
    \caption{An execution of Sporades from a replicated log perspective}
    \vspace{-5mm}
    \label{fig:Blockchain}
\end{figure}


Figure~\ref{fig:Blockchain} features an example of the overall process of going from synchronous to asynchronous to synchronous again in Sporades. After committing 3 blocks, $B_1$, $B_2$ and $B_3$ in the synchronous mode, all replicas go to the asynchronous protocol. There, all of the alive replicas, namely $p_1$, $p_2$ and $p_4$ propose a height $1$ block $B_4$ then a height 2 block $B_5$. The common coin flip designates $p_2$ as the elected leader of this asynchronous path. Since $p_2$'s <asynchronous-complete> messages were part of the first $n-f=3$ first such messages received by all correct replicas , all replicas commit $B_5$ proposed by $p_2$ (and the causal history) and in the following synchronous execution commit $B_6$, $B_7$ and $B_8$.

To illustrate the operation of Sporades, in the Section \ref{sec:sporades-execution} in Appendix, we give an example execution of Sporades.


We now give some intuitions of the proofs for Sporades. The complete and formal proofs have been deferred to the appendix, section~\ref{sec:formal-proof}.

\textbf{Sketch proof of safety} 
The safety follows from the fact that if a block $B$ is committed in a round $r$, then all the blocks with round $r' \geq r$ will extend $B$. In the synchronous path of Sporades, a block is committed by the leader node when a majority of nodes vote for the synchronous block in the same round. Hence, if the leader commits a block $B$, then it is guaranteed that at least a majority of the replicas have $B$ as its $block\textsubscript{high}$. In the asynchronous path, a block $B$ is committed if it is among the first $n-f$ <asynchronous-complete> messages received, which implies that a majority of the replicas have received $B$. Hence if at least one node commits $B$ in the asynchronous path, then it is guaranteed that at least a majority of the nodes set $B$ as block\textsubscript{high}, thus extending $B$ in the next round.

\textbf{Sketch proof of liveness} 
During the synchronous phase, transactions get committed at each round by the leader, after the leader receives votes from a quorum of replicas. The asynchronous phase is more delicate to address. The essential feature is that, if more than $f$ processes enter the asynchronous path, i.e. there are less than $n-f$ processes on the synchronous path, i.e. the synchronous path cannot progress anymore, then all replicas eventually enter the asynchronous phase. The asynchronous phase (with at least $n-f$ correct replicas in it) will eventually reach a point where $n-f$ correct replicas have sent <asynchronous-complete> messages. The common-coin-flip($v$) will therefore take place. If the result of the common-coin-flip($v$) lands on one of the first $n-f$ replicas to have submitted an <asynchronous-complete> message, then a new block is committed. Since the coin is unbiased, there is a probability $p>\frac{1}{2}$ that a new block is committed. Therefore, the algorithm will, with high probability, keep committing blocks in a finite time.

\textbf{Complexity} The synchronous path of Sporades has a linear message and bit complexity for committing a block. Indeed, the leader broadcasts one fixed-size message to all processes, and each of them will send the leader a fixed-size response. The asynchronous path of Sporades has a complexity of $O(n^{2})$. Each process $p$ acts as a leader, first broadcasting $n$ <propose-async> messages of height $1$. Then each correct process (at least $n-f$) sends back to $p$ a <vote-async> message of height $1$. $p$ then broadcast another $n$ <propose-async> messages of height $2$. At most $n-f$ processes answer to $p$ with a <vote-async> message of height $2$. Upon receiving $n-f$ <vote-async> messages of height $2$, $p$ will broadcast $n$ <asynchronous-complete> messages. Thus for each process, at most $O(n)$ messages are exchanged during the asynchronous phase. Therefore the complexity is $O(n^{2})$.

\begin{figure}
    \centering
    \scalebox{.6}{\tikzset{every picture/.style={line width=0.75pt}} 

\begin{tikzpicture}[x=0.75pt,y=0.75pt,yscale=-1,xscale=1]

\draw   (69,42) -- (117,42) -- (117,85) -- (69,85) -- cycle ;
\draw   (69,94) -- (117,94) -- (117,137) -- (69,137) -- cycle ;
\draw   (69,151) -- (117,151) -- (117,194) -- (69,194) -- cycle ;
\draw   (69,206) -- (117,206) -- (117,249) -- (69,249) -- cycle ;
\draw   (69,261) -- (117,261) -- (117,304) -- (69,304) -- cycle ;
\draw   (149,43) -- (197,43) -- (197,86) -- (149,86) -- cycle ;
\draw   (149,96) -- (197,96) -- (197,139) -- (149,139) -- cycle ;
\draw   (149,154) -- (197,154) -- (197,197) -- (149,197) -- cycle ;
\draw   (149,208) -- (197,208) -- (197,251) -- (149,251) -- cycle ;
\draw   (150,262) -- (198,262) -- (198,305) -- (150,305) -- cycle ;
\draw   (226,42.33) -- (274,42.33) -- (274,85.33) -- (226,85.33) -- cycle ;
\draw   (226,94.33) -- (274,94.33) -- (274,137.33) -- (226,137.33) -- cycle ;
\draw   (226,151.33) -- (274,151.33) -- (274,194.33) -- (226,194.33) -- cycle ;
\draw   (226,206.33) -- (274,206.33) -- (274,249.33) -- (226,249.33) -- cycle ;
\draw   (226,261.33) -- (274,261.33) -- (274,304.33) -- (226,304.33) -- cycle ;
\draw   (306,42.33) -- (354,42.33) -- (354,85.33) -- (306,85.33) -- cycle ;
\draw   (306,95.33) -- (354,95.33) -- (354,138.33) -- (306,138.33) -- cycle ;
\draw   (306,154.33) -- (354,154.33) -- (354,197.33) -- (306,197.33) -- cycle ;
\draw   (307,262.33) -- (355,262.33) -- (355,305.33) -- (307,305.33) -- cycle ;
\draw   (388,95) -- (436,95) -- (436,138) -- (388,138) -- cycle ;
\draw   (388,153) -- (436,153) -- (436,196) -- (388,196) -- cycle ;
\draw    (57,88.67) -- (478,89.67) ;
\draw    (61,142.67) -- (477,142.67) ;
\draw    (59,199.67) -- (475,199.67) ;
\draw    (58,254.67) -- (474,254.67) ;
\draw    (58,309.67) -- (474,309.67) ;
\draw [color={rgb, 255:red, 208; green, 2; blue, 27 }  ,draw opacity=1 ][line width=3]  [dash pattern={on 3.38pt off 3.27pt}]  (126,323.67) -- (127,149.33) ;
\draw [color={rgb, 255:red, 208; green, 2; blue, 27 }  ,draw opacity=1 ][line width=3]  [dash pattern={on 3.38pt off 3.27pt}]  (217,87.67) -- (292,87.67) ;
\draw [color={rgb, 255:red, 208; green, 2; blue, 27 }  ,draw opacity=1 ][line width=3]  [dash pattern={on 3.38pt off 3.27pt}]  (292,87.67) -- (292,43.67) ;
\draw [color={rgb, 255:red, 74; green, 144; blue, 226 }  ,draw opacity=1 ][line width=3]  [dash pattern={on 3.38pt off 3.27pt}]  (378,46.67) -- (378,204.67) ;
\draw [color={rgb, 255:red, 74; green, 144; blue, 226 }  ,draw opacity=1 ][line width=3]  [dash pattern={on 3.38pt off 3.27pt}]  (288,253.67) -- (287,326.33) ;
\draw [color={rgb, 255:red, 208; green, 2; blue, 27 }  ,draw opacity=1 ][line width=3]  [dash pattern={on 3.38pt off 3.27pt}]  (126,148.33) -- (217,148.33) ;
\draw [color={rgb, 255:red, 208; green, 2; blue, 27 }  ,draw opacity=1 ][line width=3]  [dash pattern={on 3.38pt off 3.27pt}]  (217,87.67) -- (217,149.33) ;
\draw [color={rgb, 255:red, 74; green, 144; blue, 226 }  ,draw opacity=1 ][line width=3]  [dash pattern={on 3.38pt off 3.27pt}]  (378,204.67) -- (288,253.67) ;

\draw (78,54) node [anchor=north west][inner sep=0.75pt]   [align=left] {B{\small 11}};
\draw (78,108) node [anchor=north west][inner sep=0.75pt]   [align=left] {B{\small 21}};
\draw (78,165) node [anchor=north west][inner sep=0.75pt]   [align=left] {B{\small 31}};
\draw (78,220) node [anchor=north west][inner sep=0.75pt]   [align=left] {B{\small 41}};
\draw (77,274) node [anchor=north west][inner sep=0.75pt]   [align=left] {B{\small 51}};
\draw (158,55) node [anchor=north west][inner sep=0.75pt]   [align=left] {B{\small 12}};
\draw (158,107) node [anchor=north west][inner sep=0.75pt]   [align=left] {B{\small 22}};
\draw (158,166) node [anchor=north west][inner sep=0.75pt]   [align=left] {B{\small 32}};
\draw (158,223) node [anchor=north west][inner sep=0.75pt]   [align=left] {B{\small 42}};
\draw (157,274) node [anchor=north west][inner sep=0.75pt]   [align=left] {B{\small 52}};
\draw (234,56) node [anchor=north west][inner sep=0.75pt]   [align=left] {B{\small 13}};
\draw (234,107) node [anchor=north west][inner sep=0.75pt]   [align=left] {B{\small 23}};
\draw (234,167) node [anchor=north west][inner sep=0.75pt]   [align=left] {B{\small 33}};
\draw (234,223) node [anchor=north west][inner sep=0.75pt]   [align=left] {B{\small 43}};
\draw (233,275) node [anchor=north west][inner sep=0.75pt]   [align=left] {B{\small 53}};
\draw (314,56) node [anchor=north west][inner sep=0.75pt]   [align=left] {B{\small 14}};
\draw (314,108) node [anchor=north west][inner sep=0.75pt]   [align=left] {B{\small 24}};
\draw (314,167) node [anchor=north west][inner sep=0.75pt]   [align=left] {B{\small 34}};
\draw (313,275) node [anchor=north west][inner sep=0.75pt]   [align=left] {B{\small 54}};
\draw (396,107) node [anchor=north west][inner sep=0.75pt]   [align=left] {B{\small 25}};
\draw (396,168) node [anchor=north west][inner sep=0.75pt]   [align=left] {B{\small 35}};
\draw (29,77) node [anchor=north west][inner sep=0.75pt]   [align=left] {p{\small 1}};
\draw (31,188) node [anchor=north west][inner sep=0.75pt]   [align=left] {p{\small 3}};
\draw (30,296) node [anchor=north west][inner sep=0.75pt]   [align=left] {p{\small 5}};
\draw (29,243) node [anchor=north west][inner sep=0.75pt]   [align=left] {p{\small 4}};
\draw (30,130) node [anchor=north west][inner sep=0.75pt]   [align=left] {p{\small 2}};
\draw (65,5) node [anchor=north west][inner sep=0.75pt]   [align=left] {round 1};
\draw (145,5) node [anchor=north west][inner sep=0.75pt]   [align=left] {round 2};
\draw (222,6) node [anchor=north west][inner sep=0.75pt]   [align=left] {round 3};
\draw (302,7) node [anchor=north west][inner sep=0.75pt]   [align=left] {round 4};
\draw (385,6) node [anchor=north west][inner sep=0.75pt]   [align=left] {round 5};
\draw (66,323) node [anchor=north west][inner sep=0.75pt]  [color={rgb, 255:red, 208; green, 2; blue, 27 }  ,opacity=1 ] [align=left] {Sporades block 1};
\draw (223,327) node [anchor=north west][inner sep=0.75pt]  [color={rgb, 255:red, 74; green, 144; blue, 226 }  ,opacity=1 ] [align=left] {Sporades block 2};
\draw (64,344) node [anchor=north west][inner sep=0.75pt]  [color={rgb, 255:red, 208; green, 2; blue, 27 }  ,opacity=1 ] [align=left] {cmnds=[3,2,1,1,1]};
\draw (224,347) node [anchor=north west][inner sep=0.75pt]  [color={rgb, 255:red, 74; green, 144; blue, 226 }  ,opacity=1 ] [align=left] {cmnds=[4,4,4,3,3]};

\end{tikzpicture}}
    \caption{An example execution of Sporades with Mandator}
    \label{fig:sporades-mandator}
    \vspace{-5mm}
\end{figure}
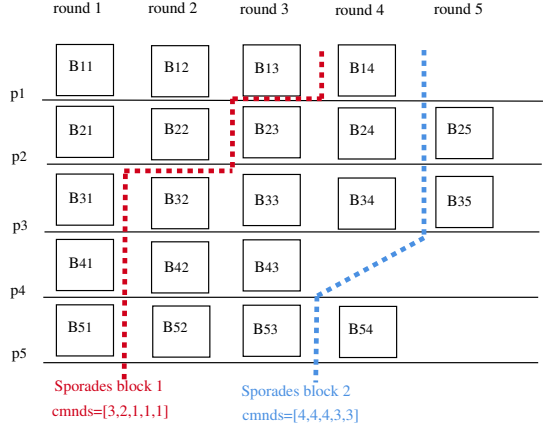

\textbf{Interaction with Mandator} Figure \ref{fig:sporades-mandator} depicts an example execution of Sporades with Mandator. There are five replicas: p\textsubscript{1}, p\textsubscript{2}, p\textsubscript{3}, p\textsubscript{4}, and p\textsubscript{5}. The 2D grid of batches B\textsubscript{i}\textsubscript{j} shows the $chains$ array stored by replica p\textsubscript{i} as part of its Mandator state. In p\textsubscript{i}'s view, replicas p\textsubscript{1}, p\textsubscript{2}, p\textsubscript{3}, p\textsubscript{4}, and p\textsubscript{5} have 4, 5, 5, 3 and 4 number of Mandator batches, each of which have received $n-f = 3$ number of <Mandator-vote>. Sporades block 1 with rank (0,1) contains [3, 2, 1, 1, 1] as the $cmnds$ field. When replica p\textsubscript{i} commits Sporades block 1,  p\textsubscript{i} commits (1) B\textsubscript{1}\textsubscript{1}, B\textsubscript{1}\textsubscript{2}, and B\textsubscript{1}\textsubscript{3} from p\textsubscript{1}'s Mandator batches, (2) B\textsubscript{2}\textsubscript{1} and B\textsubscript{2}\textsubscript{2}  from p\textsubscript{2}'s Mandator batches, (3) B\textsubscript{3}\textsubscript{1} from p\textsubscript{3}'s Mandator batches, (4) B\textsubscript{4}\textsubscript{1} from p\textsubscript{4}'s Mandator batches, and (5) B\textsubscript{5}\textsubscript{1} from p\textsubscript{5}'s Mandator batches. When Sporades block 2 from view 0 and round 2 is committed, p\textsubscript{i} commits (1) B\textsubscript{1}\textsubscript{4} from p\textsubscript{1}'s Mandator batches, (2) B\textsubscript{2}\textsubscript{3} and B\textsubscript{2}\textsubscript{4}  from p\textsubscript{2}'s Mandator batches, (3) B\textsubscript{3}\textsubscript{2}, B\textsubscript{3}\textsubscript{3}, and B\textsubscript{3}\textsubscript{4} from p\textsubscript{3}'s Mandator batches, (4) B\textsubscript{4}\textsubscript{2} and B\textsubscript{4}\textsubscript{3} from p\textsubscript{4}'s Mandator batches, and (5) B\textsubscript{5}\textsubscript{2} and B\textsubscript{5}\textsubscript{3} from p\textsubscript{5}'s Mandator batches.

\begin{algorithm}[h!] \small
\caption{Sporades Synchronous Protocol for process $p_i$, $i\in 0..n-1$} \label{algo:fallbacksteadyalgorithm}
\begin{algorithmic}[1]
\STATE for each view \textbf{v} a designated synchronous leader exists, and is denoted by \textbf{L{\textsubscript{v}}}\label{S:line:1}

\STATE \textbf{Local State:}
\begin{ALC@g}
\STATE  v\textsubscript{cur}, the current view number
\STATE  r\textsubscript{cur}, the current round number
\STATE  block\textsubscript{high}, the block with the highest round number in which the node received a \textbf{propose} message. initialized to the genesis block
\STATE  block\textsubscript{commit}, the last committed block. All blocks in the path from the block\textsubscript{commit} to the genesis block are considered committed blocks
\STATE  isAsync: boolean variable which keeps track of the current algorithm, initialized to \FALSE
\STATE  B\textsubscript{fall}[] initialized to all null, keeps the height 2 asynchronous blocks received from each node in the most recent view change
\newline
\end{ALC@g}

\STATE \textbf{Upon} receiving $n-f$ $<$vote, v, r, B$>$ with the same (v, r) \AND (v, r) $\geq$ (v\textsubscript{cur}, r\textsubscript{cur}) \AND isAsync=\FALSE \label{S:line:9}
\begin{ALC@g}
\STATE update block\textsubscript{high} to be the block in the vote messages with the highest rank
\IF{$n-f$ received vote messages have the same block\textsubscript{high} \AND the rank of this received block\textsubscript{high} is (v,r)} \label{S:line:11}
\STATE set block\textsubscript{commit} to block\textsubscript{high} \label{S:line:12}
\ENDIF 
\STATE  set v\textsubscript{cur}, r\textsubscript{cur} to v, r
\IF{ L\textsubscript{v\textsubscript{cur}} == self}
\STATE set cmnds to getClientRequests() \hspace{1mm} // from Mandator
\STATE form a new block B=(cmnds, v\textsubscript{cur}, r\textsubscript{cur}+1, block\textsubscript{high})
\STATE broadcast $<$propose, B, block\textsubscript{commit}$>$ \label{S:line:18}
\ENDIF
\newline
\end{ALC@g}

\STATE \textbf{Upon} receiving $<$propose, B, block\textsubscript{c}$>$ such that B.rank $>$ (v\textsubscript{cur}, r\textsubscript{cur}) \AND isAsync=\FALSE \label{S:line:20}
\begin{ALC@g}
\STATE  cancel Timeout
\STATE  set v\textsubscript{cur}, r\textsubscript{cur} to B.v, B.r \label{S:line:22}
\STATE  set block\textsubscript{high} to B \label{S:line:23}
\STATE  set block\textsubscript{commit} to \label{S:line:24} block\textsubscript{c}
\STATE  send $<$vote, v\textsubscript{cur}, r\textsubscript{cur}, block\textsubscript{high}$>$ to L\textsubscript{v\textsubscript{cur}} \label{S:line:25}
\STATE  set Timeout \label{S:line:26}
\newline
\end{ALC@g}

\STATE \textbf{Upon} timeout \label{S:line:27}
\begin{ALC@g}
\STATE  broadcast $<$timeout, v\textsubscript{cur}, r\textsubscript{cur}, block\textsubscript{high}$>$ \label{S:line:28}
\end{ALC@g}

\end{algorithmic}
\end{algorithm}
\vspace{-2mm}

\begin{algorithm}[h!] \small
\caption{Sporades Asynchronous Protocol for process $p_i$, $i\in 0..n-1$} \label{algo:fallbackalgorithm}
\begin{algorithmic}[1]

\STATE \textbf{Upon} receiving $n-f$ $<$timeout, v, r, b$>$ messages for the same v such that v $\geq$ (v\textsubscript{cur}) \AND isAsync = \FALSE\label{A:line:1}
\begin{ALC@g}
\STATE  set isAsync to \TRUE \label{A:line:2}
\STATE  update block\textsubscript{high} to be the block in the timeout messages with the highest rank
\STATE  set v\textsubscript{cur}, r\textsubscript{cur} to v, max(r\textsubscript{cur}, block\textsubscript{high}.r) 
\STATE  set cmnds to getClientRequests()  \hspace{1mm} // from Mandator
\STATE  form a new height 1 asynchronous block \\ \hspace{1cm} B\textsubscript{f1}=(cmnds, v\textsubscript{cur}, r\textsubscript{cur}+1, block\textsubscript{high}, 1) \label{A:line:6}
\STATE  broadcast $<$propose-async, B\textsubscript{f1},p\textsubscript{i}, 1$>$ \label{A:line:7}
\newline
\end{ALC@g}

\STATE \textbf{Upon} receiving $<$propose-async, B, $p_j$, h$>$ from $p_j$ \AND B.v  == v\textsubscript{cur} \AND  isAsync == \TRUE \label{A:line:8}
\begin{ALC@g}
\IF{rank(B) $>$ (v\textsubscript{cur}, r\textsubscript{cur})} \label{A:line:9}
\STATE send $<$vote-async, B, h$>$ to $p_j$ \label{A:line:10}
\IF{h == 2} 
\STATE set B\textsubscript{fall}[$p_j$] to B  
\ENDIF
\ENDIF
\newline
\end{ALC@g}

\STATE \textbf{Upon} receiving $n-f$ $<$vote-async, B, h$>$ \AND isAsync = \TRUE~\AND B.v == v\textsubscript{cur} \label{A:line:15}
\begin{ALC@g}
\IF{h == 1} \label{A:line:16}
\STATE set cmnds to getClientRequests()  \hspace{1mm} // from Mandator
\STATE form a new height 2 asynchronous block \\ \hspace{1cm} B\textsubscript{f2}=(cmnds, v\textsubscript{cur}, B.r+1, B, 2) \label{A:line:18}
\STATE broadcast $<$propose-async, B\textsubscript{f2}, p\textsubscript{i}, 2$>$ \label{A:line:19}
\ENDIF
\IF{h == 2}
\STATE broadcast $<$asynchronous-complete, B, v\textsubscript{cur}, p\textsubscript{i}$>$ \label{A:line:22}
\ENDIF
\newline
\end{ALC@g}

\STATE \textbf{Upon} receiving $n-f$ $<$asynchronous-complete, B, v, $p_j$$>$ \AND isAsync = \TRUE \hspace{1mm}  \AND v == v\textsubscript{cur} \label{A:line:24}
\begin{ALC@g}
\STATE  leader = invoke common-coin-flip(v\textsubscript{cur}) \label{A:line:25}
\IF{height 2 block by leader exists in the first $n-f$ asynchronous-complete messages received} \label{A:line:26}
\STATE set block\textsubscript{high}, block\textsubscript{commit}  to height 2 block from leader \label{A:line:27}
\STATE set v\textsubscript{cur}, r\textsubscript{cur} to rank(block\textsubscript{high}) \label{A:line:28}

\ELSIF{B\textsubscript{f}[leader] != null} \label{A:line:29}
\STATE set block\textsubscript{high} to  B\textsubscript{f}[l] 
\STATE set v\textsubscript{cur}, r\textsubscript{cur} to rank(block\textsubscript{high}) \label{A:line:31}
\ENDIF

\STATE set v\textsubscript{cur} to v\textsubscript{cur}+1 \label{A:line:33}
\STATE set isAsync to \FALSE \label{A:line:34}
\STATE send $<$vote, v\textsubscript{cur}, r\textsubscript{cur}, block\textsubscript{high}$>$ to L\textsubscript{v\textsubscript{cur}} \label{A:line:35}
\STATE set timeout \label{A:line:36}
\end{ALC@g}

\end{algorithmic}
\end{algorithm}
\vspace{-2mm}

\section{Implementation}\label{sec:impl}


We implemented Sporades and Mandator using Golang version 1.18.3. We first implemented Mandator and implemented Sporades and Paxos on top of Mandator, which we refer to as Mandator-Sporades and Mandator-Paxos, respectively. We used standard Golang TCP library to achieve reliable point-to-point links. We used Protobuf encoding~\cite{protobuf} to marshal and unmarshal messages. 


In our implementation, we used a classic design principle in systems design; separation of control plane from the data plane. We optimized the reliable request dissemination of Mandator by delegating the task of client request dissemination to a separate \textbf{child process} that runs in the same replica machine: each replica has one or more child processes that are assigned to it. Child processes are stateless, and are concerned only about reliably broadcasting client request batches. With child processes in place, the typical execution of Mandator is as follows. The client sends a batch of requests to a child process, the child process collects one or more such client request batches, forms a child-batch, and sends it to a majority of child processes in other replica machines. Each child process, upon receiving a new child-batch, will send an acknowledgement to the originator, and will also send the received child batch to the main replica process. The sending child process, upon receiving a majority of acknowledgements, will send a child-batch-confirm message to the replica. The replica uses the confirmed child batch identifiers in the Mandator batches.


We also employed \textbf{selective-broadcast}, a novel memory-efficient pull-based broadcast primitive in the Mandator implementation. Namely, each Mandator replica keeps track of the last received <Mandator-vote> for each replica and sends the new Mandator-batches only to the replicas that are most up-to-date, with the restriction that there is at least a majority of receivers. Replicas that do not receive the new batch use a pull-based mechanism to request missing batches. Selective broadcast helps to avoid memory overflow resulting from increasing queue lengths in the face of network asynchrony.
\section{Evaluation}\label{sec:eval}


The goal of our evaluation is to answer the following four questions: (1) How efficient is Mandator in disseminating client requests in the wide-area? (2) How robust is Mandator-Sporades in the face of replica failures? (3) How robust is Mandator-Sporades in the face of targeted DDoS attacks? and (4) How does Mandator-Sporades scale with increasing replica count in the wide-area when deployed with Redis \cite{carlson2013redis}?

 
To showcase the performance and resilience of Mandator-Sporades, we compare Mandator-Sporades against three state-of-the-art consensus algorithms; Rabia \cite{pan2021rabia}, EPaxos (with 3\% conflict rate) \cite{moraru2013there}, and Multi-Paxos \cite{mazieres2007paxos}. Rabia is a randomized consensus algorithm that builds on top of Ben-Or \cite{ben1983another}. EPaxos is a consensus algorithm that achieves high throughput in the cluster and low latency in the wide-area. Multi-Paxos is a popular choice in production systems due to its simplicity \cite{burrows2006chubby}.

\subsection{Setup \label{subsec:setupl}}


We conducted our experiments using c4.4xlarge (16 virtual CPUs, 30 GB memory, and up to 10 Gbps network bandwidth) and c4.2xlarge (8 virtual CPUs, 15 GB memory, and up to 10 Gbps network bandwidth) virtual machines for replicas and clients, respectively. We used Ubuntu Linux 20.04.3 LTS for all the machines. We employed the same evaluation setup as in Mencius\cite{barcelona2008mencius} and Baxos\cite{tennage2022baxos}, where each AWS location has a single replica machine and a single client machine. We experimented with three to nine replicas and three to nine clients located in nine geographically separated AWS regions in N.Virginia, Ireland, Mumbai, St.Paulo, Tokyo, Oregon, Ohio, Singapore and Sydney.

\subsection{Benchmarks and Workload}

We employ two benchmarks in our evaluation; (1) a resident key-value store written in Go-Lang and (2) Redis \cite{carlson2013redis}. As the resident key-value store, we use the same implementation as Rabia \cite{pan2021rabia} key-value store, which uses a Go-Lang map object to save keys and values of type string. For the Redis benchmark, we used a Go-Redis connector \cite{goredis}. Our Redis integration with Mandator-Sporades uses the same approach used in Redis-Rabia \cite{pan2021rabia}, and we use Redis native MGet and MPut commands to process a batch of requests. 

\begin{figure*}[h]
    \vspace{-8mm}
    \begin{subfigure}[b]{0.5\textwidth}
      \centering
      \includegraphics[width=\textwidth]{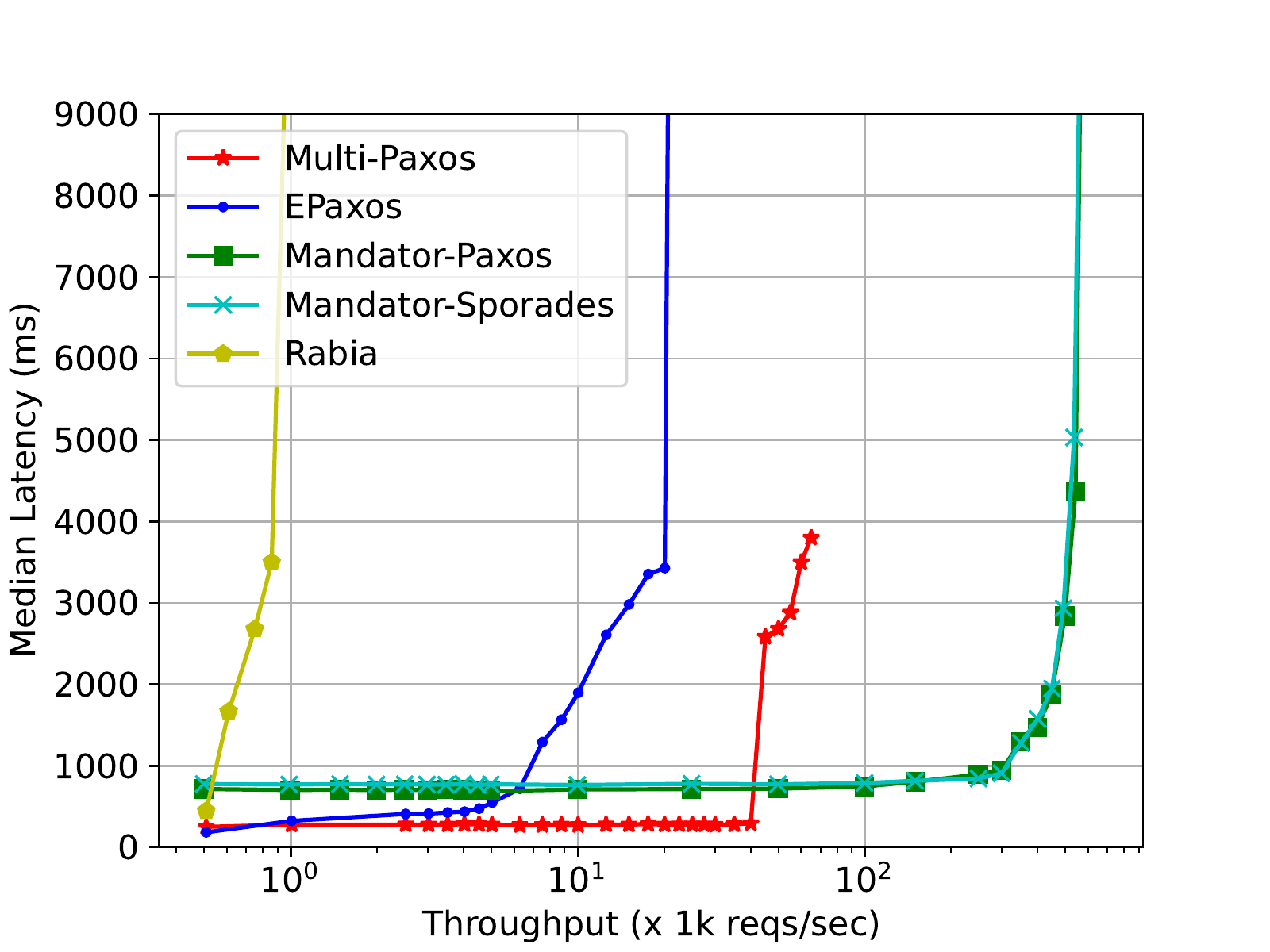}
      \caption{Median latency}
      \vspace{-1mm}
    \end{subfigure}
    \begin{subfigure}[b]{0.5\textwidth}
      \centering
      \includegraphics[width=\textwidth]{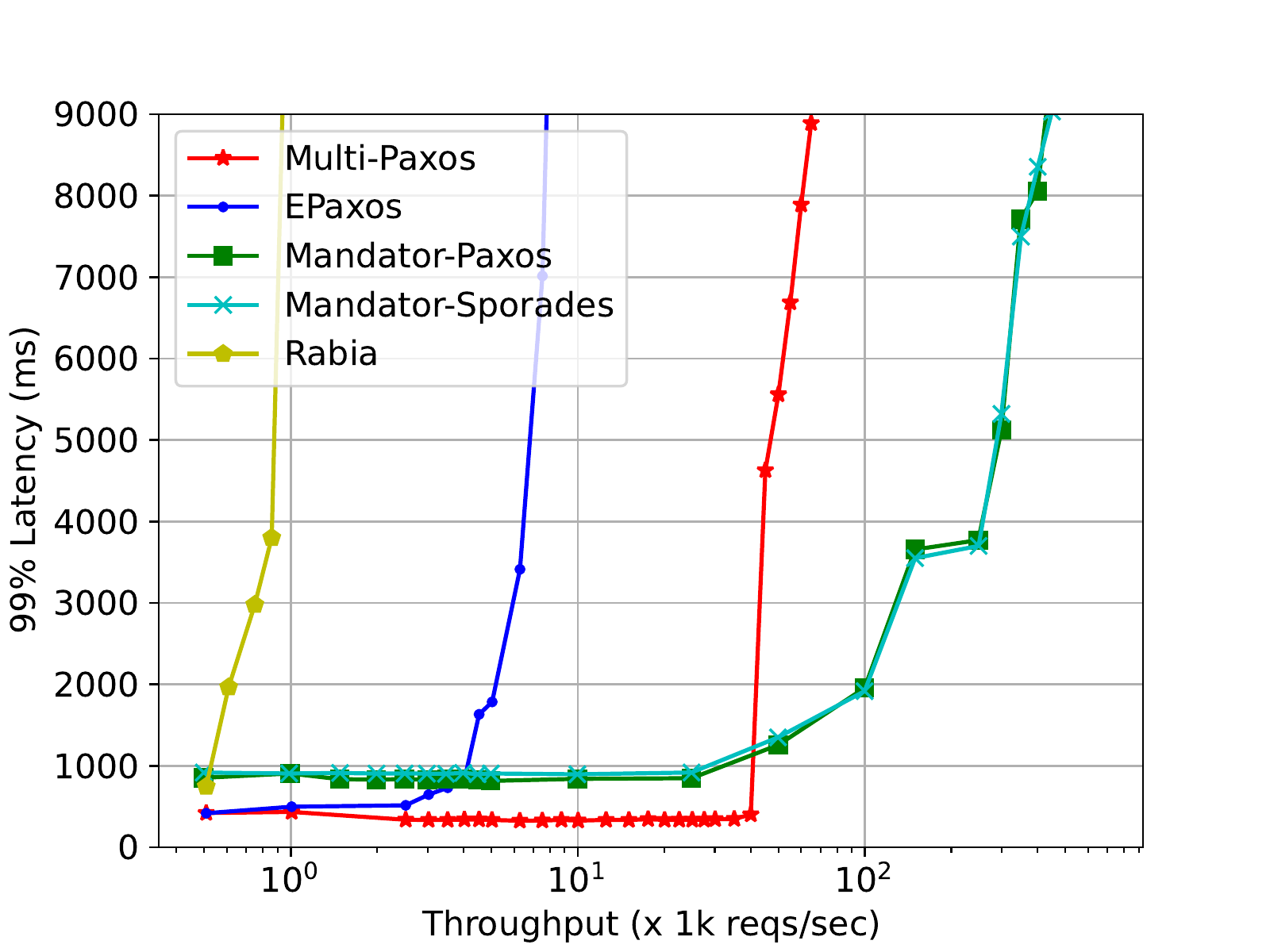}
      \caption{Tail latency}
      \vspace{-1mm}
    \end{subfigure} 
	\caption{Best-case wide-area performance of Mandator with 5 replicas and 5 clients. Note that the x-axis is in the log-scale}
	\label{fig:best_case_graph}
	\vspace{-0mm}

\end{figure*}


We use open loop model \cite{schroeder2006open} Poisson arrival of client requests, due to its wide adoption in client request generation tools \cite{tennage2022baxos, tollman2021epaxos}. All four consensus algorithms use batching on the replica side and the client side. We set the client-side batch size to 100. Interestingly, we found that different algorithms have different optimum replica batch sizes, and following the previous studies \cite{moraru2013there, pan2021rabia} we use 1000, 5000, and 300 as the replica side batch sizes for EPaxos, Multi-Paxos, and Rabia, respectively. For Sporades, we used 2000 as the replica-side batch size. Furthermore, we set the maximum batch time to 5ms which is the standard batch time used in EPaxos \cite{moraru2013there}. We do not use pipe-lining in our evaluation, which is an orthogonal approach to improve wide-area performance.  We set the size of a client request to 16B (8B key length and 8B value length). We chose these request sizes after observing the request sizes used in deployed systems; e.g. in Facebook’s production TAO system \cite{bronson2013tao}, 50\% of requests have a value field smaller than 16B.


In our experiments, all client requests are totally ordered and executed. When a replica receives a request, it uses consensus to totally order it, and upon committing and executing, sends a response back to the client. After each experiment, we use the replica commit log to verify that each replica learns the same history of requests. In all algorithms, we measure the execution latency\footnote{commit latency: latency for total ordering}  \footnote{execution latency: commit latency + latency for executing the command and the causal history} \footnote{For Multi-Paxos, Sporades and Rabia the commit latency and the execution latency differ by a negligible amount whereas for EPaxos the execution latency is significantly higher than the commit latency when the conflict rate of EPaxos is non-zero} in the client side starting from when a new request is first sent by a client until the client receives the corresponding response. Following the evaluation approach of the revised version of EPaxos \cite{tollman2021epaxos}, we opted to measure the execution latency instead of the commit latency, because the commit latency does not represent all the application use-cases; if a client depends on the response returned by an operation, then the client has to wait for the execution to be completed \cite{hunt2010zookeeper} \cite{carlson2013redis}. Each experiment was run for 60 seconds and was repeated 5 times.

\subsection{Mandator Performance \label{subsec:bestcase}}

This experiment evaluates the performance of Mandator in the wide-area deployment. We deployed the consensus algorithms in five AWS regions located in N.Virginia, Ireland, Mumbai, St.Paulo, and Tokyo. We configured the clients to send requests with different arrival rates in a Poisson open loop. We measure the throughput, median and 99\% latency for different arrival rates. Figure \ref{fig:best_case_graph} depicts the experiment results.

\textbf{Rabia}: We observe that the throughput of Rabia saturates at 500 tx/sec in less than 500ms median latency. This is a significantly different result compared to the Rabia performance evaluated inside a local area network (LAN) (inside a LAN Rabia delivers 300k tx/sec throughput \cite{pan2021rabia}). We identified that this mismatch between the local-area and wide-area performance is a direct consequence of Rabia design; Rabia is designed to achieve high performance when each replica receives the same client request approximately at the same time. In a LAN deployment this timing assumption holds, but not in the WAN. Since the performance of Rabia under best-case deployment provides low throughput, we do not use Rabia in our following experiments in sections \ref{subsec:crash} \ref{subsec:ddos} and \ref{subsec:redis}.
 
\textbf{EPaxos and Multi-Paxos}: We observe that the throughput of Epaxos saturates at 6,500 tx/sec in less than 720ms median latency. We also observe that Multi-Paxos delivers a throughput of 40,000 tx/sec in less than 295ms median latency, which is significantly higher than that of EPaxos. These behaviours conform to the observations made in the revised evaluation of EPaxos \cite{tollman2021epaxos}. We explain these behaviours as follows.

In our experiments, we measured the execution latency (we used the $-exec$ and $-dreply$ flags of EPaxos). When measured for the execution latency, EPaxos fails at achieving high throughput due to two main reasons: (1) infinitely growing dependency chains and (2) high execution delay for dependent requests. When the conflict rate of EPaxos is non-zero, it has been shown that the execution latency of EPaxos can be more than 2x the commit latency \cite{tollman2021epaxos}. Hence Epaxos delivers low throughput under a specified median latency upper bound. Moreover, for both Multi-Paxos and EPaxos, we observed that the throughput is bottlenecked by the speed of consensus:  since EPaxos and Multi-Paxos have to carry a batch of client requests inside the consensus messages, the maximum attainable throughput is bottlenecked by the speed of consensus, thus delivering sub-optimal results.

\textbf{Mandator-Paxos and Mandator-Sporades}: We make two observations in Mandator-Paxos and Mandator-Sporades: (1) both Mandator-Paxos and Mandator-Sporades deliver the same throughput and latency and (2) Mandator-Paxos and Mandator-Sporades achieve 300,000 tx/sec throughput in less than 900ms median latency, and outperform Rabia, Epaxos and Multi Paxos by 650\% in throughput, at a 200\% higher latency.

First, Mandator-Paxos and Mandator-Sporades achieve the same best case performance because the message complexity of Paxos and Sporades are similar in synchronous execution. Second, the 650\% increase of throughput in Mandator-Paxos and Mandator-Sporades is a direct consequence of the Mandator design: Mandator uses an efficient method of batching multiple client requests in a single Mandator batch. Consequently, when a single consensus block gets committed, a larger batch of client requests gets committed. Mandator-based protocols disseminate the client requests without waiting for consensus acknowledgment, hence utilizing the network link bandwidth more efficiently. Stated differently, the performance of Mandator is bottlenecked by the speed of the network. In contrast, Multi-Paxos, EPaxos, and Rabia employ batching inside the critical path of consensus, such that the throughput is bottlenecked by the speed of consensus. This validates our hypothesis that separating the request dissemination from the total-ordering improves throughput.

\textbf{Latency overhead of Mandator}: The median and tail latency of Mandator-Paxos and Mandator-Sporades are 200\% higher than Multi-Paxos for throughput values less than 40k tx/sec. Mandator-Paxos and Mandator-Sporades consume 10 message hops for each client request when deployed with a single child process, and this results in high latency. This latency cost is acceptable for applications such as strongly consistent distributed file sharing, where throughput is more critical than latency. In contrast, if the latency is more critical, then we propose to use Sporades without Mandator. When deployed without Mandator, Sporades provides the same latency and throughput behaviors as Multi-Paxos, because the asymptotic message complexities and the communication patterns are similar between synchronous mode of Sporades and Multi-Paxos. A more practical approach would be to dynamically enable Mandator depending on the application SLO latency guarantees and client arrival rate.

\subsection{Crash failures\label{subsec:crash}}


In this experiment, we evaluate the impact of replica failures on throughput. In Mandator-Paxos and Mandator-Sporades, we crash the leader replica and in EPaxos we crash a random replica (since EPaxos is leaderless). Figure \ref{fig:crash_fault_performance} shows the throughput with respect to time.

\begin{figure}[t]
  \vspace{-5mm}
  \includegraphics[width=\linewidth]{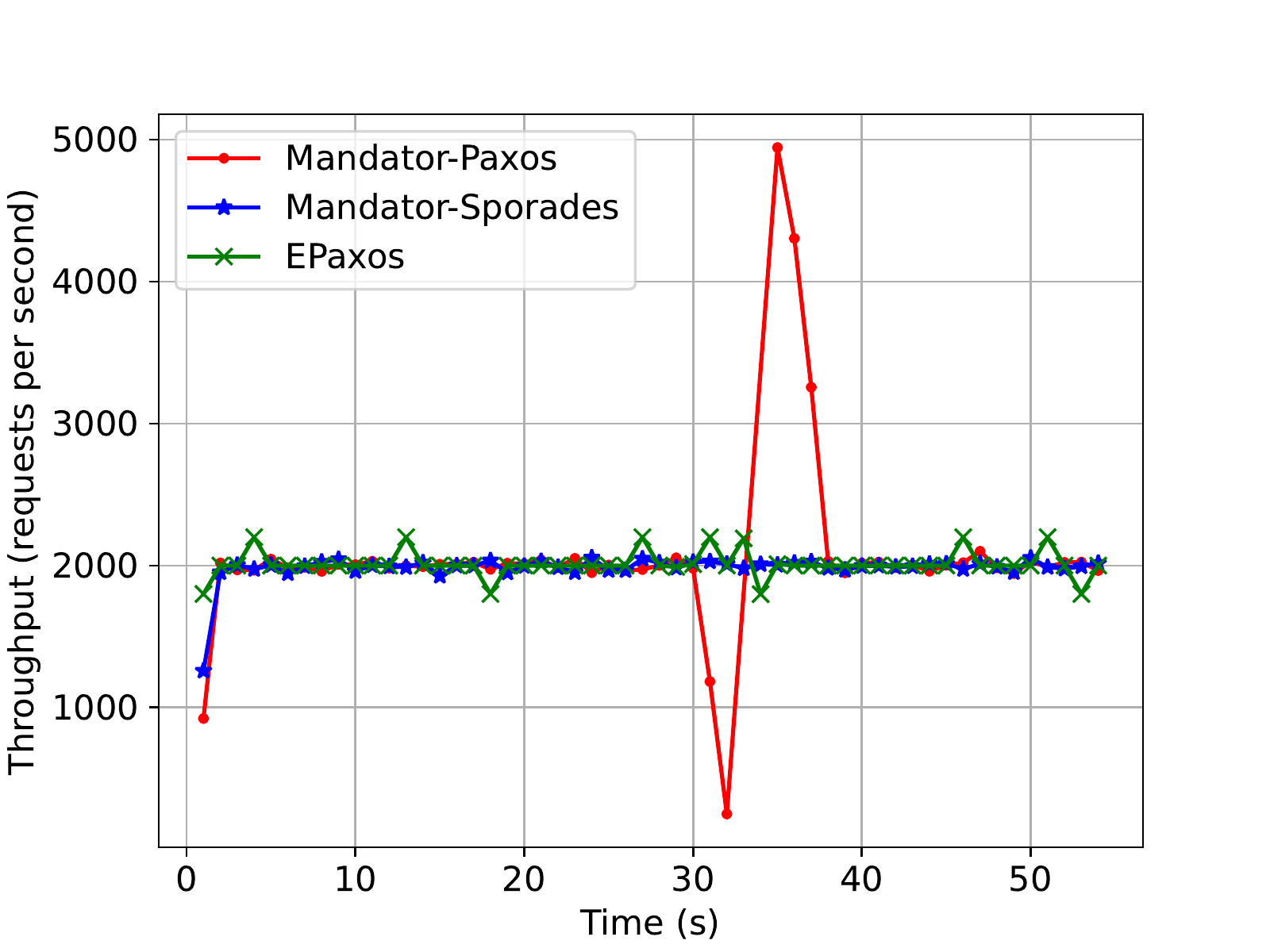}
  \caption{Crash-Fault Performance with 3 replicas and 2 clients. At t=30s the leader replica in Mandator-Paxos and Mandator-Sporades and a random replica in EPaxos are crashed}
  \vspace{-5mm}
  \label{fig:crash_fault_performance}
\end{figure}

We observe that both EPaxos and Mandator-Sporades are not affected by the node failure, whereas Mandator-Paxos has transient low and high spikes in throughput. EPaxos is resilient to node failures given that there exists no leader replica. Mandator-Sporades is resilient to leader node failures because of the asynchronous path of Sporades: in the face of leader failures, Sporades falls back to an asynchronous path which commits transactions. Hence when a new leader is elected Sporades keeps committing new requests. In contrast, Mandator-Paxos (and also Multi-Paxos) fails at committing new requests during the view change, hence the throughput approaches zero just after the leader failure. However, during this period when a new leader is elected, Mandator continues to reliably broadcast client requests, and as a result, when a new Multi-Paxos leader is elected, requests are committed with high throughput (with 5,000 tx/sec as the maximum in this experiment). This is the reason for observing an upward spike for Mandator-Paxos after the leader election.

\subsection{Performance under DDoS attacks \label{subsec:ddos}}


In this experiment, we focus on the impact of targeted DDoS attacks on throughput and latency. We use a more generalized version of \textit{delayed-view-change attack}; a DDoS attack previously presented in \cite{tennage2022baxos, spiegelman2019ace}. In a delayed-view-change-attack, a DDoS attack is simulated by continuously attacking the leader node dynamically upon each view change. However, a delayed-view-change-attack does not consider the overhead on the attacker, given that performing a dynamic network traffic analysis to find the new leader node is a time-consuming task. In our work, we modify this attack such that the attacker randomly chooses a minority of the replicas and attacks them. We believe that our approach is more practical, given that the attacker does not have to do a wide-area network traffic analysis to find the leader.

Figure \ref{fig:attack__performance} shows the throughput and median latency under DDoS attacks. Since this attack measures the performance under worst-case network scenarios, in the following we discuss the throughput values under 5s median latency upper bound (in contrast, we discussed throughput under 1s median latency in the best case evaluation in subsection \ref{subsec:bestcase}). We first observe that the attack has the highest impact on EPaxos by having a maximum throughput of 7200 tx/sec (under 5s median latency). This is due to the high delayed-execution latency resulting from the expensive instance revocation path of the EPaxos. Second, the throughput of Multi-Paxos is also affected by the DDoS attack by having a throughput of 45,000 tx/sec (under 5s median latency), due to liveness loss during the leader elections. Third, Mandator-Paxos delivers 250,000 tx/sec throughput (under 5s median latency) which is significantly less than its synchronous performance. However, the DDoS performance of Mandator-Paxos is still large enough due to the impact of Mandator: Mandator is not affected by the attack due to its asynchronous nature, and hence continues to reliably disseminate requests at the speed of the network. Though Paxos is affected by the DDoS attack, the impact on the average throughput of Mandator-Paxos is low, because upon a new leader election Mandator-Paxos can commit a large batch of requests that is disseminated by Mandator. Finally, Mandator-Sporades delivers 400,000 tx/sec throughput under 5s median latency. We attribute this high throughput to Sporades's liveness guarantees during network asynchrony: the asynchronous mode of Sporades continues to commit client requests during network asynchrony. Hence this experiment confirms our design hypothesis that Mandator-Sporades is resilient to DDoS attacks due to its asynchronous liveness guarantees.

\begin{figure}[t]
  \vspace{-5mm}
  \includegraphics[width=\linewidth]{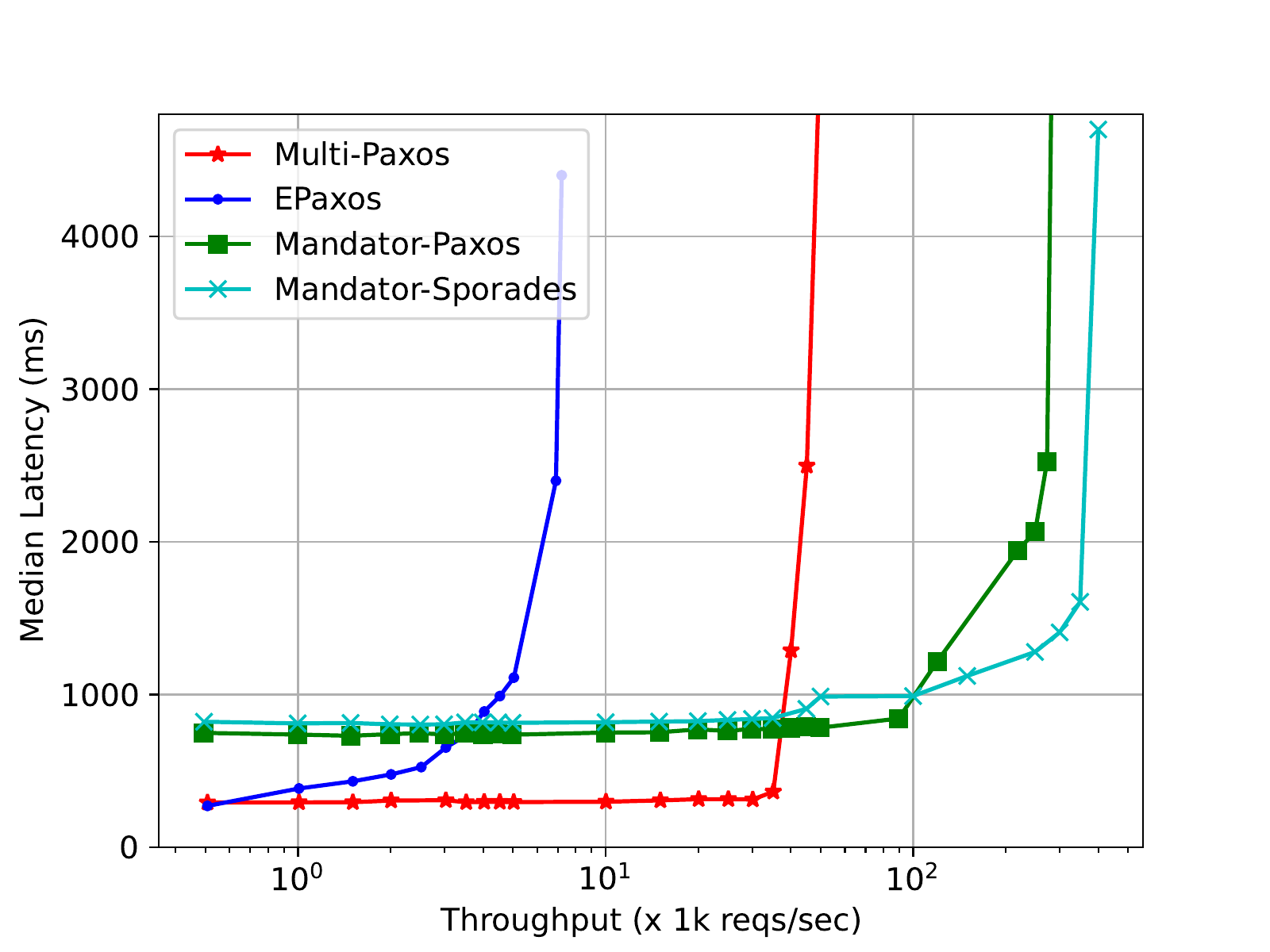}
  \caption{Performance under DDoS attack with 5 replicas. Up to a minority (=2) of nodes are attacked at a given time, dynamically. Note that the x-axis is in log-scale.}
  \vspace{-5mm}
  \label{fig:attack__performance}
\end{figure}

\subsection{Scalability with Redis\label{subsec:redis}}


In this experiment, we aim to quantify the scalability of Mandator-Sporades with Redis integration. Redis is an in-memory data store that supports data structures and operations on them, such as hash maps, lists, and sets. Since the real-world deployments have strict service level objectives for the latency, we fix the maximum allowed median latency to be 1.5s. We evaluated the scalability of Mandator-Sporades by running it with an ensemble of three, five, seven and nine replicas. It should be noted that, unlike blockchain algorithms where consensus algorithms are measured for up to a hundred nodes, crash fault tolerant protocols are designed to scale up to 9 nodes in practice \cite{barcelona2008mencius, kogias2020hovercraft}. Figure \ref{fig:scalability_performance} shows the scalability results.

We observe that Mandator-Sporades can scale up to 9 nodes while delivering a throughput of 150,000 tx/sec, in the 9 replica deployment. We also observe that with an increasing replica count the median latency increases, and we attribute this behaviour to the increasing broadcast message overhead with the increasing number of replicas.

\begin{figure}[t]
  \vspace{-5mm}
  \includegraphics[width=\linewidth]{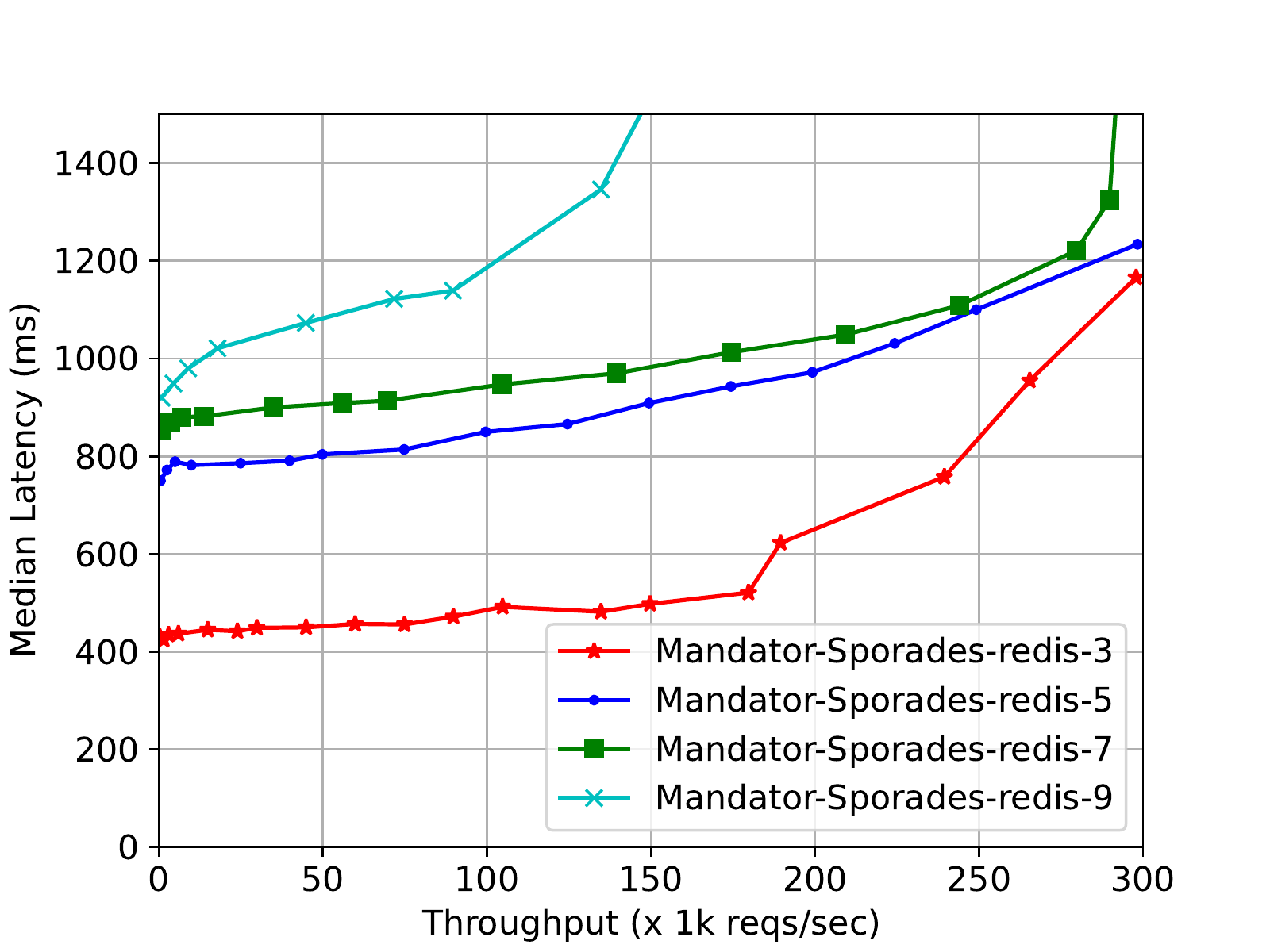}
  \caption{Scalability with respect to increasing replica count}
  \vspace{-5mm}
  \label{fig:scalability_performance}
\end{figure}

\section{Related Work}\label{sec:related}

\textbf{Partially synchronous consensus}: Leader-based partially synchronous consensus algorithms such as Multi-Paxos \cite{mazieres2007paxos}, Raft \cite{ongaro2014search}, and View Stamp Replication \cite{oki1988viewstamped} use a single leader node to totally order requests. While using the leader-based approach is simple, it suffers from leader computational and network bottlenecks, thus failing to achieve high performance. To address the single leader bottleneck, multi-leader consensus algorithms such as Baxos \cite{tennage2022baxos}, Mencius \cite{barcelona2008mencius}, EPaxos \cite{moraru2013there}, Generalized Paxos \cite{lamport2005generalized} and Multi-Coordination Paxos \cite{camargos2007multicoordinated} were proposed. At the core of the multi-leader design, there are two design principles: (1) statically partition the replicated log space and (2) exploit request commutativity to parallelize total ordering. These Multi-leader approaches suffer from low performance when concurrent requests have conflicting dependencies. All leader-based and leaderless partially synchronous protocols fail at guaranteeing liveness in the face of network asynchrony and DDoS attacks. In contrast, Sporades continues to commit requests in the presence of network asynchrony.

\textbf{Asynchronous consensus}: Ben-Or \cite{ben1983another} proposed an asynchronous binary consensus algorithm using randomization. Following Ben-Or, several theoretical research \cite{abraham2018synchronous, ezhilchelvan2001randomized, friedman2005simple, rabin1983randomized} have proposed improvisations for Ben-Or that enables (1) Byzantine tolerance\cite{mostefaoui2014signature}, (2) reducing the message complexity \cite{raynal2018fault} and supporting multi-valued consensus \cite{zhang2009bounded}. However, all of these works are of theoretical interest, with no practical evaluation. To the best of our knowledge, the first practical system that employs Ben-Or as its core is Rabia \cite{pan2021rabia}. As we showed in section \ref{sec:eval}, Rabia does not provide throughput values greater than 1000 tx/sec in the wide-area. In contrast, Sporades provides 300k tx/sec throughput in the wide-area. Previous research \cite{cachin2005random, cachin2016non, miller2016honey} have proposed Byzantine asynchronous consensus algorithms, but they are not directly related to our scope.  

\textbf{Protocol switching consensus}: Turtle consensus \cite{nikolaou2015turtle, nikolaou2016moving} is the closest work to Sporades that employs a synchronous path and an asynchronous path. Turtle consensus works by statically switching between Multi-Paxos and Ben-Or. Statically switching is sub-optimal given that the asynchronous path can be triggered by a transient network delay (in a synchronous setting), such that performance will be degraded. In contrast, in Sporades, we use dynamic mode switching between the synchronous mode and the asynchronous mode, such that Sporades stays in the synchronous path whenever the network is synchronous. Several works \cite{ gelashvili2021jolteon, aublin2015next, kursawe2005optimistic} have explored the protocol switching approach with Byzantine failure tolerance for blockchain applications, but fall outside our scope.

\textbf{Efficient request dissemination} Previous research have focused on improving the request dissemination using two approaches: (1) sharding \cite{marandi2012multi} \cite{ailijiang2017wpaxos} \cite{corbett2013spanner} and (2) overlay networks \cite{marandi2010ring} \cite{rizvi2017canopus} \cite{charapko2021pigpaxos} \cite{zhao2018sdpaxos}. We find that sharding is an orthogonal approach to our design of Mandator: Mandator can be extended to support sharding by deploying Mandator in multiple shards. Overlay networks such as Ring Paxos \cite{marandi2010ring}, Canopus \cite{rizvi2017canopus}, PigPaxos \cite{charapko2021pigpaxos}, and SD-Paxos \cite{zhao2018sdpaxos} improve the performance of the consensus algorithm by using non-leader replicas to relay the requests such that the overhead imposed on the leader node is minimized. However, the performance gain of such approaches is relatively low compared to Mandator, and we find the monolithic design of existing protocols as the main cause of this problem. In contrast, we present Mandator as a stand-alone and generic consensus agnostic overlay, that can run at the speed of the network, thus delivering high performance as shown in section \ref{sec:eval}.
\section{Conclusion}\label{sec:conc}

We presented Mandator and Sporades, a novel wide-area SMR algorithm that can achieve high performance and robustness under network asynchrony. We first presented Mandator, a consensus agnostic asynchronous request dissemination layer, that efficiently disseminates client requests. Then we presented Sporades, a novel consensus algorithm that can achieve optimum one round-trip consensus in the synchronous network setting, as well as liveness in the face of network asynchrony using an asynchronous fallback path. Our evaluation shows that Mandator-Sporades delivers 300k tx/sec under 900ms in the wide-area, and out-performs Multi-Paxos by 650\%, at a modest expense of latency. We also showed that Mandator-Sporades outperforms Mandator-Paxos by 60\% in throughput, in the presence of DDoS attacks.

{\footnotesize \bibliographystyle{plain}
\bibliography{refs}}

\begin{thebibliography}{10}

\bibitem{goredis}
Redis client for golang.
\newblock \url{https://github.com/go-redis/redis/}, 2021.

\bibitem{abraham2018synchronous}
Ittai Abraham, Srinivas Devadas, Danny Dolev, Kartik Nayak, and Ling Ren.
\newblock Synchronous byzantine agreement with expected o(1) rounds, expected
  o(n2) communication, and optimal resilience.
\newblock {\em International Conference on Financial Cryptography and Data
  Security}, 2019.

\bibitem{ailijiang2017wpaxos}
Ailidani Ailijiang, Aleksey Charapko, Murat Demirbas, and Tevfik Kosar.
\newblock Wpaxos: Ruling the archipelago with fast consensus.
\newblock {\em arXiv preprint arXiv:1703.08905}, 2017.

\bibitem{antoniadis2018state}
Karolos Antoniadis, Rachid Guerraoui, Dahlia Malkhi, and Dragos-Adrian
  Seredinschi.
\newblock State machine replication is more expensive than consensus.
\newblock Technical report, 2018.

\bibitem{ben1983another}
Michael Ben-Or.
\newblock Another advantage of free choice (extended abstract) completely
  asynchronous agreement protocols.
\newblock In {\em Proceedings of the second annual ACM symposium on Principles
  of distributed computing}, pages 27--30, 1983.

\bibitem{bronson2013tao}
Nathan Bronson, Zach Amsden, George Cabrera, Prasad Chakka, Peter Dimov, Hui
  Ding, Jack Ferris, Anthony Giardullo, Sachin Kulkarni, Harry Li, et~al.
\newblock Tao:facebook’s distributed data store for the social graph.
\newblock In {\em 2013 USENIX Annual Technical Conference (USENIX ATC 13)},
  pages 49--60, 2013.

\bibitem{burrows2006chubby}
Mike Burrows.
\newblock The chubby lock service for loosely-coupled distributed systems.
\newblock In {\em Proceedings of the 7th symposium on Operating systems design
  and implementation}, pages 335--350, 2006.

\bibitem{cachin2011introduction}
Christian Cachin, Rachid Guerraoui, and Lu{\'\i}s Rodrigues.
\newblock {\em Introduction to reliable and secure distributed programming}.
\newblock Springer Science \& Business Media, 2011.

\bibitem{cachin2005random}
Christian Cachin, Klaus Kursawe, and Victor Shoup.
\newblock Random oracles in constantinople: Practical asynchronous byzantine
  agreement using cryptography.
\newblock {\em Journal of Cryptology}, 18(3):219--246, 2005.

\bibitem{cachin2016non}
Christian Cachin, Simon Schubert, and Marko Vukolic.
\newblock Non-determinism in byzantine fault-tolerant replication.
\newblock {\em arXiv preprint arXiv:1603.07351}, 2016.

\bibitem{camargos2007multicoordinated}
L{\'a}saro~Jonas Camargos, Rodrigo~Malta Schmidt, and Fernando Pedone.
\newblock Multicoordinated paxos.
\newblock In {\em Proceedings of the twenty-sixth annual ACM symposium on
  Principles of distributed computing}, pages 316--317, 2007.

\bibitem{carlson2013redis}
Josiah Carlson.
\newblock {\em Redis in action}.
\newblock Simon and Schuster, 2013.

\bibitem{charapko2021pigpaxos}
Aleksey Charapko, Ailidani Ailijiang, and Murat Demirbas.
\newblock Pigpaxos: Devouring the communication bottlenecks in distributed
  consensus.
\newblock In {\em Proceedings of the 2021 International Conference on
  Management of Data}, pages 235--247, 2021.

\bibitem{corbett2013spanner}
James~C Corbett, Jeffrey Dean, Michael Epstein, Andrew Fikes, Christopher
  Frost, Jeffrey~John Furman, Sanjay Ghemawat, Andrey Gubarev, Christopher
  Heiser, Peter Hochschild, et~al.
\newblock Spanner: Google’s globally distributed database.
\newblock {\em ACM Transactions on Computer Systems (TOCS)}, 31(3):1--22, 2013.

\bibitem{danezis2022narwhal}
George Danezis, Lefteris Kokoris-Kogias, Alberto Sonnino, and Alexander
  Spiegelman.
\newblock Narwhal and tusk: a dag-based mempool and efficient bft consensus.
\newblock In {\em Proceedings of the Seventeenth European Conference on
  Computer Systems}, pages 34--50, 2022.

\bibitem{dwork1988consensus}
Cynthia Dwork, Nancy Lynch, and Larry Stockmeyer.
\newblock Consensus in the presence of partial synchrony.
\newblock {\em Journal of the ACM (JACM)}, 35(2):288--323, 1988.

\bibitem{ezhilchelvan2001randomized}
Paul Ezhilchelvan, Achour Mostefaoui, and Michel Raynal.
\newblock Randomized multivalued consensus.
\newblock In {\em Fourth IEEE International Symposium on Object-Oriented
  Real-Time Distributed Computing. ISORC 2001}, pages 195--200. IEEE, 2001.

\bibitem{fischer1985impossibility}
Michael~J Fischer, Nancy~A Lynch, and Michael~S Paterson.
\newblock Impossibility of distributed consensus with one faulty process.
\newblock {\em Journal of the ACM (JACM)}, 32(2):374--382, 1985.

\bibitem{friedman2005simple}
Roy Friedman, Achour Mostefaoui, and Michel Raynal.
\newblock Simple and efficient oracle-based consensus protocols for
  asynchronous byzantine systems.
\newblock {\em IEEE Transactions on Dependable and Secure Computing},
  2(1):46--56, 2005.

\bibitem{gelashvili2021jolteon}
Rati Gelashvili, Lefteris Kokoris-Kogias, Alberto Sonnino, Alexander
  Spiegelman, and Zhuolun Xiang.
\newblock Jolteon and ditto: Network-adaptive efficient consensus with
  asynchronous fallback.
\newblock {\em arXiv preprint arXiv:2106.10362}, 2021.

\bibitem{protobuf}
Google.
\newblock Protocol buffers.
\newblock \url{https://developers.google.com/protocol-buffers/}, 2020.

\bibitem{aublin2015next}
Rachid Guerraoui, Nikola Knezevic, Vivien Quema, and Marko Vukolic.
\newblock The next 700 bft protocols.
\newblock {\em Proceedings of the 5th European conference on Computer systems},
  10:363–376, 2010.

\bibitem{hunt2010zookeeper}
Patrick Hunt, Mahadev Konar, Flavio~P Junqueira, and Benjamin Reed.
\newblock Zookeeper: Wait-free coordination for internet-scale systems.
\newblock In {\em 2010 USENIX Annual Technical Conference (USENIX ATC 10)},
  2010.

\bibitem{kogias2020hovercraft}
Marios Kogias and Edouard Bugnion.
\newblock Hovercraft: achieving scalability and fault-tolerance for
  microsecond-scale datacenter services.
\newblock In {\em Proceedings of the Fifteenth European Conference on Computer
  Systems}, pages 1--17, 2020.

\bibitem{kursawe2005optimistic}
Klaus Kursawe and Victor Shoup.
\newblock Optimistic asynchronous atomic broadcast.
\newblock In {\em International Colloquium on Automata, Languages, and
  Programming}, pages 204--215. Springer, 2005.

\bibitem{lamport2005generalized}
Leslie Lamport.
\newblock Generalized consensus and paxos.
\newblock 2005.

\bibitem{maccormick2004boxwood}
John MacCormick, Nick Murphy, Marc Najork, Chandramohan~A Thekkath, and Lidong
  Zhou.
\newblock Boxwood: Abstractions as the foundation for storage infrastructure.
\newblock In {\em OSDI}, volume~4, pages 8--8, 2004.

\bibitem{marandi2012multi}
Parisa~Jalili Marandi, Marco Primi, and Fernando Pedone.
\newblock Multi-ring paxos.
\newblock In {\em IEEE/IFIP International Conference on Dependable Systems and
  Networks (DSN 2012)}, pages 1--12. IEEE, 2012.

\bibitem{marandi2010ring}
Parisa~Jalili Marandi, Marco Primi, Nicolas Schiper, and Fernando Pedone.
\newblock Ring paxos: A high-throughput atomic broadcast protocol.
\newblock In {\em 2010 IEEE/IFIP International Conference on Dependable Systems
  \& Networks (DSN)}, pages 527--536. IEEE, 2010.

\bibitem{mazieres2007paxos}
David Mazieres.
\newblock Paxos made practical, 2007.

\bibitem{miller2016honey}
Andrew Miller, Yu~Xia, Kyle Croman, Elaine Shi, and Dawn Song.
\newblock The honey badger of bft protocols.
\newblock In {\em Proceedings of the 2016 ACM SIGSAC conference on computer and
  communications security}, pages 31--42, 2016.

\bibitem{moraru2013there}
Iulian Moraru, David~G Andersen, and Michael Kaminsky.
\newblock There is more consensus in egalitarian parliaments.
\newblock In {\em Proceedings of the Twenty-Fourth ACM Symposium on Operating
  Systems Principles}, pages 358--372, 2013.

\bibitem{mostefaoui2014signature}
Achour Mostefaoui, Hamouma Moumen, and Michel Raynal.
\newblock Signature-free asynchronous byzantine consensus with t< n/3 and o
  (n2) messages.
\newblock In {\em Proceedings of the 2014 ACM symposium on Principles of
  distributed computing}, pages 2--9, 2014.

\bibitem{nikolaou2015turtle}
Stavros Nikolaou and Robbert Van~Renesse.
\newblock Turtle consensus: Moving target defense for consensus.
\newblock In {\em Proceedings of the 16th Annual Middleware Conference}, pages
  185--196, 2015.

\bibitem{nikolaou2016moving}
Stavros Nikolaou and Robbert van Renesse.
\newblock Moving participants turtle consensus.
\newblock {\em arXiv preprint arXiv:1611.03562}, 2016.

\bibitem{oki1988viewstamped}
Brian~M Oki and Barbara~H Liskov.
\newblock Viewstamped replication: A new primary copy method to support
  highly-available distributed systems.
\newblock In {\em Proceedings of the seventh annual ACM Symposium on Principles
  of distributed computing}, pages 8--17, 1988.

\bibitem{ongaro2014search}
Diego Ongaro and John Ousterhout.
\newblock In search of an understandable consensus algorithm.
\newblock In {\em 2014 USENIX Annual Technical Conference USENIX ATC 14)},
  pages 305--319, 2014.

\bibitem{pan2021rabia}
Haochen Pan, Jesse Tuglu, Neo Zhou, Tianshu Wang, Yicheng Shen, Xiong Zheng,
  Joseph Tassarotti, Lewis Tseng, and Roberto Palmieri.
\newblock Rabia: Simplifying state-machine replication through randomization.
\newblock In {\em Proceedings of the ACM SIGOPS 28th Symposium on Operating
  Systems Principles}, pages 472--487, 2021.

\bibitem{rabin1983randomized}
Michael~O Rabin.
\newblock Randomized byzantine generals.
\newblock In {\em 24th annual symposium on foundations of computer science
  (sfcs 1983)}, pages 403--409. IEEE, 1983.

\bibitem{raynal2018fault}
Michel Raynal.
\newblock {\em Fault-tolerant message-passing distributed systems: an
  algorithmic approach}.
\newblock Springer, 2018.

\bibitem{rizvi2017canopus}
Sajjad Rizvi, Bernard Wong, and Srinivasan Keshav.
\newblock Canopus: A scalable and massively parallel consensus protocol.
\newblock In {\em Proceedings of the 13th International Conference on emerging
  Networking EXperiments and Technologies}, pages 426--438, 2017.

\bibitem{schroeder2006open}
Bianca Schroeder, Adam Wierman, and Mor Harchol-Balter.
\newblock Open versus closed: A cautionary tale.
\newblock USENIX, 2006.

\bibitem{spiegelman2019ace}
Alexander Spiegelman and Arik Rinberg.
\newblock Ace: Abstract consensus encapsulation for liveness boosting of state
  machine replication.
\newblock {\em arXiv preprint arXiv:1911.10486}, 2019.

\bibitem{tennage2022baxos}
Pasindu Tennage, Cristina Basescu, Eleftherios~Kokoris Kogias, Ewa Syta,
  Philipp Jovanovic, and Bryan Ford.
\newblock Baxos: Backing off for robust and efficient consensus.
\newblock {\em arXiv preprint arXiv:2204.10934}, 2022.

\bibitem{tollman2021epaxos}
Sarah Tollman, Seo~Jin Park, and John~K Ousterhout.
\newblock Epaxos revisited.
\newblock In {\em NSDI}, pages 613--632, 2021.

\bibitem{barcelona2008mencius}
Mao Yanhua, Junqueira Flavio~P., and Marzullo Keith.
\newblock Mencius: building efficient replicated state machines for wans.
\newblock In {\em 8th USENIX Symposium on Operating Systems Design and
  Implementation (OSDI 08)}, 2008.

\bibitem{zhang2009bounded}
Jialin Zhang and Wei Chen.
\newblock Bounded cost algorithms for multivalued consensus using binary
  consensus instances.
\newblock {\em Information Processing Letters}, 109(17):1005--1009, 2009.

\bibitem{zhao2018sdpaxos}
Hanyu Zhao, Quanlu Zhang, Zhi Yang, Ming Wu, and Yafei Dai.
\newblock Sdpaxos: Building efficient semi-decentralized geo-replicated state
  machines.
\newblock In {\em Proceedings of the ACM Symposium on Cloud Computing}, pages
  68--81, 2018.

\end{thebibliography}

\section{Appendix}\label{sec:other}

\subsection{Mandator Formal Proofs}\label{sec:formal-proof-mandator}
\textit{Proof of Availability}: A write($B$) succeeds when $B$ is created and sent to all the replicas, and only after receiving at least $n-f$ Mandator-votes. Since each replica saves $B$ in the $chains$ array, it is guaranteed that $B$ will persist as long as $n-f$ nodes are alive due to quorum intersection. Hence read($B$) eventually returns.

\textit{Proof of Causality}: Causality follows from the fact that each replica extends its chain of Mandator-batches, and because each replica creates the batch with round $r$ only after completing the write of the batch with round $r-1$.

\subsection{Sporades Example Execution}\label{sec:sporades-execution}

Figure~\ref{fig:consensus-regular} depicts an example execution of the synchronous mode of Sporades. The leader replica for the shown view is $p_1$. The remaining replicas are $p_2$, $p_3$, $p_4$, and $p_5$; there are only $5$ replicas in total, which means the threshold $f$ is $2$. In this execution, $p_5$ crashes immediately, whereas $p_3$ crashes during round $2$.

The rounds correspond to the round number withheld by each replica. The protocol is synchronous but wait-free; as soon as a quorum of $3$ messages are collected, the replicas can proceed to the next computational step.


$p_1$ broadcasts a <propose, $B_1$> message (line~\ref{S:line:18}) to which replicas (including itself) respond with a <vote, $B_1$> (line~\ref{S:line:25}). As soon as $3$ <vote, $B_1$> are collected (line~\ref{S:line:11}), $p_1$ commits $B_1$ (line~\ref{S:line:12}), builds a new block $B_2$ and broadcasts a <propose, $B_2$> message which includes the information that $B_1$ was committed (line~\ref{S:line:18}). Upon receiving <propose, $B_2$>, each replica commits $B_1$ (line~\ref{S:line:24}) before sending back a <vote ,$B_2$> message (line~\ref{S:line:25}). The exact same thing happens with <propose, $B_3$> and <vote, $B_3$> and $p_1$ commits $B_3$. At this moment, due to network asynchrony, $p_1$ fails at broadcasting the <propose, $B_4$> messages and all non-crashed replicas reach their timeout. All timed-out replicas broadcast a <timeout, $B_3$> message (line~\ref{S:line:27}-\ref{S:line:28}), $B_3$ being the highest block they know of. Upon receiving $n-f=3$ <timeout, $B_3$> messages, each replica enters the asynchronous mode of Sporades.


The asynchronous execution of Sporades is depicted in figure~\ref{fig:consensus-fallback}. In the asynchronous mode, each replica acts as a leader and builds its own chain. To improve readability, we only describe the events from the perspective of a single replica. The very beginning of the execution corresponds to the end of figure~\ref{fig:consensus-regular}: the receiving of <timeout, $B_3$> messages (line~\ref{A:line:1}) and the entering of the asynchronous path. We consider the execution performed by the process $p_2$. $p_2$ sets $isAsync$ to \textit{true} (line~\ref{A:line:2}). For $p_2$, the $block_{high}$ is $B_3$ so $p_2$ forms a new height $1$ asynchronous block whose parent is $B_3$ (line~\ref{A:line:6}) and broadcasts it in a <propose-async, $B_{f1}$> message (line~\ref{A:line:7}). $p_2$ receives $n-f=3$ <vote-async,$B_{f1}$> messages (line~\ref{A:line:15}), forms a new height $2$ block $B_{f2}$ whose parent is $B_{f1}$ (line~\ref{A:line:18}) and broadcasts it as a <propose-async,$B_{f2}$> message (line~\ref{A:line:19}). $p_2$ receives $n-f$ <vote-async,$B_{f2}$> messages (line~\ref{A:line:15}) and broadcasts an <asynchronous-complete, $B_{f2}$, $v_{cur}$, self.id> message (line~\ref{A:line:22}). In the meantime, $p_2$ also votes on the <propose-async> messages from $p_1$, $p_3$, $p_4$, $p_5$, and itself (line~\ref{A:line:10}), and they eventually complete each their own asynchronous steps. 


After receiving $n-f$ <asynchronous-complete, $B_{f2}$, v, p> messages, $p_2$ invokes a common-coin-flip($v$) that designates a leader (line~\ref{A:line:25}). If the leader returned from the common-coin-flip($v$) is a sender of one of the $n-f$ <asynchronous-complete, $B_{f2}$, v, p> messages received (line~\ref{A:line:26}), then that block (and all its causal history) is committed (line~\ref{A:line:27}). If the elected leader from the common-coin-flip($v$) does not fulfill that condition, but still has successfully broadcast a height $2$ block (from the perspective of non-leader replicas, they observe the height $2$ block from the elected leader), then that block becomes the new $block_{high}$ for $p_2$ (line~\ref{A:line:29}-\ref{A:line:31}). In any case, $p_2$ exits the asynchronous path by setting a timeout and sending a <vote, $v_{cur},r_{cur},block_{high}$> message to the leader of the next view (line~\ref{A:line:35}-\ref{A:line:36}). It should be noted that the leader returned by the common-coin-flip($v$) may be different from the designated synchronous path leader $L_{v}$ of view $v$.

\subsection{Sporades Formal Proofs}\label{sec:formal-proof}
\textbf{Definition}\\ \textbf{elected-asynchronous block}: We refer to an asynchronous block B\textsubscript{f} generated in view $v$ with height $2$ as an elected-asynchronous block, if the common-coin-flip($v$) returns the index of the proposer p\textsubscript{l} who generated B\textsubscript{f} in the view $v$ and if the asynchronous-complete for B\textsubscript{f} exists in the first $n-f$ asynchronous-complete messages received. An elected-asynchronous block is committed and treated as a synchronously-committed block.

\textbf{Proof of safety} 

We first show that for a given rank ($v$,$r$), there exists a unique block. In the following lemmas \ref{lem:1},\ref{lem:2},\ref{lem:3},\ref{lem:4},\ref{lem:5}, and \ref{lem:6} we consider different formations of blocks with the same rank.

\begin{lemma}\label{lem:1}
    Let $B$, $\tilde{B}$ be two synchronous blocks with rank $(v,r)$. Then $B$ and $\tilde{B}$ are the same.
\end{lemma}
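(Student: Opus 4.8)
The plan is to prove Lemma~\ref{lem:1} by contradiction, exploiting the fact that a synchronous block with rank $(v,r)$ can only be created by the unique designated leader $L_v$ of view $v$, and that a correct leader creates at most one block per round.

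First I would observe where synchronous blocks are created in Algorithm~\ref{algo:fallbacksteadyalgorithm}: the only place is line~\ref{S:line:17}, executed inside the handler at line~\ref{S:line:9}, and that handler is guarded by the condition \texttt{L}\textsubscript{v\textsubscript{cur}}\texttt{ == self}. Hence any synchronous block with view number $v$ must have been produced by the replica $L_v$. Since $B$ and $\tilde B$ both have rank $(v,r)$, they were both produced by $L_v$. Next I would argue that $L_v$ produces at most one synchronous block with round number $r$: when $L_v$ forms the block $B = (cmnds, v_{cur}, r_{cur}+1, block_{high})$ at line~\ref{S:line:17}, it has just set $v_{cur}, r_{cur}$ (line~\ref{S:line:13}), and the handler at line~\ref{S:line:9} fires on receiving $n-f$ matching vote messages with $(v,r) \geq (v_{cur}, r_{cur})$; after processing, $v_{cur}, r_{cur}$ are advanced, so the round number of each successive block proposed by $L_v$ within view $v$ is strictly increasing. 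Therefore $L_v$ emits exactly one block with rank $(v,r)$, and since replicas do not equivocate (stated in the system model), $B = \tilde B$.

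The main obstacle I anticipate is pinning down the monotonicity argument rigorously: I need to confirm that $r_{cur}$ at $L_v$ is non-decreasing across invocations and that each new block strictly increments it relative to the previous block of the same view, so that two distinct blocks of the same view cannot share a round number. This requires tracking the state transitions of $(v_{cur}, r_{cur})$ at $L_v$ through both the synchronous handlers (lines~\ref{S:line:9} and~\ref{S:line:20}) and, if $L_v$ ever entered the asynchronous mode, the exit path (lines~\ref{A:line:33}--\ref{A:line:36}); but since a view $v$ corresponds to a single leader term and the view number only increases, $L_v$ acts as synchronous leader for a contiguous stretch, and the round counter is monotone on that stretch. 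A secondary subtlety is the genesis block base case, which has a fixed rank distinct from any $(v,r)$ produced in the handlers, so it does not interfere. Once monotonicity is established, the uniqueness of $L_v$ together with non-equivocation closes the proof immediately.
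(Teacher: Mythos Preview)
Your proposal is correct but takes a different route from the paper. The paper argues via quorum intersection on the \emph{voter} side: it asserts (appealing to line~\ref{S:line:9}) that each of $B$ and $\tilde B$ gathered $n-f>n/2$ \texttt{<vote>} messages, and since a non-equivocating replica never votes for two distinct blocks of the same rank, the two majorities overlap on a replica that voted for both, forcing $B=\tilde B$. You instead argue on the \emph{creator} side: only the designated leader $L_v$ ever forms a synchronous block of view $v$ (by the guard \texttt{L}\textsubscript{v\textsubscript{cur}}\texttt{ == self}), and a single non-equivocating process emits at most one block per rank. Your version proves a slightly stronger statement---uniqueness already at creation, with no need for either block to have collected a quorum---whereas the paper's argument tacitly assumes both blocks received $n-f$ votes, a hypothesis that only becomes explicit later in Theorem~\ref{th:1}. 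Note that once you invoke non-equivocation of $L_v$ directly, the monotonicity-of-$r_{cur}$ discussion is not strictly needed, though it does no harm.
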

\begin{proof}
    Assume by contradiction $B$ and $\tilde{B}$ to be different. Then, according to algorithm~\ref{algo:fallbacksteadyalgorithm} line~\ref{S:line:9}, both $B$ and $\tilde{B}$ received $n-f>\frac{n}{2}$ votes. Since no replica can equivocate, i.e. no replica sends a <vote> for two different blocks with the same rank $(v,r)$, this is a contradiction. Thus $B = \tilde{B}$.
\end{proof}

\begin{lemma}\label{lem:2}
    Let $B$, $\tilde{B}$ be two elected-asynchronous blocks with rank $(v,r)$ and height $2$. Then $B$ and $\tilde{B}$ are the same.
\end{lemma}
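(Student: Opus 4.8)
The plan is to follow the same skeleton as the proof of Lemma~\ref{lem:1}, but to replace the ``no replica casts two conflicting votes'' step by the determinism of the common coin together with non-equivocation. First I would unfold the definition of an elected-asynchronous block applied to $B$: since $B$ has rank $(v,r)$ and height $2$ and is an elected-asynchronous block, its proposer $p_l$ satisfies common-coin-flip$(v)=l$, and the \texttt{asynchronous-complete} message for $B$ is among the first $n-f$ such messages received. Applying the same definition to $\tilde{B}$ (whose rank also has view component $v$) yields a proposer $p_{\tilde l}$ with common-coin-flip$(v)=\tilde l$.

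Next I would invoke property~(1) of the common-coin-flip primitive: every invocation of common-coin-flip$(v)$ returns the same integer. Hence $l=\tilde l$, so $B$ and $\tilde{B}$ were both proposed by the single replica $p_l$, and both are height-$2$ asynchronous blocks whose view number is $v$.

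It then suffices to show that $p_l$ proposes at most one height-$2$ asynchronous block in view $v$, and this is the step I expect to require the most care. The argument runs through Algorithm~\ref{algo:fallbackalgorithm}: a replica sets \texttt{isAsync} to \texttt{true} when it enters the asynchronous path for $v$ (line~\ref{A:line:2}) and clears it only while simultaneously incrementing $v_{\mathrm{cur}}$ when it exits (lines~\ref{A:line:33}--\ref{A:line:34}); consequently each replica executes the asynchronous subprotocol for a fixed view at most once. Within that single execution, $p_l$ broadcasts a height-$2$ block through exactly one firing of line~\ref{A:line:19}, triggered by collecting $n-f$ \texttt{vote-async} messages with $h=1$ (which, by line~\ref{A:line:10}, are unicast to $p_l$ as the proposer of the corresponding height-$1$ block). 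Because omission-faulty replicas do not equivocate, $p_l$ never emits two distinct blocks under the same $(\text{view},\text{height})$ tag, so the height-$2$ asynchronous block attributed to $p_l$ in view $v$ is unique. Therefore $B=\tilde{B}$, which in particular forces their ranks to coincide, consistently with the hypothesis that both have rank $(v,r)$.

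The subtlety I would double-check is the ``catching up by building on another process's height-$1$ block'' mechanism described in the prose: I want to confirm that $p_l$'s height-$2$ proposal on line~\ref{A:line:19} is formed from $p_l$'s own height-$1$ block rather than from several competing height-$1$ blocks, so that the resulting height-$2$ block (and its round) is well-defined. Given that \texttt{vote-async} messages are routed back to the proposer, this holds, and the lemma follows exactly as above.
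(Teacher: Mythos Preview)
Your proposal is correct and follows essentially the same route as the paper: both arguments use the determinism of common-coin-flip$(v)$ to conclude that $B$ and $\tilde{B}$ come from the same proposer, and then invoke non-equivocation to conclude they are identical. Your version is more explicit about why a replica issues at most one height-$2$ block per view (via the \texttt{isAsync} flag and the unicasting of \texttt{vote-async} back to the proposer), whereas the paper simply appeals to non-equivocation in one line; the extra care you take with the catching-up mechanism is sound and does not change the underlying strategy.
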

\begin{proof}
    Assume by contradiction $B$ and $\tilde{B}$ to be different. Then, according to algorithm~\ref{algo:fallbackalgorithm} line~\ref{A:line:25}-\ref{A:line:27}, both leaders who sent $B$ and $\tilde{B}$ in an <asynchronous-complete> message were elected with the same common-coin-flip($v$). Since no replica can equivocate, i.e. no replica sends <asynchronous-complete> message for two different blocks with the same rank $(v,r)$, and because the common-coin-flip($v$) returns a unique leader for each $v$, this is a contradiction. Thus $B = \tilde{B}$.
\end{proof}

\begin{lemma}\label{lem:3}
    Let $B$, $\tilde{B}$ be two asynchronous blocks with rank $(v,r)$ and height $1$, such that both blocks are parents of an elected-asynchronous block of the same view $v$. Then $B$ and $\tilde{B}$ are the same.
\end{lemma}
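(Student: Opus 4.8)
The plan is to reduce this statement to Lemma~\ref{lem:2}, which already establishes the uniqueness of the height-$2$ elected-asynchronous block of a given rank. The key observation is that a height-$1$ asynchronous block is only ever the parent of height-$2$ asynchronous blocks whose rank is completely determined by the rank of that height-$1$ parent; so if two height-$1$ blocks each serve as the parent of an elected-asynchronous block of view $v$, those two elected-asynchronous blocks are forced to have the same rank, and Lemma~\ref{lem:2} then collapses them into one.

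Concretely, I would proceed as follows. First, let $C$ (respectively $\tilde{C}$) denote an elected-asynchronous block of view $v$ having $B$ (respectively $\tilde{B}$) as its parent link; such blocks exist by hypothesis. Second, I appeal to the block-formation rule of algorithm~\ref{algo:fallbackalgorithm}, line~\ref{A:line:18}: a height-$2$ asynchronous block formed on top of a height-$1$ parent with round $r$ in view $v$ has the form $(cmnds, v, r+1, \cdot, 2)$, hence rank $(v, r+1)$ and height $2$. Since $B$ and $\tilde{B}$ both have rank $(v,r)$, it follows that $C$ and $\tilde{C}$ are both elected-asynchronous blocks with rank $(v, r+1)$ and height $2$. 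Third, Lemma~\ref{lem:2} applies directly and yields $C = \tilde{C}$. Finally, since a block uniquely determines its parent link and $C$ has parent $B$ while $\tilde{C}$ has parent $\tilde{B}$, we conclude $B = \tilde{B}$, which is the claim.

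I expect the only real subtlety to be the rank bookkeeping: I must confirm carefully that a height-$2$ block's round is exactly one more than that of its height-$1$ parent, and that its view number is unchanged, so that $C$ and $\tilde{C}$ genuinely collide at rank $(v, r+1)$ before Lemma~\ref{lem:2} can be invoked. The other potential worry — that $C$ and $\tilde{C}$ might have been proposed by different replicas — is not an obstacle here, since Lemma~\ref{lem:2} already absorbs it: the determinism of common-coin-flip($v$) forces both elected blocks to have the same proposer, and non-equivocation of that proposer's asynchronous-complete messages for a fixed rank finishes the argument. Thus essentially all of the work is in lining up the hypotheses of Lemma~\ref{lem:2}.
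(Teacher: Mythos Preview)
Your proposal is correct and follows essentially the same approach as the paper: both arguments observe that the height-$2$ children of $B$ and $\tilde{B}$ must have rank $(v,r+1)$ by algorithm~\ref{algo:fallbackalgorithm} line~\ref{A:line:18}, invoke Lemma~\ref{lem:2} to identify those children, and then conclude that the parents coincide. The paper phrases this as a contradiction argument while you give it directly, but the substance is identical.
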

\begin{proof}
    Assume by contradiction $B$ and $\tilde{B}$ to be different. Then both $B$ and $\tilde{B}$ can have a distinct child height $2$ elected-asynchronous block with rank $(v,r+1)$ (see algorithm~\ref{algo:fallbackalgorithm} line~\ref{A:line:18}). According to lemma~\ref{lem:2}, this is a contradiction. Thus $B = \tilde{B}$.
\end{proof}

\begin{lemma}\label{lem:4}
    Let $B$ be a synchronous block which receives $n-f$ <vote>s. Then there cannot exist a height $1$ asynchronous block $\tilde{B}$ that is a parent of a height $2$ elected-asynchronous block where $B$ and $\tilde{B}$ have rank $(v,r)$.
\end{lemma}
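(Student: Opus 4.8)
The plan is to argue by contradiction, combining the quorum-intersection idea used in Lemma~\ref{lem:1} with the fact that the pair $(v\textsubscript{cur},r\textsubscript{cur})$ kept by every replica is monotone. Suppose such a height~$1$ asynchronous block $\tilde{B}$ with rank $(v,r)$ exists. Because $\tilde{B}$ is the parent of a height~$2$ elected-asynchronous block, some replica actually assembled that height~$2$ block, which by Algorithm~\ref{algo:fallbackalgorithm} lines~\ref{A:line:15}--\ref{A:line:19} required it to collect $n-f$ messages <vote-async, $\tilde{B}$, 1>. By hypothesis $B$ collected $n-f$ <vote> messages at its rank $(v,r)$, and inspecting Algorithm~\ref{algo:fallbacksteadyalgorithm} lines~\ref{S:line:20}--\ref{S:line:25} these are all emitted by the synchronous vote step. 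Since $n\ge 2f+1$ gives $2(n-f)=n+(n-2f)>n$, some replica $p$ is in both quorums. I then record $p$'s state at its two voting events: at the synchronous <vote> for $B$ (line~\ref{S:line:25}), $p$ is in synchronous mode, has just checked $B.\mathrm{rank}=(v,r)>(v\textsubscript{cur},r\textsubscript{cur})$, and by lines~\ref{S:line:22}--\ref{S:line:23} afterwards satisfies $(v\textsubscript{cur},r\textsubscript{cur})=(v,r)$; at the <vote-async, $\tilde{B}$, 1> (line~\ref{A:line:10}), $p$ is in asynchronous mode, checks $\tilde{B}.v=v\textsubscript{cur}$ (hence $v\textsubscript{cur}=v$, as $\tilde{B}$ has view $v$) and $\mathrm{rank}(\tilde{B})=(v,r)>(v\textsubscript{cur},r\textsubscript{cur})$ (hence $r\textsubscript{cur}<r$ there).

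The core lemma I would establish first, by going through every assignment to $v\textsubscript{cur}$ and $r\textsubscript{cur}$ in Algorithms~\ref{algo:fallbacksteadyalgorithm} and~\ref{algo:fallbackalgorithm}, is twofold: (a) $(v\textsubscript{cur},r\textsubscript{cur})$ never decreases lexicographically at any replica --- every such assignment is either guarded by a ``$\ge$''/``$>$'' rank comparison, uses $\max(r\textsubscript{cur},\cdot)$ for the round, or is a strict increment of the view; and (b) the asynchronous mode is entered only via lines~\ref{A:line:1}--\ref{A:line:7}, during one asynchronous episode the view $v\textsubscript{cur}$ stays constant, and the episode is left only via lines~\ref{A:line:33}--\ref{A:line:34}, where $v\textsubscript{cur}$ is incremented by one (lines~\ref{A:line:27}--\ref{A:line:31} set $v\textsubscript{cur}$ to the rank of a height~$2$ block of that same view, so they do not change the view). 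A consequence I will use: if $p$ has ever been inside the asynchronous episode of view $v$ and has left it, then $p$'s rank is $\ge(v+1,0)$ from that point on, and $p$ can never be inside an asynchronous episode of view $v$ again (re-entering would need timeout messages of a view $\ge v\textsubscript{cur}\ge v+1$).

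It then remains to split on the order of the two events at $p$. If the synchronous <vote> for $B$ happened first: immediately after it $p$'s rank was $(v,r)$, so by (a) it is $\ge(v,r)$ at the later asynchronous vote, contradicting the requirement $r\textsubscript{cur}<r$ (the rank there equals $(v,r\textsubscript{cur})$). If the <vote-async, $\tilde{B}$, 1> happened first: at that moment $p$ was inside the asynchronous episode of view $v$ (since $v\textsubscript{cur}=v$ and $\tilde{B}.v=v$); to later perform the synchronous vote $p$ must be back in synchronous mode, hence must have left that episode, after which its rank is $\ge(v+1,0)$; but the synchronous vote requires $B.\mathrm{rank}=(v,r)>(v\textsubscript{cur},r\textsubscript{cur})$, which is impossible once the rank has reached $(v+1,0)$. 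Both cases contradict, so no such $\tilde{B}$ exists.

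I expect the main obstacle to be claim~(b) --- establishing cleanly that a replica's sojourns in the asynchronous mode are well-separated episodes, that $v\textsubscript{cur}$ is the triggering timeout view throughout an episode, and that lines~\ref{A:line:27}--\ref{A:line:31} (in particular the B\textsubscript{fall} branch) never lower the view below it, so that line~\ref{A:line:33} genuinely yields $v+1$. A secondary bookkeeping point to nail down is that the $n-f$ <vote>s credited to the synchronous block $B$ are precisely the line~\ref{S:line:25} votes carrying $B$ at rank $(v,r)$, and not the end-of-fallback votes of line~\ref{A:line:35}, which carry a strictly larger rank and so cannot be counted at rank $(v,r)$. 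Once these invariants are in hand, the two-case argument is immediate.
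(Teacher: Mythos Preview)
Your proposal is correct and follows the same core idea as the paper: both argue by contradiction via quorum intersection on the set of $n-f$ replicas that sent a synchronous \texttt{<vote>} for $B$ and the set of $n-f$ replicas that sent a \texttt{<vote-async>} for $\tilde{B}$, and both use the rank guard at line~\ref{A:line:9} to rule out a replica belonging to both sets.

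The difference is one of rigor and decomposition. The paper's proof is three sentences: it asserts that the $n-f$ synchronous voters ``saw $B$ before $\tilde{B}$'' and that the $n-f$ asynchronous voters ``could not have seen $B$ before'', then invokes $n-f>n/2$. This implicitly treats only the ordering ``synchronous vote first, asynchronous vote second'', relying on the reader to accept that once a replica has rank $(v,r)$ it cannot later pass the check at line~\ref{A:line:9}. Your version makes this explicit by (i) isolating a monotonicity invariant for $(v\textsubscript{cur},r\textsubscript{cur})$, (ii) separately treating the reverse ordering, where the replica votes asynchronously first and would then have to exit the asynchronous episode---incrementing its view past $v$---before it could cast a synchronous vote, and (iii) flagging the delicate points (the $B\textsubscript{fall}$ branch and the provenance of the $n-f$ votes at rank $(v,r)$). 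What your approach buys is a genuinely complete case analysis; what the paper's approach buys is brevity, at the cost of leaving the second temporal ordering to the reader.
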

\begin{proof}
    Assume by way of contradiction that $\tilde{B}$ exists. Because $B$ received $n-f$ votes, at least $n-f$ replicas saw $B$ before $\tilde{B}$ (see algorithm~\ref{algo:fallbacksteadyalgorithm} line~\ref{S:line:20}). $\tilde{B}$ has received $n-f$ <vote-async> (see algorithm~\ref{algo:fallbackalgorithm} line~\ref{A:line:15}) from replicas who could not have seen $B$ before (see algorithm~\ref{algo:fallbackalgorithm} line~\ref{A:line:9}). Because $n-f>\frac{n}{2}$, this is a contradiction. Hence $\tilde{B}$ does not exist.
\end{proof}

\begin{lemma}\label{lem:5}
    Let $B$ be a synchronous block which receives $n-f$ votes with rank $(v,r)$. Then there cannot exist an elected-asynchronous block $\tilde{B}$ of height $2$ with rank $(v,r)$.
\end{lemma}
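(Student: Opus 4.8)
The plan is to assume for contradiction that both $B$ and $\tilde B$ exist and to exhibit a single replica forced into two incompatible local steps, in the same spirit as the proof of Lemma~\ref{lem:4}. First I would record what each object certifies. Since $\tilde B$ is an elected-asynchronous block of height $2$ with rank $(v,r)$, some replica must have broadcast an <asynchronous-complete> message for $\tilde B$ (line~\ref{A:line:22}), and this is possible only after collecting $n-f$ <vote-async, $\tilde B$, 2> messages (line~\ref{A:line:15}); hence at least $n-f$ replicas sent a <vote-async> for $\tilde B$, and by lines~\ref{A:line:8}--\ref{A:line:9} each of them, at the instant it did so, was in the asynchronous mode, had $v_{cur}=v$ (to match $\tilde B.v$), and had $(v,r)=\mathrm{rank}(\tilde B) > (v_{cur},r_{cur})$, i.e.\ $r_{cur}<r$. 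On the other hand, since the synchronous block $B$ of rank $(v,r)$ received $n-f$ <vote> messages, at least $n-f$ replicas processed <propose, $B$> at line~\ref{S:line:20}; each of them, while in the synchronous mode, set $(v_{cur},r_{cur})$ to $(v,r)$ (line~\ref{S:line:22}) and $block_{high}$ to $B$ (line~\ref{S:line:23}).

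Because $n\geq 2f+1$, these two sets of size $n-f$ intersect, so I would fix a replica $p$ that both voted for $B$ and sent a <vote-async> for $\tilde B$, and rule out both possible orderings of these two actions. Two structural facts are needed. (i) The rank $(v_{cur},r_{cur})$ is non-decreasing over time at every replica: each assignment to it is either performed under a guard that compares $(v,r)$ against the current $(v_{cur},r_{cur})$, or is the $\max$ taken on entering the asynchronous mode (line~\ref{A:line:1}), or sets it to the rank of an elected/fallback height-$2$ block of the current view (lines~\ref{A:line:27}--\ref{A:line:28} and~\ref{A:line:29}--\ref{A:line:31}). (ii) The isAsync flag is raised only at line~\ref{A:line:2} and cleared only at line~\ref{A:line:34}, which is immediately preceded by the increment of $v_{cur}$ at line~\ref{A:line:33}; in particular every exit from the asynchronous mode strictly increases $v_{cur}$.

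Given (i) and (ii), both orderings fail. If $p$ voted for $B$ before sending its <vote-async> for $\tilde B$, then from that moment $(v_{cur},r_{cur})\geq (v,r)$ by (i); since $p$'s later <vote-async> for $\tilde B$ requires $v_{cur}=v$ (line~\ref{A:line:8}), it still has $(v_{cur},r_{cur})\geq (v,r)$ at that time, so the guard $(v,r)=\mathrm{rank}(\tilde B) > (v_{cur},r_{cur})$ at line~\ref{A:line:9} cannot hold, a contradiction. If instead $p$ sent its <vote-async> for $\tilde B$ first, then at that moment it was in the asynchronous mode with $v_{cur}=v$; to vote for $B$ at line~\ref{S:line:20} afterwards it must first return to the synchronous mode, which by (ii) can only happen through line~\ref{A:line:34} and hence leaves $v_{cur}\geq v+1$, and by (i) $v_{cur}$ never drops back to $v$; so when <propose, $B$> (of rank $(v,r)$) arrives, $(v,r)\not> (v_{cur},r_{cur})$ and line~\ref{S:line:20} cannot fire, again a contradiction. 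Hence no such $p$ exists, and therefore $\tilde B$ cannot exist.

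I expect the main obstacle to be part (i) of the structural claim at lines~\ref{A:line:27}--\ref{A:line:28} and~\ref{A:line:29}--\ref{A:line:31}: I must argue that the elected (resp.\ fallback) height-$2$ block adopted there has rank at least the replica's current $(v_{cur},r_{cur})$. This should follow because such a block carries the current view and a round strictly above the $r_{cur}$ of the replicas that voted on its height-$1$ parent (line~\ref{A:line:9} together with the round arithmetic of lines~\ref{A:line:6} and~\ref{A:line:18}), but making it airtight calls for a small auxiliary claim relating the rounds of the height-$1$ and height-$2$ blocks produced inside one asynchronous episode to the $r_{cur}$ of the adopting replica. A secondary subtlety --- that $p$ might interleave several asynchronous episodes between its two actions --- is handled for free by (ii), since each such episode ends with $v_{cur}$ strictly increased.
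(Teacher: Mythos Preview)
Your proof is correct and follows the same quorum-intersection strategy as the paper's proof of Lemma~\ref{lem:5}, which is essentially a verbatim reuse of the argument for Lemma~\ref{lem:4}. The paper simply asserts that the $n-f$ voters of $B$ and the $n-f$ async-voters of $\tilde B$ cannot overlap; your explicit case split on the temporal ordering of the two votes, together with the rank-monotonicity and view-increment-on-exit invariants, makes rigorous exactly the step the paper leaves implicit.
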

\begin{proof}
    Assume by way of contradiction that $\tilde{B}$ exists. Because $B$ received $n-f$ votes, at least $n-f$ replicas saw $B$ before $\tilde{B}$ (see algorithm~\ref{algo:fallbacksteadyalgorithm} line~\ref{S:line:20}). $\tilde{B}$ has received $n-f$ <vote-async> (see algorithm~\ref{algo:fallbackalgorithm} line~\ref{A:line:15}) from replicas who could not have seen $B$ before (see algorithm~\ref{algo:fallbackalgorithm} line~\ref{A:line:9}). Because $n-f>\frac{n}{2}$, this is a contradiction. Hence $\tilde{B}$ does not exist.
\end{proof}

\begin{lemma}\label{lem:6}
    Let $B$ be a height $1$ asynchronous block that is the parent of height 2 elected-asynchronous block in the same view. Then there cannot exist a height $2$ elected-asynchronous block $\tilde{B}$ with rank $(v,r)$.
\end{lemma}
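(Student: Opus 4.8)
The plan is to argue by contradiction, leaning on the uniqueness of \texttt{common-coin-flip}($v$) together with the fact that a single process produces at most one height-$2$ asynchronous block per view. Here $(v,r)$ denotes the rank of $B$, exactly as in Lemmas~\ref{lem:4} and~\ref{lem:5}. So suppose that a height-$2$ elected-asynchronous block $\tilde{B}$ with rank $(v,r)$ exists, and let $B'$ be the height-$2$ elected-asynchronous block whose parent is $B$ (it exists by hypothesis). First I would pin down $\mathrm{rank}(B')$: since $B'$ is assembled from $B$ at line~\ref{A:line:18} of algorithm~\ref{algo:fallbackalgorithm}, and the trigger at line~\ref{A:line:15} forces the builder's $v_{\mathrm{cur}}$ to equal $B.v = v$ while the new block is created with round $B.r+1 = r+1$, we get $\mathrm{rank}(B') = (v,r+1)$.

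Next I would invoke the definition of an elected-asynchronous block. Both $B'$ (rank $(v,r+1)$) and $\tilde{B}$ (rank $(v,r)$) are height-$2$ blocks generated in view $v$, and both are elected, so \texttt{common-coin-flip}($v$) returns the index of $B'$'s proposer \emph{and} the index of $\tilde{B}$'s proposer. Since \texttt{common-coin-flip}($v$) returns a single fixed value for the view $v$, the two blocks must have the same proposer, call it $p_\ell$ --- this step is parallel to the coin-uniqueness arguments in Lemmas~\ref{lem:2} and~\ref{lem:3}.

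The crux is then to show $p_\ell$ cannot have produced two distinct height-$2$ asynchronous blocks within view $v$, which forces $B' = \tilde{B}$ and contradicts $\mathrm{rank}(B') = (v,r+1) \neq (v,r) = \mathrm{rank}(\tilde{B})$. For this I would trace the control flow of algorithm~\ref{algo:fallbackalgorithm}: a process enters the asynchronous mode at most once per view --- the handler at lines~\ref{A:line:1}--\ref{A:line:7} fires once, and on exit line~\ref{A:line:33} increments $v_{\mathrm{cur}}$, so a fresh trigger with the same $v$ is impossible --- hence it broadcasts exactly one height-$1$ $<$propose-async$>$. Because a $<$vote-async$>$ for a block is delivered only to that block's proposer (line~\ref{A:line:10}), $p_\ell$ can collect $n-f$ height-$1$ $<$vote-async$>$ messages for at most one block, so the height-$2$ handler (lines~\ref{A:line:15}--\ref{A:line:19}) broadcasts at most one height-$2$ block. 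Therefore $B' = \tilde{B}$, the desired contradiction, and no such $\tilde{B}$ exists.

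I expect the main obstacle to be that last step --- making ``each process produces at most one height-$2$ asynchronous block per view'' fully rigorous --- since it rests on several interlocking facts about the asynchronous protocol (single entry per view, a single height-$1$ proposal, and the routing of $<$vote-async$>$ to proposers) rather than a single clean quorum intersection; note in particular that the quorum-intersection argument used for Lemmas~\ref{lem:4} and~\ref{lem:5} does not transfer here, because voting on a $<$propose-async$>$ does not advance $r_{\mathrm{cur}}$, so a replica can legitimately vote-async for both a height-$1$ and a height-$2$ block of the same rank. Everything else (computing $\mathrm{rank}(B')$ and appealing to coin uniqueness) should be routine.
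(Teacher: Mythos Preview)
Your proof is correct, but it takes a genuinely different route from the paper's.

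The paper argues by quorum intersection on \emph{timeout} messages: the height-$1$ parent of $\tilde{B}$ has rank $(v,r-1)$ and was built after collecting $n-f$ timeouts in view $v$, while $B$ itself (rank $(v,r)$) was also built after collecting $n-f$ timeouts in view $v$; since each replica sends at most one timeout per view, the two quorums must overlap in a replica whose single timeout would have to account for two different resulting rounds, a contradiction. So the paper does use a quorum-intersection argument --- just not on $\langle$vote-async$\rangle$ messages (where, as you correctly note, it would fail) but on $\langle$timeout$\rangle$ messages.

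You instead lift the coin-uniqueness machinery of Lemma~\ref{lem:2}: both $B'$ (the height-$2$ child of $B$, rank $(v,r+1)$) and $\tilde{B}$ (rank $(v,r)$) are elected in view $v$, hence share a proposer $p_\ell$; you then trace the asynchronous control flow to conclude that $p_\ell$ emits at most one height-$2$ block per view, forcing $B'=\tilde{B}$ and contradicting the rank mismatch. That last step is essentially the content of the paper's Theorem~\ref{th:5}, so in effect you are deriving Lemma~\ref{lem:6} as a corollary of a slightly sharpened Lemma~\ref{lem:2} together with Theorem~\ref{th:5}. The payoff of your route is uniformity: once you have ``at most one elected height-$2$ block per view,'' Lemmas~\ref{lem:2}, \ref{lem:3}, \ref{lem:6} and Theorem~\ref{th:5} all fall out at once. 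The payoff of the paper's route is that it stays within the quorum-intersection idiom of the neighbouring lemmas and avoids reasoning about the asynchronous-mode state machine.
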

\begin{proof}
    Assume by way of contradiction that $\tilde{B}$ exists. The height $1$ parent block of $\tilde{B}$ had rank $(v,r-1)$ (see algorithm~\ref{algo:fallbackalgorithm} line~\ref{A:line:18}) and was created after receiving $n-f$ timeout messages with rank $(v,r-2)$ (see algorithm~\ref{algo:fallbackalgorithm} line~\ref{A:line:6}). On the other hand, $B$ was created after receiving $n-f$ timeout messages with rank $(v,r-1)$. Because $n-f>\frac{n}{2}$, this is a contradiction. Hence $\tilde{B}$ does not exist.
\end{proof}

\begin{theorem}\label{th:1}
    Let $B$ and $\tilde{B}$ be two blocks with rank $(v,r)$. Each of $B$ and $\tilde{B}$ can be of type: (1) synchronous block which collects at least $n-f$ votes or (2) elected-asynchronous block of height $2$ or (3) height $1$ asynchronous block which is a parent of an elected-asynchronous block. Then $\tilde{B}$ and $B$ are the same.
\end{theorem}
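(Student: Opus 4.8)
The plan is to prove Theorem~\ref{th:1} by exhaustive case analysis over the three possible types of $B$ and $\tilde B$, invoking the six preceding lemmas, each of which already rules out one mixed or matched configuration. First I would enumerate the (unordered) pairs of types: (i) both synchronous, (ii) both elected-asynchronous of height~$2$, (iii) both height~$1$ asynchronous parents of an elected-asynchronous block, (iv) synchronous paired with height~$1$ asynchronous parent, (v) synchronous paired with elected-asynchronous height~$2$, and (vi) height~$1$ asynchronous parent paired with elected-asynchronous height~$2$. These six cases are in bijection with Lemmas~\ref{lem:1}--\ref{lem:6} respectively, so the body of the proof is essentially a dispatch table.

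Concretely, I would argue: in case (i), Lemma~\ref{lem:1} gives $B=\tilde B$; in case (ii), Lemma~\ref{lem:2}; in case (iii), Lemma~\ref{lem:3}. For the mixed cases, the relevant lemmas show that the configuration is impossible, i.e. two blocks of those two types at the same rank cannot coexist: case (iv) is excluded by Lemma~\ref{lem:4}, case (v) by Lemma~\ref{lem:5}, and case (vi) by Lemma~\ref{lem:6}. Since in every case we either conclude $B=\tilde B$ directly or derive a contradiction with the hypothesis that both blocks exist with rank $(v,r)$, the theorem follows. I would also note for completeness that a block of height~$1$ that is \emph{not} the parent of an elected-asynchronous block, and a synchronous block that does \emph{not} collect $n-f$ votes, are explicitly outside the scope of the three enumerated types, so no further cases arise.

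The one place requiring a little care — and the main (mild) obstacle — is making sure the lemma statements are applied with the correct orientation of ranks. Lemma~\ref{lem:6}, for instance, is phrased so that $B$ is the height~$1$ parent and $\tilde B$ the height~$2$ elected-asynchronous block with \emph{the same} rank $(v,r)$, whereas the natural parent/child relation shifts the round by one; I would double-check that the instantiation in case (vi) matches the lemma's hypotheses (the lemma is stated precisely for this, tracing the parent of $\tilde B$ back to rank $(v,r-1)$ and the timeout quorum to rank $(v,r-2)$). Similarly, in cases (iv) and (v) I must confirm that "synchronous block with rank $(v,r)$ collecting $n-f$ votes" is exactly the hypothesis of Lemmas~\ref{lem:4} and~\ref{lem:5}. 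Beyond this bookkeeping the proof is a routine finite case split, so I would keep it to a short paragraph that simply names the applicable lemma in each of the six cases.
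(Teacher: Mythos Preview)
Your proposal is correct and matches the paper's approach exactly: the paper's proof of Theorem~\ref{th:1} is the single sentence ``This holds directly from Lemma~\ref{lem:1},~\ref{lem:2},~\ref{lem:3},~\ref{lem:4},~\ref{lem:5} and~\ref{lem:6},'' and your six-way case split is precisely the unpacking of that sentence. Your extra care about the orientation of ranks in Lemma~\ref{lem:6} and the hypotheses of Lemmas~\ref{lem:4}--\ref{lem:5} is appropriate diligence but does not diverge from the paper's intent.
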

\begin{proof}
    This holds directly from Lemma~\ref{lem:1},~\ref{lem:2},~\ref{lem:3},~\ref{lem:4},~\ref{lem:5} and~\ref{lem:6}.
\end{proof}

\begin{theorem}\label{th:2}
    Let $B$ and $\tilde{B}$ be two adjacent blocks, then $\tilde{B}.r = B.r+1$ and $\tilde{B}.v \geq B.v$.
\end{theorem}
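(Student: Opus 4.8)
The plan is to read ``adjacent'' as ``$B$ is the block named by the parent link of $\tilde B$'' and to prove the claim by inspecting the only three places in Algorithms~\ref{algo:fallbacksteadyalgorithm} and~\ref{algo:fallbackalgorithm} where a block with a parent link is ever created: (i)~the synchronous block $\tilde B=(cmnds,v\textsubscript{cur},r\textsubscript{cur}+1,block\textsubscript{high})$ formed by the leader inside the quorum-of-votes handler at lines~\ref{S:line:9}--\ref{S:line:18}; (ii)~the height-$1$ asynchronous block $B\textsubscript{f1}=(cmnds,v\textsubscript{cur},r\textsubscript{cur}+1,block\textsubscript{high},1)$ formed on line~\ref{A:line:6} on entry to the asynchronous mode; and (iii)~the height-$2$ asynchronous block $B\textsubscript{f2}=(cmnds,v\textsubscript{cur},B.r+1,B,2)$ formed on line~\ref{A:line:18}. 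The propose / propose-async receivers (lines~\ref{S:line:20},~\ref{A:line:8}), the timeout handler (line~\ref{S:line:27}), and the completion handler (line~\ref{A:line:24}) never mint blocks, so this list is exhaustive. In every case the new block's view is exactly $v\textsubscript{cur}$ and its round is (a parent-candidate round or $r\textsubscript{cur}$)$+1$, so the theorem reduces to two sub-claims at each site: the parent's view is $\le v\textsubscript{cur}$, and the parent's round equals the quantity being incremented.

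The view halves, and site~(iii) entirely, come from easy monotonicity invariants that I would establish by a routine induction over execution steps: at every correct replica and at all times $r\textsubscript{cur}\ge block\textsubscript{high}.r$ and $v\textsubscript{cur}\ge block\textsubscript{high}.v$ (every assignment to $block\textsubscript{high}$ in either algorithm is immediately accompanied by an assignment making $r\textsubscript{cur}$ and $v\textsubscript{cur}$ at least the round and view of the new $block\textsubscript{high}$), and consequently every $\langle \mathrm{vote},v,r,b\rangle$ and $\langle \mathrm{timeout},v,r,b\rangle$ sent by a correct replica has $b.r\le r$ and $b.v\le v$ (the message carries $block\textsubscript{high}$ with the current $v\textsubscript{cur},r\textsubscript{cur}$ as its header). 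Given these: at site~(i) $\tilde B.v=v\textsubscript{cur}\ge block\textsubscript{high}.v=B.v$; at site~(ii) $B\textsubscript{f1}.v=v\textsubscript{cur}\ge block\textsubscript{high}.v=B.v$; and at site~(iii) the receive-guard ``$B.v = v\textsubscript{cur}$'' of the vote-async handler (line~\ref{A:line:15}) forces $B.v=v\textsubscript{cur}$, so $B\textsubscript{f2}.v=v\textsubscript{cur}=B.v$ and, directly from line~\ref{A:line:18}, $B\textsubscript{f2}.r=B.r+1$ --- site~(iii) is finished.

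What remains is the round equality at sites~(i) and~(ii): after $block\textsubscript{high}$ is updated to the highest-\emph{rank} block carried in the received quorum, the replica's round must match $block\textsubscript{high}.r$. At site~(i), line~\ref{S:line:13} sets $r\textsubscript{cur}=r$, so I need the highest-rank among the $n-f$ vote-carried blocks (all with header $(v,r)$) to have round exactly $r$. At site~(ii), entering the asynchronous mode sets $r\textsubscript{cur}=\max(r\textsubscript{cur}^{\mathrm{old}},block\textsubscript{high}.r)$, so I need $r\textsubscript{cur}^{\mathrm{old}}\le block\textsubscript{high}.r$, i.e.\ some timeout carried a block of round $\ge r\textsubscript{cur}^{\mathrm{old}}$. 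Both are quorum-intersection statements in the style of Lemmas~\ref{lem:1}--\ref{lem:6}: the $n-f$ messages forming the quorum must intersect the $\ge n-f$ replicas that ``witnessed'' the round in question (for site~(i), those that last accepted a round-$r$ propose or last exited the asynchronous mode adopting a round-$r$ block; for site~(ii), those whose state made $r\textsubscript{cur}^{\mathrm{old}}$ reachable), and such a witness contributes a message carrying a block of the required round. Propagating this across views --- a round never decreases when the view increases, because the timeout quorum that closes a view and seeds all later views intersects the quorum that advanced the round inside it --- closes both sites and hence the theorem.

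The main obstacle, then, is not the case analysis (sites~(i) and~(iii) are short and all the view bounds are trivial) but establishing the round-consistency lemma of the previous paragraph: identifying exactly which set of correct replicas a vote/timeout quorum must intersect, and showing the highest-\emph{rank} block of such a quorum necessarily has the expected \emph{round} even though an adversarial network can keep some replicas in stale views. I expect this to be the technically heaviest step of the safety argument, and I would prove it by strong induction on the view number, reusing the non-equivocation assumption and quorum intersection exactly as in Lemmas~\ref{lem:1}--\ref{lem:6}.
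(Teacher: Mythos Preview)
Your three-site case split is exactly the paper's decomposition (its Cases~1--3 are your sites~(i),~(ii),~(iii)), and your treatment of site~(iii) and of the view inequality matches the paper's one-line ``the view numbers are non-decreasing according to the algorithm.'' The difference is one of rigor, not strategy: you correctly isolate as the nontrivial obligation that after $block\textsubscript{high}$ is updated at site~(i) or~(ii) one must have $r\textsubscript{cur}=block\textsubscript{high}.r$, and you propose a cross-view quorum-intersection lemma to discharge it. The paper does not prove this at all --- in each case it simply writes that the new block is created ``with round $r+1$,'' silently identifying $r$ with both $r\textsubscript{cur}$ and the parent's round. So you are not taking a different route so much as flagging, and offering to close, a step the paper's proof leaves implicit.

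That said, the quorum-intersection machinery you sketch for this step is heavier than what the paper intends or needs. If you strengthen your $\ge$ invariant to the equality $r\textsubscript{cur}=block\textsubscript{high}.r$ (which is preserved provided the ``update $block\textsubscript{high}$'' instructions in the vote-quorum and timeout-quorum handlers are read as taking the maximum with the current value --- the reading the paper's Case~2 wording ``selects the $block\textsubscript{high}$ with the highest rank $(v,r)$, and extends it by proposing \ldots\ with round $r+1$'' presupposes), then every $\langle\mathrm{vote},v,r,b\rangle$ and $\langle\mathrm{timeout},v,r,b\rangle$ sent by a correct replica carries a block with $b.r=r$, the highest-rank block in the received quorum automatically has the right round, and sites~(i) and~(ii) close by purely local induction with no cross-replica quorum reasoning at all.
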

\begin{proof}
    According to the algorithm, there are three instances where a new block is created.

\begin{itemize}
    \item Case 1: when $isAsync$ = false and L\textsubscript{v} receives at least $n-f$ <vote>s for round $r$ and creates a new synchronous block extending the block\textsubscript{high} (see algorithm~\ref{algo:fallbacksteadyalgorithm} line~\ref{S:line:9}). In this case, L\textsubscript{v} creates a new block with round $r+1$. Hence the adjacent blocks have monotonically increasing round numbers.
    \item Case 2: when $isAsync$ = true and upon collecting $n-f$ <timeout> messages in view $v$ (see algorithm~\ref{algo:fallbackalgorithm} line~\ref{A:line:1}). In this case, the replica selects the block\textsubscript{high} with the highest rank $(v,r)$, and extends it by proposing a height $1$ asynchronous block with round $r+1$. Hence the adjacent blocks have monotonically increasing round numbers.
    \item Case 3: when $isAsync$ = true and upon collecting $n-f$ <vote-async> messages for a height $1$ asynchronous block (see algorithm~\ref{algo:fallbackalgorithm} line~\ref{A:line:15}-\ref{A:line:16}). In this case, the replica extends the height $1$ block by proposing a height $2$ block with round $r+1$. Hence the adjacent blocks have monotonically increasing round numbers.
\end{itemize}

The view numbers are non decreasing according to the algorithm. Hence Theorem~\ref{th:2} holds.
\end{proof}

\begin{theorem}\label{th:3}
    If a synchronous block B\textsubscript{c} with rank $(v,r)$ is committed, then all future blocks with view $v$ will extend $B\textsubscript{c}$.
\end{theorem}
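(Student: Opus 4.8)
The plan is to prove the statement by strong induction on the round number, driven by quorum intersection between the $n-f$ replicas that make $B_c$ committed and the $n-f$ replicas whose messages create any later view-$v$ block. First I would fix notation. Since $B_c$ of rank $(v,r)$ is committed as a synchronous block, the commit rule of Algorithm~\ref{algo:fallbacksteadyalgorithm} (lines~\ref{S:line:9}--\ref{S:line:12}) tells us that $L_v$ received $n-f$ messages $\langle$vote$,v,r,B_c\rangle$ whose block\textsubscript{high} field equals $B_c$ and has rank $(v,r)$. Let $Q$ be the set of $n-f$ senders; each $p\in Q$ sent its vote from the synchronous path (line~\ref{S:line:25}) immediately after delivering $\langle$propose$,B_c,\cdot\rangle$, so at that instant $p$ had block\textsubscript{high}$=B_c$, $(v_{\mathrm{cur}},r_{\mathrm{cur}})=(v,r)$ and was in synchronous mode. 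A preliminary observation I would record is that no replica in $Q$ ever emits an asynchronous-path vote (line~\ref{A:line:35}) carrying view $v$, because such a vote is sent only after $v_{\mathrm{cur}}$ is incremented from $v-1$ (line~\ref{A:line:33}), i.e.\ only by a replica that was still in view $v-1$, whereas a replica in $Q$ is already in view $v$ and view numbers never decrease. Hence fewer than $n-f$ replicas can emit asynchronous-path votes for view $v$, so any batch of $n-f$ view-$v$ votes that $L_v$ collects contains a synchronous-path vote, whose block\textsubscript{high} is the synchronous block of that round---unique by Lemma~\ref{lem:1}---and which outranks every asynchronous-path vote in the batch (those report a block\textsubscript{high} of view $v-1$).

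Next I would state the induction hypothesis: every block $B'$ with view $v$ and round $r'>r$ extends $B_c$, and induct on $r'$. For the step I inspect how $B'$ was produced, using the three creation rules of Theorem~\ref{th:2}: \textbf{(i)} a synchronous block that $L_v$ forms and broadcasts at line~\ref{S:line:18} after $n-f$ round-$(r'-1)$ votes; \textbf{(ii)} a height-$1$ asynchronous block formed at line~\ref{A:line:6} of Algorithm~\ref{algo:fallbackalgorithm} after $n-f$ $\langle$timeout$,v,\cdot,\cdot\rangle$ messages; \textbf{(iii)} a height-$2$ asynchronous block formed at line~\ref{A:line:18}, whose parent is its own height-$1$ predecessor of rank $(v,r'-1)$, so this case reduces to case~(ii) at round $r'-1$ plus the induction hypothesis (the subcase $r'-1=r$ being ruled out by Lemma~\ref{lem:4}). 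In case~(i), by the preliminary observation $L_v$ updates block\textsubscript{high} to the unique synchronous block $B^\star$ of rank $(v,r'-1)$ before forming $B'$; if $r'-1=r$ then $B^\star=B_c$ and $B'$, having $B_c$ as parent, extends it, and if $r'-1>r$ then $B^\star$ extends $B_c$ by the induction hypothesis, hence so does $B'$. The real work is case~(ii).

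For case~(ii), $B'$ adopts as parent the block\textsubscript{high} of the proposing replica, which (right after line~\ref{A:line:1}) is the highest-ranked block reported among the $n-f$ collected $\langle$timeout$,v,\cdot,\cdot\rangle$ messages; let $Q''$, with $|Q''|=n-f$, be their senders, and pick $q\in Q\cap Q''$ (possible since $|Q|+|Q''|>n$). Replica $q$ voted for $B_c$, so its block\textsubscript{high} reached rank $(v,r)$ while $q$ was in view $v$, and its timeout was broadcast with $v_{\mathrm{cur}}=v$; invoking two monotonicity invariants---that within a view each replica's block\textsubscript{high} is non-decreasing in rank, and that block\textsubscript{high}$.v\le v_{\mathrm{cur}}$ always---the timeout of $q$ reports a block\textsubscript{high} of view exactly $v$ and round $\ge r$, and since every timeout in the batch carries a block\textsubscript{high} of view $\le v$, the parent selected for $B'$ is a view-$v$ block of round $\ge r$. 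If its round exceeds $r$, the induction hypothesis closes the case. If its round equals $r$, the parent has rank $(v,r)$; it cannot be a height-$1$ asynchronous block, because block\textsubscript{high} is assigned only at line~\ref{S:line:23}, at the vote-derived update after line~\ref{S:line:9}, at line~\ref{A:line:27}, at the $B_{\mathrm{fall}}$-derived update after line~\ref{A:line:29}, and at the timeout-derived update after line~\ref{A:line:1}, none of the first four producing a height-$1$ block and the fifth only propagating an existing block\textsubscript{high}. So the parent is either a synchronous block---hence $B_c$ by Lemma~\ref{lem:1}---or a height-$2$ asynchronous block of rank $(v,r)$; the latter would require, by line~\ref{A:line:18}, that its height-$1$ parent of rank $(v,r-1)$ collected $n-f$ $\langle$vote-async$\rangle$ messages, and that quorum intersects the $n-f$ synchronous voters of $B_c$ in a replica that cannot both vote-async at round $r-1$ and vote synchronously at round $r$ consistently with view-monotonicity---the same $n-f>n/2$ reasoning as in Lemmas~\ref{lem:4} and~\ref{lem:5}. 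Hence the parent is $B_c$, so $B'$ extends it, closing the induction.

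I expect case~(ii) to be the main obstacle. The argument there depends on monotonicity invariants---that within a view each replica's block\textsubscript{high} and $(v_{\mathrm{cur}},r_{\mathrm{cur}})$ are non-decreasing in rank, and that block\textsubscript{high}$.v\le v_{\mathrm{cur}}$---which are not among the stated lemmas and must be proved first by a routine case analysis over every assignment to these variables (lines~\ref{S:line:23} and~\ref{A:line:27}, the vote- and timeout-derived updates after lines~\ref{S:line:9} and~\ref{A:line:1}, the $B_{\mathrm{fall}}$-derived update after line~\ref{A:line:29}, and the view increment at line~\ref{A:line:33}). It also needs a mild strengthening of Lemmas~\ref{lem:4} and~\ref{lem:5}: they are stated for \emph{elected}-asynchronous blocks of rank $(v,r)$, whereas I must exclude any height-$2$ asynchronous block of rank $(v,r)$ that some replica could carry as block\textsubscript{high}; this is immediate since such a block exists only if its height-$1$ parent collected $n-f$ $\langle$vote-async$\rangle$ votes, and the existing quorum-intersection argument then applies verbatim, one round lower. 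With those auxiliary facts in hand, the induction above goes through and proves the theorem.
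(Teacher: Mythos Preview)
Your proof is correct, but it takes a markedly different route from the paper's. The paper argues by minimal counterexample: it supposes some block $B_s$ in view $v$ fails to extend $B_c$, walks back along $B_s$'s parent chain to the \emph{first} conflicting block $B_f$, and then observes that $B_f$ must have been formed by $L_v$ after collecting $n-f$ votes whose block\textsubscript{high} values (by minimality of $B_f$) all extend $B_c$, so $B_f$ itself extends $B_c$ --- contradiction. Crucially, the paper's proof of Theorem~\ref{th:3} treats only this synchronous-formation case; the asynchronous blocks of view $v$ are handled separately in Theorem~\ref{th:4}, so the two theorems together cover what you prove in one pass.

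Your strong-induction argument is more self-contained: you explicitly case-split on all three block-creation rules and push the asynchronous cases through via quorum intersection with the voters of $B_c$, which forces you to state and use monotonicity invariants on block\textsubscript{high} and $(v_{\mathrm{cur}},r_{\mathrm{cur}})$, together with a slight generalisation of Lemmas~\ref{lem:4}--\ref{lem:5} to non-elected height-$2$ blocks. Those auxiliary facts are real obligations the paper sidesteps by its decomposition into Theorems~\ref{th:3} and~\ref{th:4}. The trade-off is that your argument is heavier but proves a stronger statement in one shot, whereas the paper's proof is shorter for Theorem~\ref{th:3} alone at the cost of deferring the asynchronous case. Both are sound; your version would benefit from explicitly stating and proving the monotonicity invariants as a standalone lemma before the main induction, since they carry most of the weight in your case~(ii).
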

\begin{proof}
    We prove this by contradiction. 

    Assume there is a committed block B\textsubscript{c} with B\textsubscript{c}.r = r\textsubscript{c} (hence all the blocks in the path from the genesis block to B\textsubscript{c} are committed). Let block B\textsubscript{s} with B\textsubscript{s}.r = r\textsubscript{s} be the round r\textsubscript{s} block such that B\textsubscript{s} conflicts with B\textsubscript{c} (B\textsubscript{s} does not extend B\textsubscript{c}). Without loss of generality, assume that r\textsubscript{c} $<$ r\textsubscript{s}.

    Let block B\textsubscript{f} with B\textsubscript{f}.r = r\textsubscript{f} be the first valid block formed in a round r\textsubscript{f} such that r\textsubscript{s} $\geq$ r\textsubscript{f} $>$ r\textsubscript{c} and B\textsubscript{f} is the first block from the path from genesis block to B\textsubscript{s} that conflicts with B\textsubscript{c}; for instance B\textsubscript{f} could be B\textsubscript{s}. B\textsubscript{f} is formed in round r\textsubscript{f} upon the leader receiving $n-f$ <vote>s in round r\textsubscript{f} (see algorithm~\ref{algo:fallbacksteadyalgorithm} line~\ref{S:line:9}). Due to the minimality of B\textsubscript{f} (B\textsubscript{f} is the first block that conflicts with B\textsubscript{c}), all the block\textsubscript{high} values in the received $n-f$ votes contain either B\textsubscript{c} or a block that extends B\textsubscript{c}. Since block\textsubscript{high} with the highest round from the received set of votes extends B\textsubscript{c}, B\textsubscript{f} extends B\textsubscript{c}, thus we reach a contradiction. Hence no such B\textsubscript{f} exists. Hence all the blocks created after B\textsubscript{c} in the view $v$ extend B\textsubscript{c}.
\end{proof}

\begin{theorem}\label{th:4}
    If a synchronous block $B$ with rank $(v,r)$ is committed, an elected-asynchronous block $\tilde{B}$ of the same view $v$ will extend that block.
\end{theorem}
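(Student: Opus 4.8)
The plan is to prove Theorem~\ref{th:4} by contradiction, reducing it to Theorem~\ref{th:3} through a quorum-intersection argument, much as in the proof of Lemma~\ref{lem:5}. Suppose an elected-asynchronous block $\tilde{B}$ of view $v$ does not extend the committed synchronous block $B$, whose rank is $(v,r)$. By algorithm~\ref{algo:fallbackalgorithm} (lines~\ref{A:line:15}--\ref{A:line:18}), $\tilde{B}$ is a height-$2$ block whose parent is a height-$1$ asynchronous block $B_{f1}$ with $B_{f1}.v = v$; and $\tilde{B}$ extends $B$ the moment $B_{f1}$ does, so it is enough to show that $B_{f1}$ extends $B$.

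First I would record the two quorums involved. Since $B$ is committed as a synchronous block of rank $(v,r)$, the commit rule (algorithm~\ref{algo:fallbacksteadyalgorithm}, lines~\ref{S:line:11}--\ref{S:line:12}) guarantees a set $Q_V$ of $n-f$ replicas that sent $<$vote, $v$, $r$, $B>$, each with $\mathit{block}_{high} = B$ --- Lemma~\ref{lem:1} makes this common block unique, namely $B$. On the other side, $B_{f1}$ is created at line~\ref{A:line:6} by extending the highest-rank $\mathit{block}_{high}$ carried in the $n-f$ $<$timeout, $v$, $\cdot$, $\cdot>$ messages of some set $Q_T$ that triggered the asynchronous mode of view $v$; call that parent block $C$, so that $B_{f1}.r = C.r + 1$. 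As $2(n-f) > n$, there is a replica $p \in Q_V \cap Q_T$.

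The heart of the argument is to show $C$ extends $B$, using $p$. After $p$ votes for $B$ it holds $\mathit{block}_{high}^p = B$ and $r_{cur}^p = r$ (lines~\ref{S:line:22}--\ref{S:line:25}); since $\mathit{block}_{high}$ only advances to strictly higher-rank blocks and, by Theorem~\ref{th:3}, every view-$v$ block of round $\ge r$ extends $B$, $p$'s $\mathit{block}_{high}$ keeps extending $B$ while $p$ stays in view $v$. I would also observe that $C.r \ge r$: some $q \in Q_V$ later casts a $<$vote-async, $B_{f1}$, $1>$ message (the height-$2$ block $\tilde{B}$ needs $n-f$ such votes and $Q_V$ meets that quorum), and such a $q$ already has $r_{cur}^q \ge r$ from having voted for $B$, so $B_{f1}.r > r$ and $C.r = B_{f1}.r - 1 \ge r$. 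Now, if $p$'s timeout in $Q_T$ was sent after $p$ voted for $B$, that timeout carries a block of rank $\ge (v,r)$ extending $B$; so $C$, being at least as high, lies in view $v$ with round $\ge r$ and hence extends $B$ by Theorem~\ref{th:3}, contradicting the assumption. The delicate remaining case is when $p$ broadcast its view-$v$ timeout before voting for $B$. Here I would argue that such a timeout announces $v_{cur}^p = v$, that the guard at line~\ref{S:line:20} for accepting $<$propose, $B>$ forces $B.\mathrm{rank} = (v,r) > (v_{cur}^p, r_{cur}^p)$ and hence $r_{cur}^p < r$ when the timeout is sent, and that the invariant $\mathit{block}_{high}.r \le r_{cur}$ (re-established at lines~\ref{S:line:22}, \ref{A:line:4}, \ref{A:line:28}, \ref{A:line:31}) then pins that timeout's $\mathit{block}_{high}$ to round strictly below $r$. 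As at least $n - 2f \ge 1$ members of $Q_V$ lie in $Q_T$, and none of their before-voting timeouts can carry the round-$(\ge r)$ block $C$, the maximal $\mathit{block}_{high}$ of $Q_T$ must come from an after-voting timeout and therefore extends $B$. Thus $B_{f1}$ extends $B$, and so does $\tilde{B}$ --- the desired contradiction.

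I expect the main obstacle to be precisely this bookkeeping: showing that stale $<$timeout$>$ messages broadcast before a replica adopts $B$ cannot be the ones that determine the parent $C$ of $B_{f1}$. This hinges on tracking how $v_{cur}$, $r_{cur}$, and $\mathit{block}_{high}$ co-evolve across the synchronous-to-asynchronous transition; the remaining ingredients --- quorum intersection, Lemma~\ref{lem:1}, and the extension property of Theorem~\ref{th:3} --- are then routine.
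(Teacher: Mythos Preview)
Your overall approach matches the paper's: argue by contradiction, pass to the height-$1$ parent $B_{f1}$ of $\tilde{B}$, and use quorum intersection between the $n-f$ voters $Q_V$ for $B$ and the $n-f$ timeout senders $Q_T$ that generated $B_{f1}$, together with Theorem~\ref{th:3}, to conclude $B_{f1}$ extends $B$. The paper's proof is a two-sentence version of exactly this, appealing only to ``at least $n-f$ replicas should have set (and possibly sent) $B$ or a block extending $B$ as block$_{high}$'' and quorum intersection. You go further than the paper in one respect: you explicitly surface the temporal issue that a replica in $Q_V\cap Q_T$ may have broadcast its view-$v$ timeout \emph{before} voting for $B$, so that its timeout's block$_{high}$ need not yet extend $B$. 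The paper's parenthetical ``(and possibly sent)'' is the only acknowledgment of this, and it does not actually discharge the case.

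However, your handling of that delicate case has a gap. From ``none of the before-voting timeouts of replicas in $Q_V\cap Q_T$ can carry the round-$\ge r$ block $C$'' you conclude that ``the maximal block$_{high}$ of $Q_T$ must come from an after-voting timeout.'' This does not follow: $C$ could equally well be carried by the timeout of some replica in $Q_T\setminus Q_V$, which never voted for $B$ at all, and your argument says nothing about such replicas. A second, smaller issue: you write $B_{f1}.r = C.r+1$, but line~\ref{A:line:4} sets $r_{cur}\leftarrow\max(r_{cur},\text{block}_{high}.r)$ before line~\ref{A:line:6} builds $B_{f1}$ with round $r_{cur}+1$, so in general $B_{f1}.r\ge C.r+1$; your inference $C.r\ge r$ from $B_{f1}.r>r$ needs this equality, which you have not justified. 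To close the gap you would need to argue directly about \emph{any} block of rank $\ge(v,r)$ that can appear as block$_{high}$ in a view-$v$ timeout (showing it must be a synchronous view-$v$ block, hence covered by Theorem~\ref{th:3}), rather than only about the timeouts of replicas in $Q_V$.
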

\begin{proof}
    We prove this by contradiction. Assume that a synchronous block $B$ is committed in view $v$ and an elected-asynchronous block $\tilde{B}$ does not extend $B$. Then, the parent height $1$ block of $\tilde{B}$, $\tilde{B\textsubscript{p}}$, also does not extend $B$.

    To form the height $1$ $\tilde{B\textsubscript{p}}$, the replica collects $n-f$ <timeout> messages (see algorithm~\ref{algo:fallbackalgorithm} line~\ref{A:line:1}), each of them containing the block\textsubscript{high}. If $B$ is committed, by theorem~\ref{th:3}, at least $n-f$ replicas should have set (and possibly sent) $B$ or a block extending $B$ as the block\textsubscript{high}. Hence by intersection of the quorums $\tilde{B\textsubscript{p}}$ extends $B$, thus we reach a contradiction.
\end{proof}

\begin{theorem}\label{th:5}
    At most one height 2 asynchronous block from one proposer can be committed in a given view change.
\end{theorem}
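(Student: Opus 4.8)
The plan is to show that a single proposer $p_l$ emits at most one $<$asynchronous-complete$>$ message in a given view change, and then to read off from the definition of an elected-asynchronous block that at most one of $p_l$'s height $2$ blocks can be committed.

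First I would trace Algorithm~\ref{algo:fallbackalgorithm} to bound how many height $2$ asynchronous blocks $p_l$ can create in a view change $v$. Entry into the asynchronous mode occurs only through the handler at line~\ref{A:line:1}, which is guarded by isAsync $=$ false and immediately sets isAsync to true at line~\ref{A:line:2}; since isAsync returns to false only at line~\ref{A:line:34}, together with the increment of v\textsubscript{cur} at line~\ref{A:line:33}, $p_l$ enters the asynchronous mode for view $v$ at most once, and therefore broadcasts exactly one height $1$ asynchronous block (lines~\ref{A:line:6}--\ref{A:line:7}). Because each $<$vote-async$>$ message for a block $B$ is unicast to the proposer of $B$ (line~\ref{A:line:10}), $p_l$ can only ever accumulate $n-f$ $<$vote-async$>$ messages for blocks it proposed itself; hence the handler at line~\ref{A:line:15} with $h=1$ fires for a single height $1$ block, so $p_l$ proposes at most one height $2$ block $B^\star$ and the handler at line~\ref{A:line:22} broadcasts $<$asynchronous-complete$>$ for at most this one block. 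The non-equivocation assumption of the system model serves as a clean backstop: $p_l$ never sends $<$asynchronous-complete$>$ for two distinct blocks of the same view, regardless of the precise ``fire on threshold'' semantics of the handlers.

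Second I would connect this to commitment. By the definition preceding the proofs, a committed height $2$ block is exactly an \emph{elected-asynchronous} block, i.e.\ a height $2$ block $B$ of view $v$ such that (i) common-coin-flip($v$) returns the index of $B$'s proposer, and (ii) the $<$asynchronous-complete$>$ message for $B$ appears among the first $n-f$ such messages received. Only the proposer of $B$ can broadcast the $<$asynchronous-complete$>$ for $B$, since only it gathers the required $n-f$ $<$vote-async$>$ messages for $B$ by the unicast observation above. Fixing the proposer $p_l$, the first step shows that $p_l$ emits at most one $<$asynchronous-complete$>$ message, referencing $B^\star$; hence condition (ii) can hold for at most one of $p_l$'s height $2$ blocks, so at most one height $2$ block from $p_l$ is committed in the view change. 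Combined with the uniqueness of common-coin-flip($v$) --- which forces every committed height $2$ block of view $v$ to share the same proposer --- this also yields that at most one height $2$ asynchronous block is committed per view change, consistent with Lemma~\ref{lem:2}.

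The step I expect to be the main obstacle is the first one: rigorously ruling out that the asynchronous-mode handlers ever produce two distinct height $2$ proposals, hence two $<$asynchronous-complete$>$ messages, from one replica in one view. This forces one to pin down both the threshold semantics of the $<$vote-async$>$ handler at line~\ref{A:line:15} and the ``catch up by building on another process's height $1$ block'' behaviour referred to in the prose; the facts that make the argument go through are that $<$vote-async$>$ messages reach only a block's own proposer and that v\textsubscript{cur} advances monotonically across view changes, with non-equivocation available as a fallback whenever an operational reading is ambiguous.
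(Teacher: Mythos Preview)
Your proposal is correct and follows essentially the same two-step decomposition as the paper: (i) a single proposer issues at most one height~$2$ asynchronous block in a view, and (ii) the common-coin-flip($v$) pins all committed height~$2$ blocks of view $v$ to one proposer. The paper presents these steps in the opposite order and dispatches step (i) in one line by citing line~\ref{A:line:18} and non-equivocation, whereas you give a fuller operational trace through the isAsync guard, the unicast of $<$vote-async$>$ to the proposer, and the resulting uniqueness of the height~$1$/height~$2$ chain; your explicit flagging of the ``catch-up'' prose as the delicate point is apt, and your use of non-equivocation as a backstop is exactly how the paper closes that gap.
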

\begin{proof}
    Assume by way of contradiction that 2 height $2$ asynchronous blocks from two different proposers are committed in the same view. A height $2$ asynchronous block $B$ is committed in the asynchronous phase if the common-coin-flip($v$) returns the proposer of $B$ as the elected proposer (algorithm~\ref{algo:fallbackalgorithm} line~\ref{A:line:25}). Since the common-coin-flip($v$) outputs the same elected proposer across different replicas, this is a contradiction. Thus all height $2$ asynchronous blocks committed during the same view are from the same proposer.
    
    Assume now that the same proposer proposed two different height $2$ asynchronous blocks. According to the algorithm~\ref{algo:fallbackalgorithm} line~\ref{A:line:18}, and since no replica can equivocate, this is absurd.
    
    Thus at most one height $2$ asynchronous block from one proposer can be committed in a given view change.
\end{proof}

\begin{theorem}\label{th:6}
    Let $B$ be a height $2$ elected-asynchronous block that is committed, then all blocks proposed in the subsequent rounds extend $B$.
\end{theorem}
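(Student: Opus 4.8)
I would argue by contradiction, closely mirroring the structure of the proof of Theorem~\ref{th:3}. Let $B$ be the committed height~$2$ elected-asynchronous block, with rank $(v,r)$, and suppose some block $B_s$ with round $>r$ does not extend $B$. By Theorem~\ref{th:2}, along the chain from the genesis block to $B_s$ the round numbers are consecutive, so there is a well-defined block $B_f$ on that chain with the smallest round exceeding $r$; then $B_f$ has round $r+1$, $B_f$ does not extend $B$ (otherwise $B_s$ would, by transitivity of ``extends''), and the parent of $B_f$ has round $r$ and is some block $L\neq B$ (if the parent were $B$, then $B_f$ would extend $B$). The goal is to show $L$ must in fact be $B$, contradicting the choice of $B_f$, which then forces $B_s$ not to exist.

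\textbf{Key quorum lemma.} The heart of the argument is: because $B$ is committed, at least $n-f$ replicas set their block\textsubscript{high} equal to $B$ upon exiting the asynchronous phase of view $v$, and thereafter never decrease the rank of block\textsubscript{high}. To see this, note that $B$ being committed means its \texttt{<asynchronous-complete>} message was among the first $n-f$ such messages received by the committing replica; in particular that message was broadcast, which by Algorithm~\ref{algo:fallbackalgorithm} line~\ref{A:line:22} requires the proposer $p_\ell$ of $B$ to have collected $n-f$ \texttt{<vote-async, $B$, 2>} messages. Hence $\geq n-f$ replicas, upon casting that height~$2$ vote, set $B_{\mathrm{fall}}[p_\ell]=B$ (the ``$h==2$'' branch immediately following line~\ref{A:line:10}). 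Now consider such a replica when it exits the asynchronous phase (lines~\ref{A:line:24}--\ref{A:line:35}): the common-coin-flip($v$) is deterministic and, since $B$ is the elected block, returns $p_\ell$; in the branch of line~\ref{A:line:26}--\ref{A:line:27} it sets block\textsubscript{high} to the height~$2$ block of $p_\ell$, which by Theorem~\ref{th:5} (uniqueness of a proposer's height~$2$ block per view) is $B$; otherwise, since $B_{\mathrm{fall}}[p_\ell]=B\neq\mathrm{null}$, it takes the ELSIF of line~\ref{A:line:29} and again sets block\textsubscript{high} $=B$. Either way the replica leaves the phase with block\textsubscript{high} $=B$, then moves to view $v+1$ carrying that value in its vote to $L_{v+1}$ (line~\ref{A:line:35}); since the rank of block\textsubscript{high} is only ever updated upward, these $n-f$ replicas keep $B$ (or a block extending $B$) as block\textsubscript{high} forever after.

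\textbf{Deriving the contradiction.} By the case analysis in the proof of Theorem~\ref{th:2}, $B_f$ is created in one of three ways: as a synchronous block formed by a leader after collecting $n-f$ \texttt{<vote>}s for round $r$ (line~\ref{S:line:9}), as a height~$1$ asynchronous block formed after $n-f$ \texttt{<timeout>}s (line~\ref{A:line:1}), or as a height~$2$ asynchronous block formed after $n-f$ \texttt{<vote-async, $\cdot$, 1>}s (line~\ref{A:line:15}). In every case the creator gathers a quorum $Q$ of $n-f$ messages, each carrying a block\textsubscript{high} value, and the parent $L$ of $B_f$ is the highest-ranked of those values. Since $|Q|=n-f$ and the set of ``safe'' replicas from the previous paragraph has size $\geq n-f$, the two sets intersect in at least $2(n-f)-n=n-2f\geq 1$ replica, whose contributed block\textsubscript{high} is $B$ or extends $B$ and hence has rank $\geq(v,r)$. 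It remains to argue that a block\textsubscript{high} of rank $\geq(v,r)$ that conflicts with $B$ cannot appear in $Q$: any such value of round $\leq r$ would be a block of rank $(v'',r'')$ with $r''\leq r$ conflicting with $B$, which is ruled out because all ancestors of $B$ (being committed) are the unique blocks at their ranks among the block types considered in Theorems~\ref{th:1}--\ref{th:5} and Lemmas~\ref{lem:4}--\ref{lem:5}; and any such value of round $>r$ would itself conflict with $B$ at a round in $(r,B_f.r)$, contradicting the minimality of $B_f$. Therefore the maximum over $Q$ is $B$ (or a block extending $B$), so $L=B$ and $B_f$ extends $B$, a contradiction. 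Finally, to cover the boundary, observe that the asynchronous fallback occurs at most once per view and every replica leaves view $v$ for view $v+1$ upon exiting it, so no block with round $>r$ is ever formed inside view $v$ after $B$ is committed; the contradiction above thus applies to every candidate $B_f$.

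\textbf{Main obstacle.} I expect the delicate part to be the key quorum lemma: reconciling the informal ``first $n-f$ received'' wording with the two distinct exit branches (lines~\ref{A:line:26}--\ref{A:line:27} versus~\ref{A:line:29}), and proving that a replica holding $B_{\mathrm{fall}}[p_\ell]=B$ can never fall through both branches and retain a stale block\textsubscript{high}. Making the rank-monotonicity of block\textsubscript{high} fully precise across the view change is also subtle, because the vote unicast to $L_{v+1}$ at line~\ref{A:line:35} carries $B$ at rank $(v,r)$, which is strictly below the freshly incremented $(v_{\mathrm{cur}},r_{\mathrm{cur}})=(v+1,r)$, and one must check that $L_{v+1}$ nevertheless merges it correctly at line~\ref{S:line:9} so that a block extending $B$ is propagated. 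The quorum-intersection step (paragraph three) is then a routine adaptation of Theorems~\ref{th:3} and~\ref{th:4}.
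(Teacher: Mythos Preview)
Your approach is essentially the same as the paper's: both argue by contradiction, isolate a first offending block, establish that a majority of replicas carry $B$ (or an extension) as block\textsubscript{high} after the asynchronous phase of view $v$, and finish by quorum intersection with the $n-f$ messages that create the offending block.

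The main difference is one of rigor rather than strategy. The paper's proof is terse: it simply asserts that ``from the algorithm construction it is clear that a majority of the replicas will set $B$ as block\textsubscript{high}'', citing only the $n-f$ \texttt{<vote-async>} messages needed before the \texttt{<asynchronous-complete>} broadcast, and then splits into just two cases (synchronous block in view $v{+}1$, or height~$1$ asynchronous block in a later view), dismissing the height~$2$ case via Theorem~\ref{th:1}. Your key quorum lemma actually does the work the paper elides: you trace the $n-f$ height~$2$ voters through the $B_{\mathrm{fall}}[p_\ell]$ update and the two exit branches (lines~\ref{A:line:26}--\ref{A:line:27} versus~\ref{A:line:29}) to show each of them leaves with block\textsubscript{high}$=B$. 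That is a genuine strengthening; the ``main obstacle'' you flag is precisely the gap the paper hand-waves over. Your three-case split (inherited from Theorem~\ref{th:2}) and the explicit minimality bookkeeping on round numbers are a bit heavier than the paper's two-case version, and the sub-argument ruling out conflicting block\textsubscript{high} values of rank $\leq(v,r)$ is more than the paper attempts, but none of this changes the underlying idea.
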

\begin{proof}
    We prove this by contradiction. Assume that height two elected-asynchronous block $B$ is committed with rank $(v,r)$ and block $\tilde{B}$ with rank ($\tilde{v}$, $\tilde{r}$) such that ($\tilde{v}$, $\tilde{r}$) $>$ $(v,r)$ is the first block in the chain starting from $B$ that does not extend $B$. $\tilde{B}$ can be formed in two occurrences: (1) $\tilde{B}$ is a synchronous block in the view $v+1$ (see algorithm~\ref{algo:fallbacksteadyalgorithm} line~\ref{S:line:9}) or (2) $\tilde{B}$ is a height $1$ asynchronous block with a view strictly greater than $v$ (see algorithm~\ref{algo:fallbackalgorithm} line~\ref{A:line:6}). (we do not consider the case where $\tilde{B}$ is a height 2 elected-asynchronous block, because this directly follows from \ref{th:1})

If $B$  is committed, then from the algorithm construction it is clear that a majority of the replicas will set $B$ as block\textsubscript{high}. This is because, to send a <asynchronous-complete> message with $B$, a replica should collect at least $n-f$ <vote-async> messages (see algorithm~\ref{algo:fallbackalgorithm} line~\ref{A:line:15}). Hence, its guaranteed that if $\tilde{B}$ is formed in view v+1 as a synchronous block, then it will observe $B$ as the block\textsubscript{high}, thus we reach a contradiction. In the second case, if $\tilde{B}$ is formed in a subsequent view, then it is guaranteed that the height $1$ block will extend $B$ by gathering from the <timeout> messages $B$ as block\textsubscript{high} or a block extending $B$ as the block\textsubscript{high} (see algorithm~\ref{algo:fallbackalgorithm} line~\ref{A:line:6}), hence we reach a contradiction.
\end{proof}

\begin{theorem}\label{th:7}
    There exists a single history of committed blocks.
\end{theorem}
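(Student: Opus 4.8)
The plan is to show that the set of all committed blocks forms a single chain --- equivalently, that if $B$ and $\tilde{B}$ are both committed, then one extends the other. Since a commit operation commits the entire causal history of a block, it suffices to rule out the existence of two committed blocks that conflict (neither extends the other). First I would observe that, by Theorem~\ref{th:1}, for each rank $(v,r)$ there is at most one block that can ever be committed (committed blocks are always of one of the three types enumerated there: a synchronous block with $n-f$ votes, a height-$2$ elected-asynchronous block, or a height-$1$ parent of such a block). So two conflicting committed blocks must have distinct ranks; by Theorem~\ref{th:2} ranks along any chain have strictly increasing round numbers and non-decreasing view numbers, so I can take $B$ and $\tilde{B}$ with, say, $\mathrm{rank}(B) < \mathrm{rank}(\tilde{B})$ lexicographically, and assume $\tilde{B}$ does not extend $B$.

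Next I would do a case analysis on the type of the earlier committed block $B$ and the view relationship between $B$ and $\tilde{B}$, invoking the already-proved safety theorems as the engine. If $B$ is a committed synchronous block, then Theorem~\ref{th:3} rules out any conflicting block in the same view $v$, and Theorem~\ref{th:4} rules out a conflicting elected-asynchronous block in the same view; for later views one argues inductively that the next synchronous block after the view change (or the next height-$1$ asynchronous block, whichever comes first at a higher rank) must carry $B$ (or a descendant of $B$) as its $block_{\text{high}}$ by quorum intersection, exactly as in the proofs of Theorems~\ref{th:3} and~\ref{th:4}. If $B$ is a committed height-$2$ elected-asynchronous block, then Theorem~\ref{th:6} directly gives that every block proposed in all subsequent rounds extends $B$, contradicting the choice of $\tilde{B}$. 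The case where $B$ is a committed height-$1$ asynchronous block reduces to the height-$2$ case, since such a $B$ is by definition the parent of a committed height-$2$ elected-asynchronous block, so $\tilde{B}$ would also have to conflict with that height-$2$ block. In every case we reach a contradiction, so no two committed blocks conflict.

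I would conclude by phrasing this cleanly: let $\mathcal{C}$ be the set of committed blocks; I have shown that $\mathcal{C}$ is totally ordered by the "extends" relation, hence there is a (possibly infinite) chain $\ldots \to b_2 \to b_1 \to \text{genesis}$ containing every committed block, and this chain --- together with its induced commit order on the underlying client requests --- is the unique committed history. It then follows immediately that the SMR safety property holds: no two replicas commit different client requests for the same log position, because each replica's committed prefix is a prefix of this one global chain.

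The main obstacle I anticipate is the bookkeeping in the inductive step that bridges \emph{across} view changes when $B$ is synchronous: Theorems~\ref{th:3} and~\ref{th:4} as stated only cover blocks "of the same view $v$" and "an elected-asynchronous block of the same view", so to handle a conflicting $\tilde{B}$ several views later I must chain the arguments --- showing that the first block of rank exceeding $\mathrm{rank}(B)$ formed in view $v+1$ (resp.\ the first height-$1$ asynchronous block at a higher view) already extends $B$ by the same quorum-intersection reasoning, and then induct on rank. This is the step that requires the most care to state without gaps, though each individual link reuses exactly the intersection-of-$(n-f)$-quorums argument already established.
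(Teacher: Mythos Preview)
Your proposal is correct and follows essentially the same contradiction strategy as the paper, invoking the preceding safety theorems to rule out two conflicting committed blocks. The paper's own proof is far terser --- it simply asserts that two distinct histories would yield a block from one not extending a block from the other and cites Theorems~\ref{th:4} and~\ref{th:6} for the contradiction --- so your case analysis (via Theorems~\ref{th:1}, \ref{th:2}, \ref{th:3}, \ref{th:4}, \ref{th:6}) and the cross-view inductive bookkeeping you flag as an obstacle are elaborations the paper leaves implicit.
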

\begin{proof}
    Assume by way of contradiction there are two different histories $H_1$ and $H_2$ of committed blocks. Then there is at least one block from $H_1$ that does not extend at least one block from $H_2$. This is a contradiction with theorems~\ref{th:4} and~\ref{th:6}. Hence there exists a single chain of committed blocks.
\end{proof}

\begin{theorem}\label{th:8}
    For each committed replicated log position $r$, all replicas contain the same block.
\end{theorem}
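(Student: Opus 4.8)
The plan is to derive this as a direct consequence of the structural results already established, principally Theorem~\ref{th:7} together with Theorem~\ref{th:2}. First I would make precise what ``the block at committed log position $r$'' means in the chaining model of Section~\ref{sec:back}: when a replica sets $block\textsubscript{commit}$ to some block $B$, it commits exactly the path from $B$ back to the genesis block (see algorithm~\ref{algo:fallbacksteadyalgorithm} line~\ref{S:line:24} and algorithm~\ref{algo:fallbackalgorithm} line~\ref{A:line:27}), i.e.\ a contiguous prefix of a single chain rooted at the common genesis. So ``replica $p$ contains a block at log position $r$'' means $p$ has committed some block $B_p$ whose distance from the genesis is at least $r$, and the block $p$ holds at position $r$ is the unique ancestor of $B_p$ at depth $r$. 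By Theorem~\ref{th:2} the round numbers increase by exactly one along any chain, and all replicas share the genesis, so the depth of a committed block coincides with its round number; hence ``log position $r$'' may be identified with ``round $r$''.

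Next I would invoke Theorem~\ref{th:7}: the set of all blocks ever committed by any replica forms a single chain $\mathcal{H}$ descending from the genesis. Combining this with Theorem~\ref{th:2}, consecutive blocks of $\mathcal{H}$ have consecutive round numbers, so $\mathcal{H}$ contains at most one block at each log position $r$. (Theorem~\ref{th:1} is the finer statement that a block is already determined by its full rank $(v,r)$; here Theorem~\ref{th:7} gives the stronger fact that, restricted to the committed chain, the round number alone pins the block down.)

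Finally I would assemble the pieces. Fix a log position $r$ and two replicas $p$ and $q$ that both have a committed block at position $r$. Each of $p$'s committed block $B_p$ and $q$'s committed block $B_q$ lies on $\mathcal{H}$ by Theorem~\ref{th:7}, and since every commit is a prefix of $\mathcal{H}$, the block $p$ holds at position $r$ is the depth-$r$ block of $\mathcal{H}$, and likewise for $q$; by the previous paragraph these are the same block. Hence all replicas agree at every committed log position, which is exactly the SMR safety property. The main ``obstacle'' here is not mathematical --- after Theorem~\ref{th:7} essentially all the work is done --- but rather the bookkeeping of the chaining model: one must verify that every commit is a contiguous prefix of a genesis-rooted chain (so that ``log position'' is well defined and matches the round number via the shared genesis), and that no replica can commit a block off $\mathcal{H}$ or at a round number inconsistent with its depth in $\mathcal{H}$; both are guaranteed by the commit rules of Algorithms~\ref{algo:fallbacksteadyalgorithm} and~\ref{algo:fallbackalgorithm} together with Theorems~\ref{th:2} and~\ref{th:7}.
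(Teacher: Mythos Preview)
Your proposal is correct and follows essentially the same route as the paper: both treat Theorem~\ref{th:8} as an immediate corollary of the single-chain result together with the consecutive-round-number property, with no new ideas needed. The only cosmetic difference is emphasis---the paper cites Theorems~\ref{th:2} and~\ref{th:1} explicitly (relying on Theorem~\ref{th:7} implicitly via ``the committed chain''), whereas you foreground Theorem~\ref{th:7} and note that Theorem~\ref{th:1} is then subsumed; your version is arguably cleaner since Theorem~\ref{th:1} alone fixes a block only up to rank $(v,r)$, not round $r$.
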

\begin{proof}
    By theorem~\ref{th:2}, the committed chain will have incrementally increasing round numbers. Hence for each round number (log position), there is a single committed entry, and by theorem~\ref{th:1}, this entry is unique. This completes the proof.
\end{proof}

\textbf{Proof of liveness}

\begin{theorem}\label{th:9}
    If at least $n-f$ replicas enter the asynchronous phase of view $v$ by setting $isAsync$ to true, then eventually they all exit the asynchronous phase and set $isAsync$ to false.
\end{theorem}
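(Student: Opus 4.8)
The plan is to walk through the four \textbf{Upon} blocks of Algorithm~\ref{algo:fallbackalgorithm} in order and argue that each quorum-gathering step eventually fires. I would read ``at least $n-f$ replicas enter the asynchronous phase of view $v$'' as ``at least $n-f$ \emph{correct} replicas do so'': an omission-faulty replica may never receive the messages it needs, so ``they all exit'' can only be claimed for correct replicas. I would also stress at the outset that only the perfect point-to-point links assumption is used here, not GST --- this is precisely the mode designed to make progress under full asynchrony. First I would record the invariant that a replica which enters this phase (line~\ref{A:line:1}) sets $v_{cur}=v$ and leaves it unchanged until it exits at line~\ref{A:line:33}; consequently, for all these replicas every guard of the form ``$B.v == v_{cur}$'' or ``$v == v_{cur}$'' is about the single view $v$, and any ``collect $n-f$ messages of kind $X$'' step completes once $n-f$ correct replicas have broadcast such a message.

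Next I would handle the \textbf{height-$1$ round}. Each correct entrant broadcasts a height-$1$ \texttt{propose-async} message on entering (lines~\ref{A:line:6}--\ref{A:line:7}), with block rank $(v,r_{cur}{+}1)$, and a correct recipient votes for \emph{every} height-$1$ block whose rank exceeds its own $(v,r_k)$ (lines~\ref{A:line:9}--\ref{A:line:10}), without raising $r_k$ by doing so. Letting $p_m$ be a correct entrant whose $block_{high}$ has the largest round $r_m=\max_k r_k$, its height-$1$ block has rank $(v,r_m{+}1)$, which strictly exceeds $(v,r_k)$ for every entrant $k$; hence $p_m$ gathers $n-f$ \texttt{vote-async} messages and broadcasts a height-$2$ block of rank $(v,r_m{+}2)$ (lines~\ref{A:line:16}--\ref{A:line:19}). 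For a lagging entrant whose own height-$1$ block cannot attract $n-f$ votes, I would invoke the catch-up mechanism described in the prose: having seen a higher-ranked height-$1$ block, it builds its height-$2$ block on that block (the parent $B$ at line~\ref{A:line:18}) and also broadcasts a height-$2$ \texttt{propose-async}. The conclusion of this step is that every correct entrant eventually broadcasts a height-$2$ \texttt{propose-async}, all of whose blocks have round $\geq r_m{+}2$.

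Then I would close the argument with the \textbf{height-$2$ round and exit}. Because every height-$2$ block in play has round $\geq r_m{+}2 > r_k$ for every correct replica, each correct recipient votes for all of them (lines~\ref{A:line:9}--\ref{A:line:10} with $h=2$), so every correct entrant collects $n-f$ height-$2$ \texttt{vote-async} messages and broadcasts an \texttt{asynchronous-complete} message carrying view $v$ (line~\ref{A:line:22}). Thus at least $n-f$ \texttt{asynchronous-complete} messages with view field $v$ are eventually broadcast by correct replicas; every correct replica still in the phase therefore eventually receives $n-f$ of them while holding $isAsync=$\textsc{true} and $v_{cur}=v$, firing line~\ref{A:line:24}. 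It then runs \texttt{common-coin-flip}$(v)$, possibly adopts the designated height-$2$ block as $block_{high}$/$block_{commit}$ (lines~\ref{A:line:25}--\ref{A:line:31}), increments $v_{cur}$, and sets $isAsync$ to \textsc{false} at line~\ref{A:line:34} --- i.e. exits --- which is the claim.

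I expect the \textbf{main obstacle} to be the lagging-entrant case of the height-$1$ round: I need to pin down that the catch-up path of lines~\ref{A:line:8}--\ref{A:line:19} genuinely carries every correct replica to a height-$2$ proposal, and to rule out the interleaving in which the ``$\mathrm{rank}(B) > (v_{cur},r_{cur})$'' guards make every correct replica withhold its vote from every proposer at once (a second-round deadlock). The max-round replica $p_m$ supplies one guaranteed height-$2$ proposal, and the crux is showing that once $p_m$'s height-$1$ block is disseminated, all height-$2$ proposals lie above every replica's voting threshold, so the second round cannot stall; formalising the catch-up adoption --- and checking it is consistent with the block-uniqueness facts of Theorem~\ref{th:1} and Theorem~\ref{th:5} --- is where the real care is needed.
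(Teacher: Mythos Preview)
Your walk-through of the four \textbf{Upon} blocks is essentially the paper's argument, and your max-round replica $p_m$ device is a cleaner way of saying what the paper says informally. However, you skip the step the paper opens with and which is genuinely needed: from the hypothesis ``at least $n-f$ replicas enter the asynchronous phase,'' the paper first argues that \emph{all} correct replicas eventually enter it. The reason is that once $n-f$ replicas have set $isAsync=\textsc{true}$, at most $f<n-f$ remain on the synchronous path, so line~\ref{S:line:9} can never fire again, every remaining correct replica times out (line~\ref{S:line:27}), broadcasts \texttt{timeout}, and eventually receives $n-f$ such messages and enters the asynchronous phase at line~\ref{A:line:1}. You replace this derivation by reinterpreting the hypothesis as ``$n-f$ \emph{correct} replicas enter,'' but that is a strictly stronger assumption than what the theorem states: the $n-f$ entrants may well include omission-faulty replicas, and then your quorum-gathering arguments (which need $n-f$ correct voters that are already in the phase, since line~\ref{A:line:8} requires $isAsync=\textsc{true}$) do not go through without the paper's bootstrapping step.

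A smaller imprecision: your claim that ``every height-$2$ block in play has round $\geq r_m{+}2$'' is not literally true --- a non-maximal replica $p_j$ whose own height-$1$ block happens to attract $n-f$ votes will produce a height-$2$ block of round $r_j{+}2<r_m{+}2$. This does not break the argument (that block still clears the rank guard at line~\ref{A:line:9} for the $n-f$ replicas that voted for its parent), but the blanket lower bound $r_m{+}2$ should be dropped in favour of the per-proposer bound. With the missing all-correct-replicas-enter step restored and this bound adjusted, your proof coincides with the paper's.
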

\begin{proof}
    If $n-f$ replicas enter the asynchronous path, then eventually all replicas (except for failed replicas) will enter the asynchronous path as there are less than $n-f$ replicas left on the synchronous path due to quorum intersection, so no progress can be made on the synchronous path (see algorithm~\ref{algo:fallbacksteadyalgorithm} line~\ref{S:line:9}) and all replicas will timeout (see algorithm~\ref{algo:fallbacksteadyalgorithm} line~\ref{S:line:27}). As a result, at least $n-f$ correct replicas will broadcast their <timeout> message and all replicas will enter the asynchronous path.
    
    Upon entering the asynchronous path, each replica creates a asynchronous block with height $1$ and broadcasts it (see algorithm~\ref{algo:fallbackalgorithm} line~\ref{A:line:6}-\ref{A:line:7}). Since we use perfect point-to-point links, eventually all the height $1$ blocks sent by the $n-f$ correct replicas will be received by each replica in the asynchronous path. At least $n-f$ correct replicas will send them <vote-async> messages if the rank of the height $1$ block is greater than the rank of the replica (see algorithm~\ref{algo:fallbackalgorithm} line~\ref{A:line:9}-\ref{A:line:10}). To ensure liveness for the replicas that have a lower rank, the algorithm allows catching up, so that nodes will adopt whichever height $1$ block which received $n-f$ <vote-async> arrives first. Upon receiving the first height $1$ block with $n-f$ <vote-async> messages, each replica will send a height $2$ asynchronous block (see algorithm~\ref{algo:fallbackalgorithm} line~\ref{A:line:15}-\ref{A:line:19}), which will be eventually received by all the replicas in the asynchronous path. Since the height $2$ block proposed by any block passes the rank test for receiving a <vote-async>, eventually at least $n-f$ height $2$ blocks get $n-f$ <vote-async> (see algorithm~\ref{algo:fallbackalgorithm} line~\ref{A:line:10}). Hence, eventually at least $n-f$ replicas send the <asynchronous-complete> message (see algorithm~\ref{algo:fallbackalgorithm} line~\ref{A:line:22}), and exit the asynchronous path.
\end{proof}

\begin{theorem}\label{th:10}
    With probability $p>\frac{1}{2}$, at least one replica commits an elected-asynchronous block after exiting the asynchronous path.
\end{theorem}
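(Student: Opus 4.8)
The plan is to combine Theorem~\ref{th:9} with a counting argument over the common coin. By Theorem~\ref{th:9}, once at least $n-f$ replicas enter the asynchronous phase of view $v$, at least $n-f$ correct replicas eventually broadcast an <asynchronous-complete> message -- each carrying its own height $2$ asynchronous block -- and every correct replica eventually delivers $n-f$ such messages. Fix one correct replica $p$ (one exists, since $n-f\geq f+1\geq 1$) and let $S_p\subseteq\{0,\dots,n-1\}$ be the set of senders of the \emph{first} $n-f$ <asynchronous-complete> messages that $p$ delivers; by the above, $|S_p| = n-f$. When $p$ reaches line~\ref{A:line:24} it invokes common-coin-flip($v$), and by lines~\ref{A:line:26}--\ref{A:line:27} it commits the height $2$ block of the elected proposer exactly when that proposer lies in $S_p$ (in which case its height $2$ block is among the first $n-f$ received blocks). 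Such a committed block is, by definition, an elected-asynchronous block.

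It remains to bound the probability of this event. The coin common-coin-flip($v$) returns a value distributed uniformly over $\{0,\dots,n-1\}$, and -- by the hiding property of the coin (the omission adversary cannot inspect the replicas' local PRG state, and replicas do not equivocate) -- this value is independent of the adversary's message schedule, hence independent of the set $S_p$, which is entirely determined by that schedule. Therefore
\[
\Pr\bigl[\text{common-coin-flip}(v) \in S_p\bigr] \;=\; \frac{|S_p|}{n} \;=\; \frac{n-f}{n}.
\]
Since $n\geq 2f+1$ gives $2(n-f) = 2n-2f \geq n+1 > n$, we conclude $\frac{n-f}{n} > \frac12$. On the event $\{\text{coin}\in S_p\}$ replica $p$ commits an elected-asynchronous block upon exiting the asynchronous path, so with probability $p>\frac12$ at least one replica commits such a block.

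The main obstacle I anticipate is making the independence claim airtight: we must argue that the network adversary, which dictates delivery order (and thus $S_p$), learns nothing about common-coin-flip($v$) before it is revealed at line~\ref{A:line:25}. This follows from the system model -- replicas are non-Byzantine, the adversary observes only the network and not replica state, and the coin is a pre-seeded PRG output -- but it should be invoked as an explicit property of common-coin-flip rather than re-derived here. A secondary detail worth spelling out is that $S_p$ may contain a faulty (but non-equivocating) replica whose <asynchronous-complete> happened to arrive early; this is harmless, since $p$ still holds that proposer's unique height $2$ block and committing it still produces an elected-asynchronous block.
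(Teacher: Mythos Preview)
Your proof is correct and follows essentially the same approach as the paper: the coin lands on one of the first $n-f$ senders of \texttt{<asynchronous-complete>} messages with probability $(n-f)/n>\tfrac12$, which triggers the commit at line~\ref{A:line:27}. Your version is considerably more explicit than the paper's---you formally define $S_p$, carry out the arithmetic $2(n-f)\geq n+1$, and, importantly, you articulate the independence hypothesis between the coin and the adversary's schedule, which the paper's proof simply asserts without justification.
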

\begin{proof}
    Let leader $L$ be the output of the common-coin-flip($v$) (see algorithm~\ref{algo:fallbackalgorithm} line~\ref{A:line:25}). A replica commits a block during the view change if the asynchronous-complete block from $L$ is among the first $n-f$ <asynchronous-complete> messages received during the asynchronous mode (see algorithm~\ref{algo:fallbackalgorithm} line~\ref{A:line:26}), which happens with probability at least greater than $\frac{1}{2}$. Hence with probability no less than $\frac{1}{2}$, each replica commits a chain in a given asynchronous phase.
\end{proof}

\begin{theorem}\label{th:11}
    A majority of replicas keep committing new blocks with high probability.
\end{theorem}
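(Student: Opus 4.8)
The plan is to combine the two operating modes of Sporades and show that, in either mode, a majority of correct replicas keeps extending the committed chain, so that with probability approaching $1$ new blocks are committed forever. First I would fix an arbitrary point in the execution together with the current committed prefix, and argue that a block of strictly larger rank is eventually committed by a majority. Split into two cases according to whether the synchronous path can make progress. In the first case at least $n-f$ correct replicas are on the synchronous path in a common view $v$ whose designated leader $L_v$ is correct and, after GST, the network delivers within $\Delta$ and the timeout exceeds the round-trip time. Then by Algorithm~\ref{algo:fallbacksteadyalgorithm}, $L_v$ broadcasts a propose message, every correct replica votes (line~\ref{S:line:25}), $L_v$ collects $n-f$ matching votes and commits (lines~\ref{S:line:9}--\ref{S:line:12}), and in the next round every correct replica commits that block via the propose message carrying $block_{commit}$ (line~\ref{S:line:24}); hence a majority commits one new block per round.

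In the second case the synchronous path is stuck --- fewer than $n-f$ correct replicas can ever assemble a matching vote quorum in the current view because of leader failure, an adversarial schedule, or a slow link. Then every correct replica eventually times out (lines~\ref{S:line:27}--\ref{S:line:28}), so at least $n-f$ correct replicas collect $n-f$ timeout messages for the same view and enter the asynchronous phase (Algorithm~\ref{algo:fallbackalgorithm} line~\ref{A:line:1}). By Theorem~\ref{th:9} all of them eventually exit that phase, and by Theorem~\ref{th:10}, with probability $p>\tfrac12$ at least one correct replica commits the elected-asynchronous block $B^{\ast}$ of that view. By Theorem~\ref{th:6} every block proposed in later rounds extends $B^{\ast}$, so once $B^{\ast}$ is committed by anyone it is committed at a majority no later than the next block they commit; thus ``one replica commits'' upgrades to ``a majority commits''. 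If instead the coin does not land on a suitable proposer, the replicas increment the view and resume the synchronous path (lines~\ref{A:line:33}--\ref{A:line:34}); if that view is again stuck we return to the start of this case with a fresh view. Because the common-coin-flip outputs for distinct views are independent, the number of asynchronous phases before a successful commit is stochastically dominated by a geometric variable with success probability above $\tfrac12$, so the chance that $k$ successive asynchronous phases all fail is below $2^{-k}$; hence after finitely many view changes a new block is committed by a majority with probability arbitrarily close to $1$. Iterating over the infinitely many such ``rounds of progress'' yields an infinite sequence of commits, and by Theorem~\ref{th:2} each committed block has a strictly larger round number than the previous one, so the blocks committed are genuinely new.

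The step I expect to be the main obstacle is making the case analysis watertight at the seam between the two modes: I must rule out a livelock in which the synchronous path never progresses yet the asynchronous coin keeps coming up ``tails''. The three facts that prevent this are (i) whenever the synchronous path cannot progress, provably at least $n-f$ correct replicas time out and enter the asynchronous phase, which is precisely the hypothesis of Theorem~\ref{th:9}; (ii) the coin's independence across views makes each asynchronous phase an independent Bernoulli trial with success probability above $\tfrac12$, so the failure probability decays geometrically; and (iii) Theorems~\ref{th:6} and~\ref{th:7} guarantee that a single successful commit propagates to a majority within a bounded number of subsequent commits. A minor point still to check is that a fresh client request is always available to fill the next block --- this follows from the SMR assumption that requests are re-proposed until committed together with Mandator's \texttt{getClientRequests()} interface --- so that ``new block'' is not vacuous.
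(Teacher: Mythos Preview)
Your proposal is correct and follows essentially the same skeleton as the paper: an induction/iteration over views that splits into ``synchronous path progresses'' versus ``synchronous path stalls and the asynchronous phase is entered'', invoking Theorems~\ref{th:9} and~\ref{th:10} for the latter and Theorem~\ref{th:2} to guarantee the committed blocks are new. The paper phrases the dichotomy as ``either at least $n-f$ replicas eventually set \texttt{isAsync} to true, or at least $n-f$ never do'' and proceeds by induction on the view number, whereas you fix an arbitrary point and argue progress from there; these are equivalent framings.

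Where your write-up actually goes beyond the paper's own proof is in two places the paper leaves implicit: (i) you spell out why ``high probability'' holds, namely via the independence of $\text{common-coin-flip}(v)$ across views giving a geometric tail bound $2^{-k}$, and (ii) you explain how a commit by a single replica in the asynchronous phase is upgraded to a commit by a majority, via Theorem~\ref{th:6} (every later block extends the elected-asynchronous block) together with the chain-commit rule. The paper's proof simply asserts ``a majority replicas keep committing'' after establishing that \emph{one} replica commits with probability at least $\tfrac12$, so your version is the more complete of the two on these points. Your explicit appeal to GST and a correct leader in the synchronous case is likewise something the paper's proof elides (it just says ``the sequence of blocks produced \ldots\ is infinite'' without justification), so that addition is appropriate.
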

\begin{proof}
    We first prove this theorem for the basic case where all honest replicas start the protocol with $v = 0$. If at least $n-f$ replicas eventually enter the asynchronous path, by theorem~\ref{th:9}, they eventually all exit the asynchronous path, and a new block is committed by at least one replica with probability no less than $\frac{1}{2}$. According to the asynchronous-complete step (see algorithm~\ref{algo:fallbackalgorithm} line~\ref{A:line:33}), all nodes who enter the asynchronous path enter view $v = 1$ after exiting the asynchronous path. If at least $n-f$ replicas never set $isAsync$ to true, this implies that the sequence of blocks produced in view $1$ is infinite. By Theorem~\ref{th:2}, the blocks have consecutive round numbers, and thus a majority replicas keep committing new blocks.

Now assume the theorem~\ref{th:11} is true for view $v = 0,..., k-1$. Consider the case where at least $n-f$ replicas enter the view $v = k$. By the same argument for the $v = 0$ base case, $n-f$ replicas either all enter the asynchronous path commits a new block with $\frac{1}{2}$ probability, or keeps committing new blocks in view $k$. Therefore, by induction, a majority replicas keep committing new blocks with high probability.
\end{proof}

\begin{theorem}\label{th:12}
    Each client transaction is eventually committed.
\end{theorem}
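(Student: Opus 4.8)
The plan is to combine two ingredients: the liveness of the Mandator layer, which makes progress at the speed of the network independently of consensus, and the liveness of Sporades established in Theorem~\ref{th:11}. Fix a client transaction $tx$. By the repeated-proposal assumption of Section~\ref{sec:back}, as long as $tx$ has not been committed some correct replica $p_j$ keeps inserting $tx$ into fresh Mandator-batches. First I would show that $tx$ eventually lands in a \emph{completed} Mandator-batch $B$: when $p_j$ broadcasts a $<$new-Mandator-batch$>$ containing $tx$ (line~\ref{M:line:13} of Algorithm~\ref{algo:mempoolalgorithm}), the perfect point-to-point links together with the $\geq n-f$ correct replicas guarantee that $p_j$ collects $n-f$ $<$Mandator-vote$>$s, so $p_j$ advances its own lastCompletedRounds$[j]$ to the round $\rho$ of $B$. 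This is essentially the Availability property of Mandator.

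Next I would argue that the completion of $B$ eventually becomes known everywhere. Because $p_j$ keeps chaining new batches on top of $B$ (again using repeated proposal, and the pull path for any missed batch), every correct replica eventually receives from $p_j$ a batch whose parent has round $\geq\rho$, and hence sets its own lastCompletedRounds$[j]\geq\rho$ (line~\ref{M:line:16}). Since this entry never decreases, there is a time $T$ after which every correct replica's lastCompletedRounds$[j]$ is at least $\rho$. Then I would invoke Theorem~\ref{th:11}: a majority of replicas keep committing new Sporades blocks with high probability, and by Theorem~\ref{th:2} these committed blocks carry consecutive, unboundedly increasing round numbers. Consequently only finitely many committed blocks were proposed before time $T$, while infinitely many are proposed after $T$. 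A committed Sporades block is proposed either by a correct synchronous leader (synchronous leadership rotates through all replicas, so infinitely many views have a correct leader) or, on the asynchronous path, is the elected-asynchronous block of a proposer that successfully broadcast it; in either case infinitely many committed blocks are proposed by correct replicas, and such a proposer set $cmnds$ to getClientRequests() (line~\ref{M:line:21}) from its own up-to-date lastCompletedRounds. Picking one committed block $B_s$ that was proposed after $T$ by a correct replica gives $B_s.cmnds[j]\geq\rho$. Committing $B_s$ triggers onCommit($B_s.cmnds$), which commits the causal history of chains$[j][B_s.cmnds[j]]$; by the Causality property of Mandator this causal history contains $B$, and therefore $tx$. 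Hence $tx$ is eventually committed.

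The main obstacle is the step linking the two layers: showing that a \emph{committed} Sporades block eventually carries a fresh-enough Mandator snapshot. This requires interleaving (i) the monotonicity and eventual uniform freshness (past $T$) of lastCompletedRounds at all correct replicas, (ii) the perpetual stream of committed blocks from Theorem~\ref{th:11}, and (iii) the observation that blocks proposed with stale state cannot be decisive forever --- because synchronous leadership rotates (infinitely many correct leaders) and, by Theorems~\ref{th:9} and~\ref{th:10}, every asynchronous phase terminates and commits an elected-asynchronous block with probability $>\tfrac12$. Ruling out an adversarial schedule that commits only blocks whose snapshots predate $T$ is the delicate part; the remaining steps are routine consequences of the Availability and Causality properties of Mandator and of the commit-the-causal-history semantics of both layers.
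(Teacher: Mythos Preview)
Your argument is correct, but it is substantially more elaborate than what the paper actually does. The paper's own proof of Theorem~\ref{th:12} is a single sentence: it invokes the standing assumption from Section~\ref{sec:back} that each client request is repeatedly proposed until committed, and then appeals directly to Theorem~\ref{th:11}. In particular, the paper does \emph{not} unpack the Mandator layer at all in this proof; it treats the repeated-proposal assumption as already absorbing everything about how a transaction makes its way into a block's $cmnds$ field.

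What you do differently is that you actually trace the transaction through both layers: first into a completed Mandator-batch via Availability, then into every correct replica's lastCompletedRounds via the chaining mechanism, and only then into some committed Sporades block via Theorem~\ref{th:11}. You also correctly isolate the one genuinely non-trivial step---that the stream of committed Sporades blocks must eventually contain one proposed by a correct replica after time $T$---and sketch why leader rotation across views together with Theorems~\ref{th:9} and~\ref{th:10} rules out a schedule that commits only stale snapshots forever. This is a real argument that the paper leaves entirely implicit (arguably swept under the rug by the phrase ``repeatedly proposed''). Your version buys an honest end-to-end liveness proof for the composed system; the paper's version buys brevity by leaning on a strong informal assumption. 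Both reach the same conclusion, but yours is the more careful route.
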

\begin{proof}
    If each replica repeatedly keeps proposing the client transactions until they become committed, then eventually each client transaction gets committed according to theorem~\ref{th:11}.
\end{proof} 

\end{document}